\documentclass[11pt,letterpaper]{article}
\usepackage{latexsym,color,amssymb,amsthm,fullpage,enumitem,
amsmath,amsfonts,bm}

\usepackage{graphicx}
\usepackage{euscript}




\usepackage{graphicx}

\usepackage[left=1in, right=1in, top=1in, bottom=1in]{geometry}

\newcommand{\ignore}[1]{}

\newtheorem{theorem}{Theorem}[section]
\newtheorem{lemma}[theorem]{Lemma}

\newtheorem{proposition}[theorem]{Proposition}

\newtheorem{corollary}[theorem]{Corollary}

\def\eps{{\epsilon}}

\def\conv{{{\mathtt{conv}}}}

\def\1{{(1)}}
\def\2{{(2)}}

\def\deg{{\mathtt{deg}}}

\def\A{{\cal A}}

\def \sign {{\sf sign}}
\def\U{{\cal U}}

\def\F{{\cal F}}

\def\W{{\cal W}}

\def\N{{\cal N}}
\def\M{{\cal M}}
\def\SS{{\mathbb S}}
\def\reals{{\mathbb R}}

\def\P{{\EuScript{P}}}

\def\V{{\cal V}}

\def\G{{\cal G}}

\def\T{{{\EuScript{T}}}}

\begin{document}

\begin{titlepage}

\title{An Efficient Regularity Lemma for Semi-Algebraic Hypergraphs}

\author{
Natan Rubin\thanks{Email: {\tt rubinnat.ac@gmail.com}. Ben Gurion University of the Negev, Beer-Sheba, Israel. Supported
by grant 2891/21 from Israel Science Foundation.}
}

\maketitle

\begin{abstract}
The vast majority of hypergraphs that arise in discrete and computational geometry, describe {\it semi-algebraic relations} between elementary geometric objects.
We use the polynomial method of Guth and Katz to establish stronger and {\it more efficient} regularity and density theorems for such $k$-uniform hypergraphs $H=(P,E)$,
 where $P$ is a finite point set in $\reals^d$, and the edge set $E$ is determined by a semi-algebraic relation of bounded description complexity.

In particular, for any $0<\eps\leq 1$ we show that one can construct in $O\left(n\log 1/\eps\right)$ time, an equitable partition $P=U_1\uplus \ldots\uplus U_K$ into $K=O(1/\eps^{d+1+\delta})$ subsets, for any $0<\delta$, so that all but $\eps$-fraction of the $k$-tuples $U_{i_1},\ldots,U_{i_k}$ are {\it homogeneous}:
 we have that either $U_{i_1}\times\ldots\times U_{i_k}\subseteq E$ or $(U_{i_1}\times\ldots\times U_{i_k})\cap E=\emptyset$. If the points of $P$ can be perturbed in a general position, the bound improves to $O(1/\eps^{d+1})$, and the partition is attained via a {\it single partitioning polynomial} (albeit, at expense of a possible increase in the worst-case running time). 
 
 The best previously known such partition, due to Fox, Pach and Suk requires $\Omega\left(n^{k-1}/\eps^{c}\right)$ time and yields $K=1/\eps^c$ parts (for $0<\eps\leq 1/4$), where $c$ is an enormous constant which is not stated explicitly and depends not only on the dimension $d$ but also on the semi-algebraic description complexity of the hypergraph.

In contrast to the previous such regularity lemmas which were established by Fox, Gromov, Lafforgue, Naor, and Pach and, subsequently, Fox, Pach and Suk, our partition of $P$ does not depend on the edge set $E$, provided its semi-algebraic description complexity does not exceed a certain constant.  

 As a by-product, we show that in any $k$-partite $k$-uniform hypergraph $(P_1\uplus\ldots\uplus P_k,E)$ of bounded semi-algebraic description complexity in $\reals^d$ and with $|E|\geq \eps \prod_{i=1}^k|P_i|$ edges, one can find, in expected time $O\left(\sum_{i=1}^k\left(|P_i|+1/\eps\right)\log (1/\eps)\right)$, subsets $Q_i\subseteq P_i$ of cardinality $|Q_i|\geq |P_i|/\eps^{d+1+\delta}$, so that $Q_1\times\ldots\times Q_k\subseteq E$.






\end{abstract}

\maketitle


\end{titlepage}

\section{Introduction} \label{sec:intro}

\subsection{Semi-algebraic hypergraphs}


 We say that a set $X\subseteq \reals^{d}$ is {\it semi-algebraic} if it is the locus of all the points in $\reals^{d}$ that satisfy a particular Boolean combination of a finite number of polynomial equalities and inequalities in the $d$ coordinates $x_1,\ldots,x_d$ of $\reals^{d}$ \cite{BasuBook}.
 The majority of problems in computational geometry happen to involve families of ``simply shaped" semi-algebraic sets in $\reals^d$ (e.g., lines, hyperplanes, axis-parallel boxes, or discs).
Hence, the basic relations between such semi-algebraic objects are typically captured by hypergraphs of {\it bounded semi-algebraic description complexity}. 

We say that a Boolean function $\psi:\reals^{d\times k}\rightarrow \{0,1\}$ is a {\it $k$-wise semi-algebraic relation in $\reals^{d}$} if it can be described by a finite combination $(f_1,\ldots,f_{s},\phi)$, where $f_1,\ldots,f_{s}\in \reals[x_1,\ldots,x_d]$ are $d$-variate polynomials, and $\phi$ is a Boolean function in $\{0,1\}^{s}\rightarrow \{0,1\}$, so that 
$\psi(y_1,\ldots,y_k)=\phi(f_1(y_1,\ldots,y_k)\leq 0;\ldots;f_{s}(y_1,\ldots,y_{k})\leq 0)$ holds for all the ordered $k$-tuples $(y_1,\ldots,y_k)$ of points $y_i\in \reals^d$.\footnote{In the sequel, such $k$-tuples $(y_1,\ldots,y_k)$ are routinely interpreted as elements of the $(dk)$-dimensional space $\reals^{d\times k}$.}
In what follows, we refer to $(f_1,\ldots,f_{s},\phi)$ as a {\it semi-algebraic description of $\psi$} (which need not be unique). 

We say that such a relation $\psi$ has {\it description complexity at most $(\Delta,s)$} if it admits a description using at most $s$ polynomials $f_i\in \reals[x_1,\ldots,x_d]$ of maximum degree at most $\Delta$.
We then use $\Psi_{d,k,\Delta,s}$ to denote the family of all such $k$-wise semi-algebraic relations $\psi:\reals^{d\times k}\rightarrow \{0,1\}$ whose respective description complexities are bounded by $(\Delta,s)$.

Let $P_1,\ldots,P_k$ be pairwise disjoint finite subsets of $\reals^d$. 
We say that a $k$-partite, $k$-uniform hypergraph $(P_1,\ldots,P_k,E)$ has {\it semi-algebraic description complexity at most $(\Delta,s)$ in $\reals^d$} if its edge set
$E$ is determined, {as a subset of $P_1\times\ldots\times P_k$}, by a $k$-wise relation $\psi\in \Psi_{d,k,\Delta,s}$; namely, a $k$-tuple $(p_1,\ldots,p_k)\in P_1\times\ldots\times P_k$ belongs to $E$ if and only if $\psi(p_1,\ldots,p_k)=1$. The definition naturally extends to more general classes of hypergraphs, which are induced by symmetric relations $\psi\in \Psi_{d,k,\Delta,s}$.

The systematic study of semi-algebraic {\it graphs} in Euclidean spaces was initiated by Alon, Pach, Pinchasi, Radoici\'c, and Sharir \cite{AlonSemi}, who established that any graph $G=(P,E)$ of bounded sem-algebraic description complexity in $\reals^d$ has the so called {\it strong Erd\H{o}s-Hajnal property}: either $G$ or its complement $\overline{G}$ must contain a copy of $K_{t,t}$, for some $t=\Theta(n)$. As a result, $G$ must contain, as an induced subgraph, either a clique or an independent set of size $n^{c}$, where $c$ is a constant the depends on the dimension $d$ and the semi-algebraic description complexity of $G$.

The related {\it Zarankiewicz problem} concerns the maximum number of edges in a $K_{t,t}$-free bi-partite graph (where $t$ is typically a fixed constant). 
If the graph $G=(P_1,P_2,E)$ in question is semi-algebraic of bounded description complexity in $\reals^d$, then a study by Fox, Pach, Sheffer, Suk and Zahl  \cite{Zarankiewicz} yields a generally sharp bound $|E|=O\left(\left(|P_1|\cdot |P_2|\right)^{\frac{d}{d+1}+\delta}+|P_1|+|P_2|\right)$, where $t$ affects the implicit constant of proportionality. This bound was subsequently generalized to $K_{t,t\ldots,t}$-free $k$-partite and $k$-uniform hypergraphs \cite{DoHyper}. Even better upper estimates are known for particular classes of semi-algebraic and other geometrically-defined graphs \cite{SemiLinear,ChaHar,FranklKupavskii,VC,ZaraNets}; also see \cite{Istvan} for related results that are known for so called algebraic hypergraphs over finite fields.

\subsection{Regularity and density theorems for semi-algebraic hypergraphs} 
 A partition of a finite set is called {\it equitable} if the cardinalities of any two parts differ by at most one. According to Szemer\'edi's regularity lemma \cite{Szemeredi}, which underlies modern extremal combinatorics, for every $\eps> 0$ there exists a finite number $K=K(\eps)$ such that every graph admits an equitable vertex partition into at most $K$ parts with the following property: all but at most
an $\eps$-fraction of the pairs of parts determine ``almost-random" bipartite sub-graphs. The original proof yields $K(\eps)$ that is an exponential tower of 2-s of height $\eps^{-O(1)}$, which was demonstrated by Gowers \cite{GowersLower} to be necessary for certain graphs. 
More recently, Szemer\'edi's lemma has been extended to $k$-uniform hypergraphs by Gowers \cite{GowersHyper,GowersLower2} and Nagle {\it et al.} \cite{Nagle}; as a rule, the partition size climbs up in the Ackermann hierarchy as $k$ increases.
Unfortunately, such bounds have detrimental impact on the efficiency of the property testing algorithms, and other applications of the regularity lemma.



Specializing to the hypergraphs $H=(P,E)$ of bounded semi-algebraic description complexity in $\reals^d$, a far stronger regularity statement was obtained in 2010 by Fox, Gromov, Lafforgue, Naor and Pach \cite{Overlap}. 
To this end, we say that a $k$-tuple of pairwise disjoint subsets $Q_1,\ldots,Q_k\subseteq P$ is {\it homogeneous} in $H$ if we have that either $Q_1\times\ldots \times Q_k\subseteq E$ or  $\left(Q_1\times\ldots \times Q_k\right)\cap E=\emptyset$. 

\begin{theorem}[Fox, Gromov, Lafforgue, Naor, and Pach \cite{Overlap}]\label{Theorem:GromovRegularity}
For any $0<\varepsilon\leq 1$, and
any hypergraph $H=(P,E)$ that admits a semi-algebraic description of bounded complexity $(\Delta,s)$ in $\reals^d$, there is a subdivision $P=U_1\uplus \ldots \uplus U_K$ into $K=K(d,k,\Delta,s,\eps)$ parts with the following property: all by an $\eps$-fraction of the sequences $(U_{i_1},\ldots,U_{i_k})$, each comprised of $k$ distinct subsets $U_{i_j}$, are homogeneous with respect to $H$.
\end{theorem}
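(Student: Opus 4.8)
The plan is to prove Theorem~\ref{Theorem:GromovRegularity} by induction on the number of polynomials $s$ in a semi-algebraic description of $\psi$, exploiting the fact that a single polynomial inequality induces a \emph{cell decomposition} of a suitable parametric space that interacts cleanly with products. First I would reduce to the case where the edge relation is a \emph{single} sign condition $f(y_1,\ldots,y_k)\le 0$: indeed, any $\psi\in\Psi_{d,k,\Delta,s}$ is a Boolean combination of the atoms $f_i\le 0$, and once we have a partition $\P_i$ that makes (almost all) $k$-tuples homogeneous for each atom $f_i\le 0$, the common refinement $\P=\bigwedge_i \P_i$ makes (almost all) $k$-tuples homogeneous for \emph{every} Boolean combination of the atoms, at the cost of raising $K$ to its $s$-th power and $\eps$ by a factor of $s$. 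The combinatorial bookkeeping here — estimating how the bad fraction propagates under refinement — is routine.

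The core of the argument is therefore a ``one polynomial'' regularity statement. Fix $f\in\reals[x_1,\ldots,x_{dk}]$ of degree $\Delta$ and think of it as a function on $(\reals^d)^k$. The strategy I would pursue is to build the partition of $P$ one coordinate at a time. For the last slot, observe that for each choice of $(y_1,\ldots,y_{k-1})\in(\reals^d)^{k-1}$ the set $\{y_k : f(y_1,\ldots,y_k)\le 0\}$ is a semi-algebraic subset of $\reals^d$ of bounded complexity; the arrangement of all such sets, as $(y_1,\ldots,y_{k-1})$ ranges over $P^{k-1}$, has a bounded VC-dimension, so by the $\eps$-net / $\eps$-approximation theorem (or, more quantitatively, by a cutting or a partition-polynomial argument) one can split $P$ into a bounded number $K_k=K_k(\eps')$ of parts each of which is ``$\eps'$-homogeneous'' with respect to almost all these sets. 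Iterating this over the remaining $k-1$ coordinates, always on the already-constructed sub-parts, produces an equitable partition of size bounded in terms of $d,k,\Delta,\eps$ with the desired property. The bound on $K$ obtained this way is enormous (a tower whose height grows with $k$, essentially because the $\eps$-net machinery is applied $k$ times in succession), but the theorem as stated only asks for \emph{finiteness} of $K(d,k,\Delta,s,\eps)$, so this is acceptable; the point of the later sections of the paper is precisely to replace this crude argument by the polynomial-method partition and thereby make $K$ polynomial in $1/\eps$.

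The main technical obstacle I anticipate is the slot-by-slot induction itself: after we have fixed the partition into the parts $U_i$, a $k$-tuple $(U_{i_1},\ldots,U_{i_k})$ is homogeneous only if the relevant sign condition is constant on the \emph{entire product}, and the coordinatewise construction controls, at each stage, only ``most'' of the relevant semi-algebraic sets rather than all of them. One must therefore track, across all $k$ levels of recursion, the total fraction of $k$-tuples that ever fall into a ``bad'' configuration at some level, and argue — via a union bound over the $k$ levels together with the equitability of each refinement — that choosing each level's parameter to be $\eps/(ks)$ (say) keeps the global bad fraction below $\eps$. A secondary subtlety is ensuring the final partition can be made \emph{equitable}: the natural arrangement-based partition has parts of wildly different sizes, so one finishes by arbitrarily chopping each part into blocks of the common target size $\lceil n/K\rceil$, which can only help homogeneity (sub-tuples of homogeneous tuples are homogeneous) while at most multiplying the part count by a constant.
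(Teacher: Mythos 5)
Your proposal takes a genuinely different route from the paper. The paper cites Theorem~\ref{Theorem:GromovRegularity} from Fox--Gromov--Lafforgue--Naor--Pach and proves the stronger Theorems~\ref{Theorem:NewRegularity} and~\ref{Theorem:NewRegularitySharp} instead; the mechanism of Section~\ref{Sec:Generic} is to take a \emph{single} Guth--Katz $r$-partitioning polynomial $g$ for $P$ in $\reals^d$, lift it to $\tilde g=\prod_{i=1}^{k}g(x_{\cdot,i})$ on $\reals^{d\times k}$, and observe that a $k$-tuple of cells $(\omega_{i_1},\ldots,\omega_{i_k})$ can fail to be $\psi$-homogeneous only if the product cell $\omega_{i_1}\times\cdots\times\omega_{i_k}\subset\reals^{d\times k}\setminus Z(\tilde g)$ is crossed by some $Z(f_h)$. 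Barone--Basu (Theorem~\ref{Theorem:ZonePolynomial}, via Corollary~\ref{Corol:CrossFewCellsPolynomial}) then bounds the number of such crossed product cells by $O(r^{k-1/d})$, an $O(r^{-1/d})$-fraction of all $\Theta(r^k)$ product cells, with no coordinate-by-coordinate construction and no iterative refinement.

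The load-bearing step in your plan is left unstated, and it matters. You need an interpolation lemma connecting failure of homogeneity of a product $U_{i_1}\times\cdots\times U_{i_k}$ to a ``bad (part, parametrized range)'' event at a single slot: if $\psi(\hat p)\neq\psi(\hat q)$ for $\hat p,\hat q$ in the product, then replacing the coordinates of $\hat p$ by those of $\hat q$ one slot at a time exhibits an index $i$ and a \emph{fixed} $(k-1)$-tuple of points in $\prod_{j\neq i}U_{i_j}$ whose associated range in $\reals^d$ crosses $U_{i_i}$. Without this lemma, the ``union bound over the $k$ levels'' is not justified; with it, the $k$-level refinement is unnecessary, since one partition of $P$ that is simultaneously fine for the union over all $k$ slots of the parametrized range families already makes every slot good. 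Two smaller issues: (i) $\eps$-nets and $\eps$-approximations control range membership, not crossing, so the tool you actually need is a crossing-number bound (Matou\v{s}ek-type simplicial partition or the Guth--Katz partition --- the second option in your parenthetical), which guarantees each bounded-complexity range crosses only $O(K^{1-1/d})$ parts; and (ii) with that tool plus the interpolation lemma, the single-partition argument gives $K=O((ks/\eps)^d)$ after the Boolean-combination reduction --- no tower-type degradation occurs, which is precisely what the paper's lift to $\reals^{d\times k}$ packages into one stroke.
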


At the heart of the original proof of Theorem \ref{Theorem:GromovRegularity} by Fox {\it et al.} lies a ``density theorem" of independent interest. 
To this end, we say that a $k$-uniform, $k$-partite hypergraph $(P_1,\ldots,P_k,E)$ is {\it $\eps$-dense} if we have that $|E|\geq \eps |P_1|\cdot\ldots\cdot |P_k|$. 

\begin{theorem}[\cite{Overlap}]\label{Theorem:GromovTuran}
For any fixed integers $d\geq 0$, $k\geq 0$, $\Delta\geq 0$, and $s>0$, there is a constant $c=c(d,k,\Delta,s)$ with the following property: For any $0<\eps\leq 1$, and
any $k$-uniform, $k$-partitite and $\eps$-dense hypergraph $H=(P_1,\ldots,P_k,E)$, that admits a semi-algebraic description of bounded complexity $(\Delta,s)$ in $\reals^d$, there exist subsets $Q_1\subseteq P_1,\ldots,Q_k\subseteq P_k$ which meet the following criteria
\begin{enumerate}
	\item  We have that $|Q_i|=\Omega\left(\eps^{c}|P_i|\right)$ for all $1\leq i\leq k$, where the implicit constant of proportionality depends on $k$, $d$, $\Delta$, and $s$.
	\item  $Q_1\times\ldots\times Q_k\subseteq E$.
\end{enumerate}
\end{theorem}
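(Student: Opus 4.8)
\noindent
I would derive Theorem~\ref{Theorem:GromovTuran} from a ``same-type'' statement: for every $\psi\in\Psi_{d,k,\Delta,s}$ there is a constant $\alpha=\alpha(d,k,\Delta,s)>0$ so that any finite $P_1,\ldots,P_k\subseteq\reals^d$ admit subsets $Q_i\subseteq P_i$ with $|Q_i|\ge\alpha|P_i|$ for which $Q_1\times\cdots\times Q_k$ is \emph{homogeneous} with respect to $\psi$. This follows by the standard linearization trick: lifting each point by the degree-$\Delta$ Veronese map $v:\reals^d\to\reals^{\nu}$, $\nu=\binom{d+\Delta}{d}$, turns each $f_j$ into a multilinear form in $\big(v(p_1),\ldots,v(p_k)\big)$, so that $\psi$ becomes governed by the ``combinatorial type'' of the lifted transversal, of which there are only $O_{d,k,\Delta,s}(1)$ by the Milnor--Thom bound; the same-type lemma of B\'ar\'any and Valtr (applied in $\reals^{\nu}$) then produces $Q_i\subseteq P_i$ of linear size with all transversals of the same type, hence $\psi$ constant on $Q_1\times\cdots\times Q_k$. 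An equivalent route is induction on $k$: discard from $P_1$ the points of link-degree $<(\eps/2)\prod_{i\ge 2}|P_i|$, apply the inductive hypothesis to each of the remaining links --- which is $(\eps/2)$-dense and again lies in $\Psi_{d,k-1,\Delta,s}$ --- and then invoke the same-type statement to make the resulting boxes uniform over a positive fraction of the surviving vertices; the base case $k=1$ is $Q_1:=E$.

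The principal difficulty is converting ``homogeneous'' into ``$\subseteq E$'', which is where $\eps$-density must be spent. If the hypergraph restricted to the chosen sub-grid were $(1-\eta)$-dense for a constant $\eta<\alpha^{k}$, then a homogeneous box there occupies a $\ge\alpha^{k}$-fraction of it and therefore cannot be edge-free, so it lies inside $E$. Hence I would first establish a \emph{density-amplification} lemma: every $\eps$-dense bounded-complexity semi-algebraic $k$-partite hypergraph contains a sub-grid $P_1^{\circ}\times\cdots\times P_k^{\circ}$ with $|P_i^{\circ}|\ge\eps^{\Theta(1)}|P_i|$ on which it is $(1-\eta)$-dense. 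Applying the same-type statement inside this sub-grid then yields $Q_i\subseteq P_i^{\circ}$ with $|Q_i|\ge\alpha|P_i^{\circ}|\ge\eps^{c}|P_i|$ and $Q_1\times\cdots\times Q_k\subseteq E$, with $c=c(d,k,\Delta,s)$, which is the assertion of the theorem.

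The obstacle is exactly this density-amplification lemma. The naive attempt --- iterate the same-type lemma and peel off the large homogeneous \emph{empty} boxes it returns, using the disjoint decomposition $\prod_i A_i\setminus\prod_i B_i=\biguplus_{j}\,B_1\times\cdots\times B_{j-1}\times(A_j\setminus B_j)\times A_{j+1}\times\cdots\times A_k$ and recursing into the piece carrying a $1/k$-fraction of the edges --- breaks down, because once the density has dropped below $1-\alpha^{k}$ one can no longer guarantee that the next homogeneous box is edge-rich, and the density only decreases further. A correct argument must maintain, at every stage, an active sub-grid that still carries a $1/\mathrm{poly}(\eps)$-fraction of the edges while shrinking each side by only a bounded factor per step (collapsing a side to a point --- and thereby lowering $k$ --- whenever it would otherwise become thin), and terminate after $O_{d,k,\Delta,s}(\log(1/\eps))$ steps; carefully balancing this size-versus-density trade-off is what yields the exponent $c(d,k,\Delta,s)$, and it is also the step that renders the gluing in the inductive route delicate.
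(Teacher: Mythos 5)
This theorem is quoted from \cite{Overlap}; the paper does not prove it, and instead proves a sharper quantitative version (Theorem~\ref{Theorem:NewDensity}) by an entirely different method, so there is no internal proof to compare against.

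Your argument has the gap you yourself flagged: the density-amplification lemma is never established, and it is essentially equivalent to the theorem you are trying to prove (an edge-full $\mathrm{poly}(\eps)$-size box is itself a fully dense sub-grid, so density amplification and the density theorem are two faces of the same coin). The sketch of a fix --- maintain an edge-rich active sub-grid, shrink sides by bounded factors, collapse a side to a single point when it thins, terminate in $O(\log(1/\eps))$ rounds --- is too underdeveloped to audit, and, as you note, the obvious peeling step need not increase the density. The proofs in \cite{Overlap} and \cite{Regularity} do not route through any density boost at all: roughly, one fixes a tuple $(p_2,\ldots,p_k)$ with a dense link, lifts the remaining coordinate via a Veronese map, builds a cutting (Chazelle's cutting in \cite{Regularity}) of the lifted parameter space so that within a cell every $p_1$ induces the same sign pattern on the remaining coordinates, restricts to the densest cell, and recurses one coordinate at a time; the exponent $c(d,k,\Delta,s)$ is the accumulated loss of this recursion, not the output of a separate density-amplification subroutine. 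A secondary issue is the same-type step: the B\'ar\'any--Valtr lemma controls \emph{orientations} of $(\nu+1)$-tuples, and having all lifted transversals of the same order type does not by itself fix the sign of an arbitrary multilinear form $\tilde f_j$, since a fixed surface $Z(\tilde f_j)$ can still cross a product of convex hulls that are in general position in the B\'ar\'any--Valtr sense. Upgrading to a semi-algebraic same-type lemma (Theorem~\ref{Theorem:SemiSameType} here, or its precursors in \cite{Overlap}) requires more than a black-box invocation of the orientation version, e.g.\ encoding each $Z(\tilde f_j)$ via auxiliary anchor points so that its sign becomes an orientation constraint.
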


The subsequent {\it polynomial regularity lemma} of Fox, Pach and Suk \cite{Regularity} improves the maximum number of parts in Theorem \ref{Theorem:GromovRegularity} to $O\left(1/\eps^{c'}\right)$. To this end, they use Chazelle's Cutting Lemma \cite{Cuttings} in dimension ${\Delta+d\choose d}-1$ to establish a stronger bound for Theorem \ref{Theorem:GromovTuran}, with exponent $c={\Delta+d\choose d}$. Though the exponent $c'=c'(k,d,\Delta,s)$ is nowhere specified, it is clearly enormous and depends the description complexity of $H$; furthermore, the overall construction runs in time that is at least proportional to $|P|^{k-1}/\eps^{c'}$.

Proving sharp asymptotic estimates for Theorems \ref{Theorem:GromovRegularity} and \ref{Theorem:GromovTuran} would be instrumental for tackling a plethora of fundamental problems in computational and combinatorial geometry, including the $k$-set problem \cite{Rubin}, Ramsey and coloring numbers of semi-algebraic graphs and hypergraph \cite{Conlon,SchurErdos}, higher dimensional variants of Erd\H{o}s Happy Ending Problem \cite{HappyEndHigher}, and efficient testing of hereditary properties in geometrically defined hypergraphs \cite{Regularity}.

For example, in SODA 2024 the author \cite{Rubin} obtained
a sharper variant of Theorem \ref{Theorem:GromovTuran} in which the cardinality of each
subset $Q_i\subseteq P_i$ is roughly\footnote{As a matter of fact, the author's density theorem yields a slightly more favourable ``aggregate" estimate $|Q_1|\cdot\ldots\cdot |Q_k|=\Omega\left(\eps^{(k-1)d+1}|V_1|\cdot\ldots\cdot |V_k|\right)$.}
 $\eps^{d}|P_i|$. He then used the new bound to dramatically improve the lower bound for the point selection problem \cite{AlonSelections} together with the related upper bound for halving hyperplanes in dimension $d\geq 5$ \cite{BFL}. 

It is natural to ask if there also exists a more efficient variant of Theorem \ref{Theorem:GromovRegularity}, with $O\left((1/\eps)^{c_d}\right)$ parts, where $c_d$ is a (hopefully small) constant that {\it depends only on the dimension $d$}?
Unfortunately, plugging author's density theorem into the overall analysis of Fox {\it et al.} \cite{Regularity} does {\it not} yield a comparably efficient polynomial regularity lemma. Furthermore, the implied algorithm for {\it finding} the subsets $Q_i\subseteq P_i$, so that $Q_1\times\ldots\times Q_k\subseteq E$, still runs in time $n^{k-1}/\eps^{O(1)}$.


\subsection{Main results}

\noindent{\bf Efficient regularity for semi-algebraic hypergraphs.} In this paper we establish the first {\it explicit} polynomial bound for Theorem \ref{Theorem:GromovRegularity}. To this end, we use the polynomial partition of Guth and Katz \cite{GuthKatz} to construct a partition with $K=O\left(1/\eps^{d+1+\delta}\right)$ parts, for any $\delta>0$.\footnote{As a rule, bounds of this form involve implicit multiplicative factors that may be exponential in $1/\delta$.} Unlike all the previous semi-algebraic regularity lemmas of Fox {\it et al} \cite{Overlap,Regularity}, our regular partition of the set $P$ is {\it oblivious to the edge set $E$} as long as the latter is determined by a $k$-wise relation $\psi\in \Psi_{d,k,\Delta,s}$. 


For the sake of brevity, an ordered sequence $(P_1,\ldots,P_k)$ of $k$ pairwise disjoint subsets $P_i\subseteq \reals^d$ is called a {\it $k$-family} in $\reals^d$. 
Let $\psi:\reals^{d\times k}\rightarrow \{0,1\}$ be a $k$-wise relation over $\reals^d$.
A $k$-family $(U_1,\ldots,U_k)$ is {\it $\psi$-homogeneous} if the value $\psi\left(p_1,\ldots,p_k\right)$ is invariant in the choice of vertices $p_{1}\in U_{1},\ldots,p_k\in U_k$.
 
For any $k$-wise relation $\psi:\reals^{d\times k}\rightarrow\{0,1\}$, $\eps>0$, and any finite point set $P\subset \reals^d$,
we say that an equitable partition $P=U_1\uplus\ldots\uplus U_K$ is {\it $(\eps,\psi)$-regular} if all but $\eps$-fraction of the $k$-families $(U_{i_1},\ldots,U_{i_k})$ are $\psi$-homogeneous.

For any class $\Psi\subseteq \reals^{d\times k}\rightarrow \{0,1\}$ of $k$-wise relations over $\reals^d$, $\eps>0$, and any finite point set $P\subset \reals^d$, we say that an equitable partition $P=U_1\uplus\ldots\uplus U_K$ of $P$ is {\it $(\eps,\Psi)$-regular} if it is $(\eps,\psi)$-regular for {\it any} $k$-wise relation $\psi\in \Psi$. 

The main result of this paper is the existence of a small-size and {\it easily constructible} $(\eps,\Psi_{d,k,\Delta,s})$-regular partition for any point set $P$ in $\reals^d$.\footnote{Notice that if the equitable partition $P=U_1\uplus\ldots\uplus U_K$ is $(\eps,\Psi_{d,k,\Delta,s})$-regular then, {regardless of the implicit ordering of the vertices of $P$}, it is $(\eps\cdot k!)$-regular with respect to any $k$-uniform hypergraph $(P,E)$ whose semi-algebraic description complexity is bounded by $(\Delta,s)$.}

\begin{theorem}\label{Theorem:NewRegularity} 
For any fixed integers $d>0$, $k>1$, $\Delta\geq 0$, $s\geq 1$, and any $\delta>0$, there is a constant $c=c(d,k,\Delta,s,\delta)>0$ with the following property.

	Let $\eps>0$, and $P\subset \reals^d$ an $n$-point set. Then there exists an $\left(\eps,\Psi_{d,k,\Delta,s}\right)$-regular partition $P=U_1\uplus U_2\uplus\ldots\uplus U_K$ whose cardinality satisfies $K\leq c/\eps^{d+1+\delta}$. 
	\end{theorem}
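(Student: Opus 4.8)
\emph{Proof idea.} The plan is to build the partition \emph{without ever looking at the edge relation}: we take a polynomial partition of $\reals^d$ adapted to $P$ alone and show, via a crossing estimate, that it is simultaneously $(\eps,\psi)$-regular for \emph{every} $\psi\in\Psi_{d,k,\Delta,s}$. \textbf{Step 1 (an edge-oblivious cell decomposition).} Apply the Guth--Katz polynomial partitioning theorem with a degree parameter $D=\Theta(1/\eps)$, the constant depending on $d,k,\Delta,s$: there is a nonzero $g\in\reals[x_1,\dots,x_d]$ of degree $O(D)$ whose zero set $Z(g)$ splits $\reals^d$ into $M=O(D^d)$ cells $\sigma_1,\dots,\sigma_M$, each containing at most $n/D^d$ points of $P$. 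The points of $P\cap Z(g)$ are treated recursively inside the lower-dimensional variety $Z(g)$ (using polynomial partitioning on varieties); if $P$ can be perturbed into general position one instead pushes $Z(g)$ off $P$ and works with a single partitioning polynomial, which is the route to the clean $O(1/\eps^{\,d+1})$ bound mentioned in the abstract. In either case the decomposition depends only on $P$, never on the hypergraph, so obliviousness is automatic.

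\textbf{Step 2 (homogeneity $\Leftrightarrow$ non-crossing, and the counting).} Fix any $\psi\in\Psi_{d,k,\Delta,s}$ with description $(f_1,\dots,f_{s},\phi)$, $\deg f_\ell\le\Delta$, the $f_\ell$ being polynomials on $\reals^{dk}$. A $k$-tuple of cells $(\sigma_{i_1},\dots,\sigma_{i_k})$ can fail to be $\psi$-homogeneous only if some $f_\ell$ changes sign on the connected product $\sigma_{i_1}\times\dots\times\sigma_{i_k}\subseteq\reals^{dk}$, and that forces $Z(f_\ell)$ to meet this product. But $\sigma_{i_1}\times\dots\times\sigma_{i_k}$ ranges exactly over the cells of the \emph{product partition} of $\reals^{dk}$, which is cut out by $\prod_{a=1}^k g(y_a)$ — of degree $O(kD)$ and with $M^k=\Theta(D^{dk})$ cells. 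By the standard bound for the number of cells of a polynomial partition met by a bounded-degree variety, applied to the hypersurface $Z(f_\ell)$ of dimension $dk-1$, at most $O_{d,k,\Delta}(D^{dk-1})$ product cells are met; summing over $\ell\le s$, at most an $O_{d,k,\Delta,s}(1/D)\le\eps$ fraction of all $k$-tuples of cells are non-homogeneous. Crucially this bound is \emph{uniform in $\psi$}, so a single decomposition is $(\eps,\Psi_{d,k,\Delta,s})$-regular at the level of cells.

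\textbf{Step 3 (equitization).} Order the cells and, inside each $\sigma_i$ with $n_i\le n/D^d$ points, carve out $\lfloor n_i/(n/K)\rfloor$ parts of size $\lceil n/K\rceil$, where $K:=\Theta(D^d/\eps)=\Theta(1/\eps^{\,d+1})$; the at most $M$ residual blocks (together at most $M\cdot n/K$ points) are swept into $O(M)$ extra ``junk'' parts, and the outcome is adjusted to be equitable in the usual way. (If $n\le 1/\eps^{\,d+1}$ the partition into singletons already works, trivially.) Parts contained in one cell inherit homogeneity, so a bad $k$-tuple of parts either meets a junk part — an $O(M/K)=O(\eps)$ fraction — or sits inside a bad $k$-tuple of cells; since each cell carries only $O(K/D^d)=O(1/\eps)$ parts, a bad cell-tuple spawns at most $O(1/\eps^{\,k})$ part-tuples, contributing an $O(D^{dk-1}\eps^{-k})/\Theta(D^{dk}\eps^{-k})=O(1/D)=O(\eps)$ fraction. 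Rescaling $\eps$ by a constant gives an $(\eps,\Psi_{d,k,\Delta,s})$-regular equitable partition with $K=O(1/\eps^{\,d+1})$ in the general-position case.

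\textbf{Main obstacle.} Two points are delicate. First, the crossing estimate in $\reals^{dk}$ must be sharp in the \emph{exponent} of $D$ (power $dk-1$, not $dk$) and must hold although the polynomials $f_\ell$ may lie in arbitrary position relative to the oblivious partitioning polynomial $g$ — which, not seeing $\psi$, cannot be chosen generically against it; one also has to check that the cross-terms between cells of different recursion levels (and mixed-dimensional product cells, some factors being cells of $\reals^d$ and some cells of $Z(g)$) are covered by the same count, which is the most technical part of the write-up. Second, for a general point set the recursion on $P\cap Z(g)$ — and, for a near-linear-time construction, its replacement by $\Theta(\log(1/\eps))$ iterations of a \emph{constant}-degree partition — must be balanced, and exactly this balancing (in the spirit of the proof of the semi-algebraic Zarankiewicz bound) forces the small slack $\delta$ in the exponent, with a constant $c(d,k,\Delta,s,\delta)$ that may grow like $\exp(1/\delta)$; this is what turns the clean $O(1/\eps^{\,d+1})$ into the stated $O(1/\eps^{\,d+1+\delta})$.
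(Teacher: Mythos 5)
Your Steps 1--3 faithfully reproduce the paper's argument for the \emph{sharp/generic} case (Theorem \ref{Theorem:NewRegularitySharp}, Section \ref{Sec:Generic}): a single degree-$\Theta(1/\eps)$ partitioning polynomial $g$ for $P$, the product polynomial $\tilde g = \prod_{a=1}^k g(y_a)$ on $\reals^{dk}$, the Barone--Basu crossing count (Corollary \ref{Corol:CrossFewCellsPolynomial}) giving only $O(s\Delta D^{dk-1})$ crossed product cells out of $\Theta(D^{dk})$, and the red/blue equitization of Appendix \ref{App:AlmostRegular}. That part is correct and is essentially the paper's proof of the clean $O(1/\eps^{d+1})$ bound.

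However, this only covers the case where $P$ can be perturbed off $Z(g)$. For the full Theorem \ref{Theorem:NewRegularity}, the issue you relegate to the ``main obstacle'' paragraph --- what to do with $P\cap Z(g)$ when the points cannot be moved --- is not a technical detail to be cleaned up later but is the bulk of the paper (Section \ref{Sec:Main}). Your plan to ``recurse inside $Z(g)$ using polynomial partitioning on varieties'' is precisely what the paper does \emph{not} do, and it is not clear how to make it work at the stated exponent: once a product $k$-tuple has some factors that are cells of $\reals^d\setminus Z(g)$ and others that are cells of a partitioned piece of $Z(g)$, the clean Cartesian-product structure in $\reals^{dk}$ that justified the $O(D^{dk-1})$-versus-$D^{dk}$ ratio is gone; the lower-dimensional pieces have their own description complexity, which grows with the recursion depth if you partition a degree-$\Theta(1/\eps)$ variety directly, and the crossing count then degrades. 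The paper sidesteps this entirely: it replaces the single high-degree polynomial by $O(\log(1/\eps))$ rounds of a bounded-depth tree $\T_t(U)$ built from \emph{constant}-degree partitions; it assigns each part an ``intrinsic dimension'' $l(U)$, a semi-algebraic carrier $S(U)$ of constant description complexity, and a projection $\pi_S:\reals^d\to\reals^{l}$ injective on $S(U)$; it decomposes the zero sets into $\theta$-monotone patches (Lemma \ref{Lemma:Patches}) so that such a projection exists; and it proves the crossing bound by a downward induction on the smallest intrinsic dimension present in a $k$-family (Lemma \ref{Lemma:KeyLemma}), using quantifier elimination (Theorem \ref{Theorem:Elimination}) to push the problem into the flattened product space $\reals^{l\times q}$. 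The $\delta$ slack then comes out of the amortized weight recursion of Corollary \ref{Corol:Key}, trading the per-round contraction $t^{-1/d+\eta/10}$ against the tree's branching $t^{1+\eta}$, not from any straightforward balancing of a direct variety recursion. So your outline is sound for Theorem \ref{Theorem:NewRegularitySharp}, but there is a genuine gap between ``recurse on $Z(g)$'' and the actual proof of Theorem \ref{Theorem:NewRegularity}; filling it would require roughly the machinery of Section \ref{Sec:Main}.
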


A major obstacle to routine use of the partition theorem of Guth and Katz (see Theorem \ref{Thm:PolynomialPartition} in Section \ref{Subsec:Polynomial}) is that too many points of $P$ may fall in the zero set $Z(f)$ of the partitioning polynomial and, thereby, remained un-assigned to any cell of $\reals^{d}\setminus Z(f)$.
A veritable way to reduce the number of such points is to apply a random and independent {\it $\vartheta$-perturbation} to each point $p\in P$, with $\vartheta>0$, so that each coordinate $x_i(p)$ is independently ``shifted" by a random value $h_i(p)\in [-\vartheta,\vartheta]$.
	As a result, the proof of Theorem \ref{Theorem:NewRegularity} can be dramatically simplified, and its bound marginally improved, over a broad sub-class of ``generic" hypergraphs whose vertices can be perturbed in a general position \cite{GeneralPosition}.

To this end, we say that a $k$-wise relation $\psi:\reals^{d\times k}\rightarrow \{0,1\}$ is {\it sharp} over a point set $P$ if for any sequence $(p_1,\ldots,p_k)$ of $k$ distinct vertices $p_i\in P$, the value of $\psi(p_1,\ldots,p_k)$ remains invariant under any $\vartheta$-perturbation of $P$, for any sufficiently small $\vartheta>0$.

\begin{theorem}\label{Theorem:NewRegularitySharp} 
For any fixed integers $d>0$, $k>1$, $\Delta\geq 0$, $s\geq 1$, there is a constant $c=c(d,k,\Delta,s)>0$ with the following property.

	Let $\eps>0$, and $P\subset \reals^d$ be a finite point set. Then $P$ admits a partition $P=U_1\uplus U_2\uplus\ldots\uplus U_K$ of cardinality $K\leq c/\eps^{d+1}$ that is $(\eps,\psi)$-regular for any $k$-wise relation $\psi\in \Psi_{d,k,\Delta,s}$ that is sharp over $P$.
	\end{theorem}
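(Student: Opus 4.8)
The plan is to exploit the sharpness hypothesis to replace $P$ by a point set in general position, to read the partition off a single partitioning polynomial of Guth and Katz, and to bound the number of non-homogeneous $k$-families by a crossing estimate for low-degree hypersurfaces. The first step is the reduction to general position. Every $k$-family $(U_{i_1},\dots,U_{i_k})$ drawn from a partition consists of $k$-tuples of \emph{pairwise distinct} points; since $\psi$ is sharp over $P$, the value of $\psi$ on such a tuple — and hence the $\psi$-homogeneity of the family — is unaffected by any sufficiently small perturbation of $P$. I would therefore fix once and for all a perturbation $P'$ of $P$ that is generic with respect to all algebraic conditions of degree bounded in terms of $d,k,\Delta,s$; a single such $P'$ serves every $\psi\in\Psi_{d,k,\Delta,s}$ that is sharp over $P$, because only the degrees $\Delta,s$ — not the coefficients of the defining polynomials — will enter the estimates below. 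From now on I work with $P'$, of size $n$, noting that an equitable partition of the ground set is $(\eps,\psi)$-regular for $P$ exactly when it is for $P'$.

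Next, choose $D=D(\eps)$ with $D^{d}=\Theta(1/\eps^{d+1})$ and apply the Guth--Katz polynomial partitioning theorem to $P'$: this produces a nonzero $g\in\reals[x_1,\dots,x_d]$ of degree $O(D)$ whose zero set $Z(g)$ cuts $\reals^{d}$ into $m=O(D^{d})=O(1/\eps^{d+1})$ open connected cells, each meeting $P'$ in at most $n/D^{d}$ points. The points of $P'$ lying on $Z(g)$ are pushed, using sharpness once more, by an infinitesimal perturbation into adjacent cells, so that every point is assigned to a cell and no $\psi$-value on a distinct tuple changes. Finally I would list the points cell by cell — along a traversal of the cell-adjacency graph — and cut them into $K=\Theta(D^{d})$ consecutive blocks $U_1,\dots,U_K$ of equal size; since each cell meets at most $n/D^{d}=O(n/K)$ points, every block meets only $O(1)$ cells, so the partition is equitable with $K=O(1/\eps^{d+1})$ parts.

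Now fix $\psi\in\Psi_{d,k,\Delta,s}$ with description $(f_1,\dots,f_s,\phi)$ and regard each $f_j$ as a polynomial of degree $\le\Delta$ on $\reals^{d\times k}\cong\reals^{dk}$. Consider the product decomposition of $\reals^{dk}$ into cells $C_1\times\dots\times C_k$, each $C_\ell$ ranging over the $m$ cells of $\reals^{d}\setminus Z(g)$; it is the decomposition induced by $G(y_1,\dots,y_k)=\prod_{\ell=1}^{k}g(y_\ell)$, of degree $O(kD)$, with $\Theta(D^{dk})$ connected cells. A short argument shows that a $k$-family $(U_{i_1},\dots,U_{i_k})$ can fail to be $\psi$-homogeneous only if, for some $j$, the hypersurface $Z(f_j)$ meets one of the $O_k(1)$ product cells associated with the $O(1)$ cells touched by the blocks $U_{i_\ell}$ (or, symmetrically, a cell-boundary separating two of them). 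It thus suffices to bound the number of product cells, and of such boundary faces, met by a fixed hypersurface of degree $\le\Delta$; by a Milnor--Thom/B\'ezout-type estimate this is $O(\Delta)$ times a $1/D$ fraction of all $\Theta(D^{dk})$ product cells, \emph{provided} the partitioning polynomial $g$ has been chosen ``non-degenerate'' in the relevant sense. Summing over $j\le s$ and over the $O_k(1)$ cells per family, the fraction of non-$\psi$-homogeneous $k$-families is $O(s\Delta/D)=O(s\Delta\,\eps^{(d+1)/d})$, which is $\le\eps$ once $\eps$ lies below a threshold depending only on $d,k,\Delta,s$; for $\eps$ above the threshold the claim follows from the threshold case, whose partition has $O(1)$ parts. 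As the whole bound depends only on $d,k,\Delta,s$, the \emph{same} partition is $(\eps,\psi)$-regular for every such $\psi$, and Theorem~\ref{Theorem:NewRegularitySharp} follows.

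The hard part is the crossing estimate just invoked. The Guth--Katz theorem delivers the equipartition property but \emph{not} any control on how many of its cells a low-degree hypersurface can meet: its cells may be long thin ``slabs'', all of which even a single hyperplane crosses. Reconciling the equipartition requirement with a usable crossing bound is precisely what drives the degree up to $D=\Theta(\eps^{-(d+1)/d})$ and so pins the number of parts at $\Theta(1/\eps^{d+1})$; I expect the argument to establish that a partitioning polynomial of this degree can additionally be taken generic enough for the crossing bound to hold — possibly only at increased computational cost, which is why no efficient construction is asserted in this regime — while the unassigned points on $Z(g)$ are absorbed into neighbouring cells thanks to sharpness. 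What remains — checking that the equitable refinement and the boundary perturbations create only an $O(\eps)$-fraction of additional non-homogeneous families, and that one generic $P'$ genuinely serves all sharp $\psi$ at once — should be routine.
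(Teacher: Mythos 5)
Your lifting to $\reals^{d\times k}$ and the decision to bound the error by counting product cells of $\reals^{dk}\setminus Z(G)$ crossed by $Z(f_j)$ is indeed the core of the paper's proof of Theorem~\ref{Theorem:WeaklySharp}, from which Theorem~\ref{Theorem:NewRegularitySharp} is then derived. However, the step you single out as ``the hard part'' is in fact routine and requires no ``non-degeneracy'' hypothesis on $g$. The number of cells of $\reals^{d}\setminus Z(g)$ crossed by a degree-$\Delta$ hypersurface, for $\deg g=D$, is $O(\Delta D^{d-1})$ by the Barone--Basu refinement of Milnor--Thom (Theorem~\ref{Theorem:ZonePolynomial}, specialized in Corollary~\ref{Corol:CrossFewCellsPolynomial}); this is a purely degree-based estimate, valid for arbitrary polynomials. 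Your ``long thin slabs'' example is perfectly compatible with it. Moreover, the paper does not even need the total cell count to be $\Theta(D^{dk})$: the error is bounded in absolute terms by $(\text{crossed product cells})\cdot(n/r)^k = O(s\Delta r^{k-1/d})\cdot(n/r)^k = O(n^k/r^{1/d})$, which is $\leq \eps k!{n\choose k}$ once $r=\Theta(1/\eps^d)$, independently of the shape of the cells. So the crossing bound holds unconditionally; there is no obstacle to reconcile, and no extra computational cost in this regime for that reason.

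Where you do diverge from the paper, and where gaps remain, is in achieving equitability. You propose listing the cells along a traversal of the adjacency graph and slicing the concatenated list into $K$ equal blocks, so each block touches $O(1)$ cells and you pick $D^d=\Theta(1/\eps^{d+1})$. Two problems. First, consecutive cells in any tree traversal need not be adjacent (backtracking), so a block can touch cells with no shared boundary and your ``cell-boundary separating two of them'' case is ill-defined. Second, even for a chain of adjacent cells, a $k$-family in which some block straddles a wall of $Z(g)$ can be non-$\psi$-homogeneous without $Z(f_j)$ meeting any of the \emph{open} product cells: one would then have to count lower-dimensional faces of $Z(G)$ crossed by $Z(f_j)$ and sum over all blocks incident to each wall. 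That is doable, but it is heavier than what you sketch, and you do not carry it out. The paper avoids all of this by factoring the argument: Theorem~\ref{Theorem:WeaklySharp} produces an almost-regular partition with $r=\Theta(1/\eps^d)$, whose non-zero parts are single cells of the $r$-partition and whose zero-part $U_0$ is $O(1/\eps^d)$ points; then Lemma~\ref{Lemma:AlmostRegular} mechanically re-cuts into equitable blocks of size $\Theta(\eps n)$, and the pieces that straddle old part boundaries form a separate, trivially bounded error term. This gives $K=O(1/\eps^{d+1})$ cleanly. You should either invoke an almost-regular-to-equitable reduction of that form or genuinely carry out the boundary-face counting; as written, that part of the proposal is incomplete.
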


	\noindent{\bf Algorithmic aspects.} In contrast to the previous proof of the polynomial regularity lemma, due to Fox, Pach and Suk \cite{Regularity}, our analysis yields a near-linear time algorithm for computing $\left(\eps,\Psi_{d,k,\Delta,s}\right)$-regular partitions.

	\begin{theorem}\label{Theorem:Construct}
	The $(\eps,\Psi_{d,k,\Delta,s})$-regular partition of Theorem \ref{Theorem:NewRegularity} can be constructed in time  in expected time $O\left(n\log(1/\eps)\right)$. Furthermore, the partition algorithms yields a data structure that, given a query relation $\psi\in \Psi_{d,k,\Delta,s}$, returns in additional $O\left(1/\eps^{(d+1)k-1+\delta}\right)$ time, a collection of $k$-families $(U_{i_1},\ldots,U_{i_k})$ which encompasses at most an $\eps$-fraction of all the $k$-families, including all such $k$-families that are not $\psi$-homogeneous.\footnote{Every $k$-family $(U_{i_1},\ldots,U_{i_k})$ in the output is represented by the respective sequence $(i_1,\ldots,i_k)$.}
	\end{theorem}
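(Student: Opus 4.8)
The plan is to reuse the hierarchical polynomial partition that already underlies the proof of Theorem~\ref{Theorem:NewRegularity}: the parts $U_1,\ldots,U_K$ are (an equitable relabelling of) the leaves of a tree $\T$ of depth $L=O(\log(1/\eps))$ in which the children of a cell $C$ are the cells of a \emph{constant-degree} polynomial partition of $P\cap C$, with the points of $P\cap C$ that land on the partitioning hypersurface handled by a dimension-descending sub-recursion inside that hypersurface. Each $U_i$ is then contained in a single leaf cell $C_i$, so proving homogeneity of a $k$-family of parts reduces to proving it for the corresponding $k$-tuple of leaf cells. The task therefore splits into: (i) showing that $\T$, together with a bounded-complexity semi-algebraic description of every cell and a point-to-leaf locator, is built in expected time $O(n\log(1/\eps))$ (assuming $K\le n$; otherwise singleton parts trivially suffice); and (ii) showing that, given a query relation $\psi$ with defining polynomials $f_1,\ldots,f_s$, the ``bad'' $k$-families can be read off $\T$ in time proportional to their number.

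For (i): at every node $C$ I would not compute a balanced partitioning polynomial for all $n_C:=|P\cap C|$ interior points directly, but instead draw a random sample $\S_C\subseteq P\cap C$ of \emph{constant} size, compute the Guth--Katz (polynomial ham-sandwich) partition for $\S_C$ in $O(1)$ time, and invoke a standard $\eps$-approximation bound to argue that, with probability at least $\tfrac12$, each resulting sub-cell carries at most $n_C/b$ points of $P$, where $b=O(1)$ is the branching factor; on failure, resample. Routing each of the $n_C$ interior points into its sub-cell, and each on-surface point into the dimension-reduced recursion, costs $O(1)$ per point since every cell is cut out by $O(1)$ polynomials of $O(1)$ degree. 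Hence refining $C$ costs $O(n_C+1)$ in expectation; summing over one level of $\T$ gives $O(n+|\T_\ell|)$, and over all $L=O(\log(1/\eps))$ levels, with $\sum_\ell|\T_\ell|=O(K)$, the total is $O(n\log(1/\eps))$. The queryable data structure produced by the build is simply $\T$ with each cell tagged by its defining polynomials.

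For (ii): a $k$-family $(U_{i_1},\ldots,U_{i_k})$ can fail to be $\psi$-homogeneous only if, for some $j\le s$, the degree-$\le\Delta$ polynomial $f_j$ changes sign over the product cell $C_{i_1}\times\cdots\times C_{i_k}$, which — the product of connected cells being connected — forces the hypersurface $Z(f_j)\subset\reals^{dk}$ to meet it. So for each of the $s=O(1)$ polynomials $f_j$ I would traverse the product tree $\T^{\times k}$ top-down: a node is a $k$-tuple of equal-level cells, and its $\le b^k$ children refine each coordinate; keep a child iff the bounded-complexity semi-algebraic feasibility test ``$Z(f_j)$ meets this product of $O(1)$-degree cells'' succeeds, which runs in $O(1)$ time. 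Every leaf tuple reached is emitted (as its index sequence $(i_1,\ldots,i_k)$); the union over $j$ of the emitted tuples contains all non-$\psi$-homogeneous $k$-families by construction, and — this is precisely the crossing estimate that proves Theorem~\ref{Theorem:NewRegularity} — the number of level-$\ell$ product cells met by a fixed degree-$\Delta$ hypersurface is $O\!\big((b^{\ell})^{k-\gamma}\big)$ for a constant $\gamma=\gamma(d)>0$, which grows geometrically in $\ell$, so the whole traversal costs $O\!\big((b^{L})^{k-\gamma}\big)=O(K^{k-\gamma})$. Since $K\le c/\eps^{d+1+\delta}$ while being large enough that $K^{-\gamma}\le\eps$ (exactly the regime in which Theorem~\ref{Theorem:NewRegularity} is proved), this is $O(\eps K^k)=O\!\big(1/\eps^{(d+1)k-1+\delta}\big)$ after absorbing the constant $k$ into $\delta$, i.e. proportional to the output size and within the asserted bound; it also certifies that the emitted collection covers at most an $\eps$-fraction of all $k$-families.

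The main obstacle I anticipate is step (i): the Guth--Katz partition comes out of the polynomial ham-sandwich theorem, which is not, as stated, an efficient routine, so a single high-degree partition is unaffordable, and the near-linear bound really rests on doing everything through $O(\log(1/\eps))$ rounds of \emph{constant}-degree partitioning computed on constant-size samples, while keeping the on-surface recursion (the source of the extra $\delta$ in Theorem~\ref{Theorem:NewRegularity}) both dimension-descending and $O(1)$ per point and per cell. A secondary delicate point, needed for (ii) to be output-sensitive, is the crossing-number bound for \emph{products} of the hierarchical partitions together with extending the feasibility test to the lower-dimensional cells living inside the iterated zero sets; both should follow from the same polynomial-method estimates that drive Theorem~\ref{Theorem:NewRegularity}, but they must be made algorithmically explicit.
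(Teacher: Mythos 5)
You have correctly identified the paper's high-level strategy for both halves: build the partition by $O(\log(1/\eps))$ rounds of constant-fan-out polynomial partitioning with a dimension-descending sub-recursion for on-surface points (Section~\ref{Subsec:Construct}), and answer a query $\psi$ by a top-down traversal of the product structure that discards $k$-tuples of cells not crossed by the defining polynomials. Two points of divergence are worth flagging, one cosmetic and one substantive.

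The cosmetic one is step (i). You treat polynomial ham-sandwich as an obstacle and propose a random-sampling workaround, but the paper simply invokes the Agarwal--Matou\v{s}ek--Sharir algorithm (cited in Theorem~\ref{Thm:PolynomialPartition}) which computes an $r$-partitioning polynomial in expected $O(nr+r^3)$ time; with $r=r_l$ a constant depending only on $d,k,\Delta,s,\delta$, each $t$-refinement $\Lambda_t(U)$ is built in $O(|U|)$ time outright, and the per-level cost is $O(n)$ with no sampling argument needed. Your trick would also work, but it solves a non-problem and adds a failure/resample loop that has to be justified.

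The substantive one is that your single tree $\T$ of depth $O(\log(1/\eps))$ with uniform branching is not the paper's structure, and the difference is load-bearing. The paper builds a \emph{sequence} of partitions $\Pi^0,\ldots,\Pi^I$, where each step replaces every $U\in\Pi^i$ of positive intrinsic dimension by its $t$-refinement $\Lambda_t(U)$, itself produced by a bounded-depth tree $\T_t(U)$ whose branching parameter $r_l$ \emph{depends on the intrinsic dimension} $l(U)$, with $r_d\lll r_{d-1}\lll\cdots\lll r_1$. This dimension-dependent schedule, together with the weight $\mu(p)=t^{l(U)-d}$ and the amortized irregularity $\rho(\Pi^i,\psi)$, is exactly what makes Corollary~\ref{Corol:Key} and the wrap-up argument close; a uniform-branching single tree cannot reproduce the amortization of the ``special'' (dimension-reduced) parts that you relegate to a black box. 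Correspondingly, the crossing estimate you write as ``$O((b^{\ell})^{k-\gamma})$ product cells crossed'' is precisely Lemma~\ref{Lemma:KeyLemma}, whose proof is the technical core of the paper: it requires the auxiliary semi-algebraic sets $S(U)$ and $A(W)$ with bounded description complexity, the injective projections $\pi_S:\reals^d\to\reals^l$, singly-exponential quantifier elimination (Theorem~\ref{Theorem:Elimination}) to control the complexity of the quantified relations $\psi',\psi''$, and a double induction over $\ell(\U)$. You flag this as the delicate point but do not supply it; without it the claimed $O\bigl(1/\eps^{(d+1)k-1+\delta}\bigr)$ output bound, and hence the output-sensitivity of your traversal in (ii), is not established. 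In short, the proposal has the right shape but leaves the load-bearing lemma and the amortization scheme as unverified assertions.
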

	
	It is an outstanding algorithmic problem to quickly decide whether a $k$-family $(U_1,\ldots,U_k)$ is $\psi$-homogeneous with respect to a given relation $\psi\in \Psi_{d,k,\Delta,s}$.\footnote{The particular 3-partite $1$-dimensional instance, where $\psi(x_1,x_2,x_3)=1$ if and only if $x_3=x_1+x_2$, has become notorious as the 3SUM problem \cite{3SUM}. It is not known whether the task can be performed in time $O((|U_1|+|U_2|+|U_3|)^{2-\gamma})$, for any $\gamma>0$.} Theorem \ref{Theorem:Construct} gets around this obstacle but returning a small-size and more robustly defined {\it superset}, which includes all $\psi$-homogeneous $k$-families.

	\medskip
	Our machinery yields the following constructive variant of Theorem \ref{Theorem:GromovTuran}.
	
	\begin{theorem}\label{Theorem:NewDensity} The following statement holds for any fixed integers $d>0$, $k>1$, $\Delta\geq 0$, $s\geq 1$, and any fixed $\delta>0$.
	
		Let $P_1,\ldots,P_k$ be $n$-point sets in $\reals^d$, and $\psi$ be a $k$-wise relation in $\Psi_{d,k,\Delta,s}$ which is satisfied for at least $\eps |P_1|\cdot\ldots\cdot |P_k|$ among the $k$-tuples $(p_1,\ldots,p_k)\in P_1\times\ldots\times P_k$. Then one can find in expected time $O\left(\sum_{i=1}^k\left(|P_i|+1/\eps\right)\log (1/\eps)\right)$, subsets $Q_i\subseteq P_i$ of cardinality $\Omega\left(\eps^{d+1+\delta}|P_i|\right)$, for $1\leq i\leq k$, so that $\psi(p_1,\ldots,p_k)=1$ holds for all $(p_1,\ldots,p_k)\in Q_1\times\ldots\times Q_k$.
	\end{theorem}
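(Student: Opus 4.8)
The plan is to deduce Theorem~\ref{Theorem:NewDensity} from a $k$-partite version of the regular partition of Theorem~\ref{Theorem:NewRegularity}, and then to extract the product subset by a short random-sampling step. Write $n=|P_i|$ and fix $\eps'=\eps/2^{k+1}$. \emph{Step 1 (a $k$-partite regular partition).} The first goal is to construct, in expected time $O(\sum_i|P_i|\log(1/\eps))$, equitable partitions $P_i=U^{(i)}_1\uplus\cdots\uplus U^{(i)}_K$, $1\le i\le k$, all into the same number $K=O(1/\eps^{d+1+\delta})$ of parts, such that for \emph{every} $\varphi\in\Psi_{d,k,\Delta,s}$ all but an $\eps'$-fraction of the $K^k$ cross-tuples $(U^{(1)}_{a_1},\ldots,U^{(k)}_{a_k})$ are $\varphi$-homogeneous. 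I would obtain this by running the construction underlying Theorems~\ref{Theorem:NewRegularity} and~\ref{Theorem:Construct} once on each $P_i$ separately --- so that, after the same equitization of small and empty cells, $\mathcal{U}^{(i)}$ is an equitable $K$-part partition of $P_i$ --- and then re-running its homogeneity analysis for \emph{products} of cells taken from the $k$ resulting subdivisions. As in the proof of Theorem~\ref{Theorem:NewRegularity}, a cross-box $C_1\times\cdots\times C_k$ (each $C_i$ a connected cell of a bounded-degree polynomial arrangement) can fail to be $\varphi$-homogeneous only if one of the bounded-degree hypersurfaces defined by $\varphi$ meets it, and the crossing/zone estimates available for Guth--Katz subdivisions bound the number of such cross-boxes by an $\eps'$-fraction of all $K^k$. (Alternatively one may drive all $k$ subdivisions by a single polynomial balancing $P_1,\ldots,P_k$ simultaneously, at the cost of a constant-factor larger degree.)

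\emph{Step 2 (extracting $Q_1,\ldots,Q_k$).} By hypothesis at least $\eps n^k$ of the $k$-tuples of $P_1\times\cdots\times P_k$ satisfy $\psi$, and each lies in a unique cross-box. By Step~1 (with $\varphi=\psi$) at most $\eps' K^k$ of the cross-boxes are not $\psi$-homogeneous and each has size at most $(2n/K)^k$, so together they contain at most $2^k\eps'\,n^k=\tfrac{1}{2}\eps n^k$ of the $\psi$-satisfying tuples. Hence some $\psi$-homogeneous cross-box contains a $\psi$-satisfying tuple; being homogeneous, that cross-box lies entirely in $\{\psi=1\}$. Setting $Q_i$ to be the $i$-th factor of this cross-box gives $\psi\equiv1$ on $Q_1\times\cdots\times Q_k$, and equitability gives $|Q_i|\ge\lfloor n/K\rfloor=\Omega(\eps^{d+1+\delta}n)$. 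It is exactly equitability --- a two-sided estimate $|C|=\Theta(n/K)$ for \emph{every} cell --- that produces the exponent $d+1+\delta$; with only an upper bound $|C|\le O(n/K)$ one would have to locate a \emph{heavy} homogeneous cross-box and would lose a factor of $\eps$, ending up with $\eps^{d+2+\delta}$.

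\emph{Step 3 (running time).} Building the $k$ partitions costs $O(\sum_i|P_i|\log(1/\eps))$ expected time, as in Theorem~\ref{Theorem:Construct}, since $k,d,\Delta,s,\delta$ are fixed. One cannot afford to enumerate the $K^k$ cross-boxes; instead, sample $\Theta((1/\eps)\log(1/\eps))$ independent uniform $k$-tuples of $P_1\times\cdots\times P_k$. Because at least half of the $\psi$-satisfying tuples lie in $\psi$-homogeneous cross-boxes (Step~2), with high probability some sampled tuple both satisfies $\psi$ and lies in a $\psi$-homogeneous cross-box; point-locating its $k$ coordinates in the respective subdivisions ($O(\log K)=O(\log(1/\eps))$ each) names that cross-box. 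Confidence amplification and exact verification of the returned cross-box are handled as in the proof of Theorem~\ref{Theorem:Construct}, giving the claimed $O(\sum_i|P_i|\log(1/\eps)+(1/\eps)\log(1/\eps))$ bound.

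The crux, and the only non-routine part, is Step~1: pushing the crossing analysis of Theorem~\ref{Theorem:NewRegularity} through for products of cells coming from several subdivisions, while making every one of the $k$ induced partitions simultaneously equitable. Steps~2 and~3 are, respectively, a pigeonhole count and standard Las Vegas bookkeeping.
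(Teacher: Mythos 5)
Your proposal follows essentially the same route as the paper: construct partitions of the color classes $P_1,\ldots,P_k$ separately, argue that all but a small fraction of the ``colorful'' cross-product $k$-families are $\psi$-homogeneous (the $k$-partite adaptation of Theorem~\ref{Theorem:Weakly}), and find a $\psi$-homogeneous cross-box containing a $\psi$-satisfying tuple by sampling $O\left((1/\eps)\log(1/\eps)\right)$ random $k$-tuples and point-locating/verifying each one in $O(\log(1/\eps))$ time.

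One small inaccuracy is worth flagging: you assert that equitability (a two-sided $|C|=\Theta(n/K)$ bound) is ``exactly'' what yields the exponent $d+1+\delta$, and that a one-sided bound would lose a factor of $\eps$. The paper in fact never equitizes. It works directly with the \emph{almost}-regular partition of Theorem~\ref{Theorem:Weakly}, which has only $O(1/\eps^{d+\delta})$ parts, and declares a $k$-family ``good'' only when every part has $\ge c\eps^{d+1+\delta}|P_j|$ points; the parts below this threshold collectively cover at most $O(c\eps |P_j|)$ points, so discarding them only eats an $O(c\eps)$-fraction of the tuples, and the same $\Omega(\eps^{d+1+\delta}|P_i|)$ bound emerges without equitability. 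Your route through Theorem~\ref{Theorem:NewRegularity} (which has an extra $1/\eps$-factor in $K$ introduced precisely by the equitization of Lemma~\ref{Lemma:AlmostRegular}) reaches the same exponent; the two are equivalent in the end, but the equitability step is optional rather than essential. Otherwise, your acknowledgment that Step~1 (re-deriving the key crossing count for products of cells from $k$ distinct subdivisions) is the non-routine part is accurate, and the paper handles it by running the collections $\F^i_k$ and $\tilde{\G}^i_k$ of Section~\ref{Subsec:Construct} over colorful tuples $\Pi^i(P_1)\times\cdots\times\Pi^i(P_k)$.
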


\subsection{Proof overview and further implications}
\noindent{\bf Almost $(\eps,\Psi)$-regularity.} To establish Theorems \ref{Theorem:NewRegularity} and \ref{Theorem:NewRegularitySharp}, we first obtain a preliminary partition $P=U_1\uplus\cdots\uplus U_K$, which satisfies a slightly weaker form of regularity.

\medskip
\noindent{\bf Definition.} Let $0<\eps\leq 1$, $\psi:\reals^{d\times k}\rightarrow \{0,1\}$ be a $k$-wise relation, and $P$ be a set of $n$ points in $\reals^d$. We say that a partition $P=U_1\uplus U_2\uplus \ldots\uplus U_{K}$ of $P$ is {\it almost $(\eps,\psi)$-regular} if $|U_i|\leq \eps |P|$ holds for all $1\leq i\leq K$ and, in addition, we have that 
\begin{equation}\label{Eq:Almost}
\sum  |U_{i_1}|\cdot\ldots\cdot |U_{i_k}|\leq \eps\cdot k! {n\choose k},
\end{equation}

\noindent where the sum is taken over all the $k$-families $(U_{i_1},\ldots,U_{i_k})$ that are {\it not} $\psi$-homogeneous. 

Let $\Psi$ be a family of $k$-wise relations over $\reals^d$, and $\eps>0$, and let $P$ be an $n$-point set in $\reals^d$.
We say that a partition $P=U_1\uplus\ldots\uplus U_{K}$ of is {\it almost $(\eps,\Psi)$-regular} if it is almost $(\eps,\psi)$-regular for {\it any} $k$-wise relation $\psi\in \Psi$.

\medskip
As was implicitly observed by Fox, Pach and Suk \cite[Section 4]{Regularity}, every almost $(\eps,\psi)$-regular partition can be converted, in additional $O(n)$ time, into an $\left(O(\eps),\psi\right)$-regular one; the full details of this reduction are spelled out in Section \ref{App:AlmostRegular} (Lemma \ref{Lemma:AlmostRegular}).
Thus, Theorems \ref{Theorem:NewRegularity} and \ref{Theorem:NewRegularitySharp} will be follow as immediate corollaries of the following statements. 

\begin{theorem}\label{Theorem:Weakly}
For any fixed integers $d>0$, $k>1$, $\Delta\geq 0$, $s\geq 1$, and any $\delta>0$, there is a constant $c=c(d,k,\Delta,s,\delta)\geq 1$ with the following property.

	Let $\eps>0$, and $P\subseteq \reals^d$ a finite point set so that $|P|\geq 1/\eps$. Then there exists an almost $(\eps,\Psi_{d,k,\Delta,s})$-regular partition $P=U_1\uplus U_2\uplus\ldots\uplus U_K$ of cardinality $K\leq c/\eps^{d+\delta}$.  
\end{theorem}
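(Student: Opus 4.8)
The plan is to build the partition using a single Guth--Katz polynomial partition applied to $P$ in the ambient space $\reals^d$, and then argue that the resulting cells are ``almost homogeneous'' with respect to \emph{every} relation $\psi\in\Psi_{d,k,\Delta,s}$ simultaneously. First I would fix the parameter $r\approx 1/\eps^{d+\delta/2}$ (up to constants depending on $\delta$) and invoke Theorem~\ref{Thm:PolynomialPartition} to obtain a polynomial $f$ of degree $O(r^{1/d})$ whose zero set $Z(f)$ partitions $\reals^d$ into $O(r)$ cells, each containing at most $|P|/r$ points of $P$. The cells $U_1,\dots,U_{K'}$ of $\reals^d\setminus Z(f)$ form the bulk of the partition; the points on $Z(f)$ must be dealt with separately (see below). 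Since $|P|\ge c/\eps^{d+1+\delta}$ and each cell has $\le |P|/r\le \eps|P|$ points, the size constraint $|U_i|\le\eps|P|$ is met, and $K'=O(r)=O(1/\eps^{d+\delta})$, matching the target bound.

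The key combinatorial step is a ``sign-pattern'' / crossing-number argument to bound the non-homogeneous $k$-families. For a fixed $\psi\in\Psi_{d,k,\Delta,s}$ with description $(f_1,\dots,f_s,\phi)$, a $k$-family $(U_{i_1},\dots,U_{i_k})$ fails to be $\psi$-homogeneous only if, for some $j$, the polynomial $f_j(\,\cdot\,)$ restricted to $U_{i_1}\times\cdots\times U_{i_k}$ is not sign-invariant. I would lift each point $p\in P$ to its Veronese image $v_\Delta(p)\in\reals^D$ with $D={\Delta+d\choose d}-1$, so that each $f_j$ becomes (the restriction of) an affine/multilinear form in the lifted coordinates of the $k$ arguments; the cells of $\reals^d\setminus Z(f)$ are, crucially, \emph{refinements} of the sign-pattern cells of any fixed low-degree polynomial only on a small fraction of boundary cells. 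The honest approach is instead to bound, for each $j$, the number of ``bad'' $k$-tuples of cells by the number of cells that the hypersurface $\{f_j=0\}$ (as a variety in $(\reals^d)^k$, of bounded degree) can \emph{cross} after projecting; a standard polynomial-method / cutting-type estimate gives that the number of cells of $\reals^d\setminus Z(f)$ met by any fixed bounded-degree hypersurface in one coordinate slot is $O(r^{1-1/d}\,\mathrm{polylog})$ or, more crudely, $o(r)$, so summing over the $s$ polynomials and the $k$ coordinate slots and multiplying by the per-cell weight $(|P|/r)^k$ yields $\sum|U_{i_1}|\cdots|U_{i_k}|\le\eps\cdot k!\binom{n}{k}$ for the chosen $r$. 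This is where the extra $\eps^{\delta}$ slack is spent: the crossing bound is sub-linear in $r$ but only by a factor $r^{-1/d+o(1)}$, so one takes $r$ a polynomial factor larger than $1/\eps^d$ to absorb it, and the obliviousness to $E$ comes precisely because $f$ depends only on $P$ (and $d$), while the bound on bad cells holds uniformly over all $\psi\in\Psi_{d,k,\Delta,s}$ since $s$, $\Delta$, $k$, $d$ are fixed.

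The second technical ingredient is the treatment of points on $Z(f)$. I would handle these by a bounded recursion: the points $P\cap Z(f)$ lie on a variety of dimension $\le d-1$, so one applies a lower-dimensional polynomial partition (or a Guth--Katz partition on the variety itself, in the spirit of the polynomial-ham-sandwich refinements) recursively $d$ times, each time peeling off the cells and recursing on the new zero set; after $d$ levels the remaining set lies on a $0$-dimensional variety and is placed into singleton-ish parts, contributing negligibly. Each level multiplies the cell count by a constant and the degrees stay $O(r^{1/d})$, so the total number of parts remains $O(1/\eps^{d+\delta})$, and every part still has $\le\eps|P|$ points after a final merging/splitting cleanup. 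Crucially, the bad-$k$-family estimate from the previous paragraph applies verbatim at each level.

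The main obstacle I anticipate is making the crossing/sign-pattern bound tight enough: naively, a bounded-degree hypersurface could in principle meet \emph{all} $\Theta(r)$ cells of a Guth--Katz partition, which would be useless. The resolution must exploit that the partitioning polynomial $f$ has degree $\Theta(r^{1/d})$ \emph{and} that each cell is ``fat'' enough (contains $\approx n/r$ points) that a fixed-degree hypersurface, having bounded degree independent of $r$, can only interact with $o(r)$ of them --- quantitatively, via a B\'ezout-type bound on $\deg(f)\cdot(\text{fixed degree})$ intersection curves, or via the cell-decomposition of arrangements of bounded-degree surfaces. Getting this to come out as $r^{1-1/d+\delta'}$ rather than $r$, uniformly in $\psi\in\Psi_{d,k,\Delta,s}$, is the crux; everything else (equitability cleanup, the $Z(f)$ recursion, summing the weighted counts, converting constants) is routine bookkeeping that I would relegate to the appendix.
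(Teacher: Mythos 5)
Your warm-up (one Guth--Katz partition with $r\approx 1/\eps^{d+\delta/2}$, lift to $\reals^{dk}$, count crossed cells by Corollary~\ref{Corol:CrossFewCellsPolynomial}) is exactly the paper's proof of Theorem~\ref{Theorem:WeaklySharp}, and you correctly identify that the whole difficulty of Theorem~\ref{Theorem:Weakly} lies in the points of $P\cap Z(f)$, which cannot be perturbed away because $\Psi_{d,k,\Delta,s}$ contains non-sharp relations. However, the ``bounded recursion on $Z(f)$'' which you relegate to ``routine bookkeeping'' is where your argument has a genuine gap, and the paper spends essentially the entire technical development (Sections~\ref{Subsec:Refinement}--\ref{Subsec:ProofKey}) on exactly this step.

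The problem is that $Z(f)$ is a $(d-1)$-dimensional variety of degree $D_1=\Theta\left(r^{1/d}\right)=\Theta\left(\eps^{-1-\delta/(2d)}\right)$, which is \emph{not} bounded. Your assertion that ``each level multiplies the cell count by a constant and the degrees stay $O(r^{1/d})$'' is not justified and in fact collides with the crossing bound you rely on. Concretely, at the second level you would need to bound the number of cells of the partition of $Z(f)$ crossed by a product hypersurface coming from $\psi$; the available tool (Theorem~\ref{Theorem:ZonePolynomial}) bounds the number of faces on an $l$-dimensional variety defined by polynomials of degree $\Delta$ by $O\left(\Delta^{\,\mathrm{codim}}\cdot(sD)^l\right)$. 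Here $\Delta$ is the degree of the polynomials defining the ambient variety --- in your recursion this is $D_1\approx r^{1/d}$, not a constant --- so the estimate does not produce the sub-linear crossing number you need, and the factor in front grows with $r$. This is precisely the failure mode the paper is warning about in the discussion preceding Theorem~\ref{Theorem:Weakly}: ``the above partition argument does not directly apply\ldots if many of the points of $P$ lie in the zero surface $Z(g)$.''

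What the paper actually does is fundamentally different from a one-shot partition plus a recursion on a high-degree variety. It builds a hierarchy of $I=O\left(\log(1/\eps)\right)$ refinements, each time applying an $r_l$-partitioning polynomial of \emph{constant} degree $O\left(r_l^{1/l}\right)$ to the $l$-dimensional \emph{projection} $\pi_S(U)\subset\reals^l$ of the set being refined, where $\pi_S$ is an affine projection injective on a semi-algebraic ``patch'' $S=S(U)$ obtained from the monotone-patch decomposition of Lemma~\ref{Lemma:Patches}. This is what replaces your proposed recursion on $Z(f)$: instead of partitioning on a variety of degree $\Theta(r^{1/d})$, the construction always partitions in a Euclidean space of the intrinsic dimension $l(U)$, with a constant partition parameter $r_l$. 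The crucial point you are missing is that these constants must be chosen in a strictly increasing chain $r_d\lll_\eta r_{d-1}\lll_\eta\ldots\lll_\eta r_1$: the description complexity of the auxiliary sets $S(U)$ and $A(U)$ at intrinsic dimension $l$ is bounded by a polynomial in $r_{l+1}$ (Lemma~\ref{Lemma:TreeElementary}), and the crossing bound in the key lemma (Lemma~\ref{Lemma:KeyLemma}) requires the partition parameter $r_l$ at level $l$ to dominate the degree $\Delta^*\approx(\Delta+r_{l+1})^{\kappa^*}$ of the surfaces that might cross the cells. Without this cascade, the crossing estimate does not decouple from $r$ and the whole scheme collapses. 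The amortized weight function $\mu(p)=t^{l(U)-d}$, and the downward induction on intrinsic dimension in the key lemma, are further machinery needed to make the bookkeeping close --- they are not implied by anything in your sketch.

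In short: your generic-case argument is right and matches Theorem~\ref{Theorem:WeaklySharp}, but you have not supplied the central mechanism of Theorem~\ref{Theorem:Weakly}, namely a recursion on the zero set that keeps all the partitioning and auxiliary surfaces at \emph{constant} degree throughout. Your proposal as written would recurse on a surface of degree growing with $1/\eps$ and the crossing bound would fail.
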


\begin{theorem}\label{Theorem:WeaklySharp}
For any fixed integers $d>0$, $k>1$, $\Delta\geq 0$, and $s\geq 1$, there is a constant $c=c(d,k,\Delta,s)>1$ with the following property.

	Let $\eps>0$, and $P\subseteq \reals^d$ a finite point set. Then one can obtain a subset $U_0$ of cardinality $|U_0|\leq c/\eps^d$, and a partition $P\setminus U_0=U_1\uplus U_2\uplus\ldots\uplus U_K$ of cardinality $K\leq c/\eps^{d}$ that is $(\eps,\psi)$-regular with respect to all the sharp $k$-wise relations $\psi\in \Psi_{d,k,\Delta,s}$ over $P$.
	
\end{theorem}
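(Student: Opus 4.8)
The plan is to read the partition off a \emph{single} Guth--Katz polynomial partition for $P$, rather than the multilevel construction behind Theorem~\ref{Theorem:Weakly}; the sharpness hypothesis will enter only to justify replacing $P$ by a point set in general position. Throughout, fix $D:=\lceil C/\eps\rceil$, where $C=C(d,k,\Delta,s)$ is a large constant chosen at the end. The reduction to general position is as follows: replace $P$ by a suitable generic perturbation $P'$ of it, chosen so that the value of $\psi(p_1,\dots,p_k)$ is unchanged for every $\psi\in\Psi_{d,k,\Delta,s}$ sharp over $P$ and every $k$-tuple of distinct $p_i\in P$ (the precise, and somewhat delicate, way of choosing this perturbation is deferred to the last paragraph). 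A generic perturbation of sufficiently small magnitude puts $P'$ in ``$D$-general position'': any $\binom{c_1D+d}{d}$ of its points impose independent conditions on the space of polynomials of degree $\le c_1D$ — where $c_1$ is the degree constant in the Guth--Katz theorem — so no nonzero such polynomial vanishes at more than $\binom{c_1D+d}{d}-1=O(D^d)$ points of $P'$; indeed the complementary event lies in a finite union of proper subvarieties of the space of $\vartheta$-perturbations.

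\smallskip
\noindent\textbf{The construction.} Apply the Guth--Katz polynomial partition theorem (Theorem~\ref{Thm:PolynomialPartition}, \cite{GuthKatz}) to $P'$ with parameter $D$: there is a nonzero $g$ of degree at most $c_1D$ such that $\reals^d\setminus Z(g)$ consists of $m\le c_2D^d$ connected open cells $C_1,\dots,C_m$, each containing at most $|P'|/D^d$ points of $P'$. Put $U_0':=P'\cap Z(g)$; by the $D$-general position of $P'$ we have $|U_0'|\le\binom{c_1D+d}{d}=O(D^d)$. Grouping the points of $P'\setminus U_0'$ by cells gives a partition $P'=U_0'\uplus U_1'\uplus\dots\uplus U_K'$ into $K\le m=O(D^d)$ nonempty parts, with $U_i'=P'\cap C_i$ and each $|U_i'|\le|P'|/D^d\le\eps|P|$. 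Transporting this along the perturbation bijection — each $p\in P$ joins the part of its perturbed image $p'$ — yields $P=U_0\uplus U_1\uplus\dots\uplus U_K$, with $|U_0|=|U_0'|=O(D^d)$ and $U_i\subseteq\{p\in P:p'\in C_i\}$.

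\smallskip
\noindent\textbf{Homogeneity and counting.} Fix any $\psi\in\Psi_{d,k,\Delta,s}$ sharp over $P$, with description $(f_1,\dots,f_s,\phi)$, each $f_j$ of degree at most $\Delta$ on $\reals^{d\times k}$. Consider the product polynomial $H(y_1,\dots,y_k):=\prod_{i=1}^{k}g(y_i)$ on $\reals^{d\times k}$; the cells of $\reals^{d\times k}\setminus Z(H)$ are exactly the products $C_{a_1}\times\dots\times C_{a_k}$, each connected, and there are at most $m^k$ of them. If such a product cell meets none of $Z(f_1),\dots,Z(f_s)$, then each $f_j$ has a nonzero constant sign on it, so $\psi$ — a fixed Boolean combination of the conditions $f_j\le 0$ — is constant on $C_{a_1}\times\dots\times C_{a_k}\supseteq U'_{a_1}\times\dots\times U'_{a_k}$; by sharpness $\psi$ takes that same value on $U_{a_1}\times\dots\times U_{a_k}$ (the parts being disjoint, all these $k$-tuples are tuples of distinct points of $P$), so the $k$-family $(U_{a_1},\dots,U_{a_k})$ is $\psi$-homogeneous. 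Hence a $k$-family can fail $\psi$-homogeneity only when its product cell meets some $Z(f_j)$. Now any cell of $\reals^N\setminus Z(h)$ meeting a hypersurface $Z(f)$ contains a distinct connected component of $Z(f)\setminus Z(h)$, and the number of such components is $O\big((D_h+D_f)^{N-1}\big)$ by the Milnor--Thom bound on the $(N-1)$-dimensional variety $Z(f)$ (here $D_h,D_f$ denote the degrees of $h,f$); applied with $N=dk$, $h=H$ (degree $\le c_1kD$) and $f=f_j$, this bounds the number of product cells met by $Z(f_j)$ by $c_3(c_1kD+\Delta)^{dk-1}=O(D^{dk-1})$, so at most $s\cdot O(D^{dk-1})$ product cells are ``bad''. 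Finally, the hypothesis $|P|\ge 2c/\eps^d$ (with $c$ the constant in the statement) forces $|U_0|\le|P|/2$, so at least $|P|/2$ points lie in cells and $K\ge D^d/2$; thus the number $K(K-1)\cdots(K-k+1)$ of $k$-families of pairwise distinct parts is $\Omega(D^{dk})$, and the fraction of bad $k$-families is $O(s/D)$. Taking $C$ large enough makes $D=\lceil C/\eps\rceil$ large enough for this fraction to be at most $\eps$, while simultaneously $K\le c_2D^d\le c/\eps^d$, $|U_0|\le c/\eps^d$, and $|U_i|\le\eps|P|$. (The same estimate gives the weighted bound $\sum|U_{a_1}|\cdots|U_{a_k}|=O(s/D)\,|P|^k\le\eps\,k!\binom{|P|}{k}$ over bad $k$-families, should the stronger ``almost-regular'' form be wanted.)

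\smallskip
\noindent\textbf{Where the difficulty lies.} The subtle point is that a single partition must serve \emph{all} relations sharp over $P$ at once, whereas the perturbation threshold of a sharp relation is a continuous quantity and the family of relations sharp over $P$ is not closed, so there is no evident uniform lower bound on these thresholds. I would resolve this by perturbing along one fixed generic direction $h$ with magnitude $\tau h$, and making a canonical (say, lexicographically least) choice of a valid $D$-generic Guth--Katz polynomial for $P+\tau h$; the partition of $P$ this induces is then a semi-algebraic function of $\tau$, hence constant on some interval $(0,\tau_0)$ — call it $\mathcal U^*$. For each $\psi$ sharp over $P$, choose $\tau\in(0,\tau_0)$ both $D$-generic and below $\psi$'s threshold; then $\mathcal U^*$ equals the partition built from $P+\tau h$ and is therefore $(\eps,\psi)$-regular by the argument above, and since this holds for every sharp $\psi$, $\mathcal U^*$ is the required partition. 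Turning this chain of semi-algebraic-triviality steps into a rigorous argument is, I expect, where most of the work goes; by contrast the polynomial-partition estimates of the other paragraphs — in particular the crossing bound for $H$, which is a routine consequence of Milnor--Thom valid for any polynomial of that degree rather than only for a Guth--Katz one — are comparatively standard.
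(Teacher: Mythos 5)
Your proof matches the paper's argument for Theorem~\ref{Theorem:WeaklySharp} step for step: a single Guth--Katz $r$-partitioning polynomial applied to a generic perturbation of $P$, lifted to the product polynomial $\prod_i g(y_i)$ on $\reals^{d\times k}$, with the observation that a non-$\psi$-homogeneous $k$-family forces its product cell to be crossed by some $Z(f_j)$, and the crossing count $O(sD^{dk-1})$ weighed against $\Omega(D^{dk})$ total $k$-families. Two small remarks. First, the crossing bound should be credited to the Barone--Basu refinement (Theorem~\ref{Theorem:ZonePolynomial}, invoked in the paper via Corollary~\ref{Corol:CrossFewCellsPolynomial}), not to ``the Milnor--Thom bound on the $(N-1)$-dimensional variety $Z(f)$'': the classical Milnor--Thom/Oleinik--Petrovsky estimate only gives $O(D^N)$ with no dimension reduction, and the refined bound reads $O\!\left(\Delta D_H^{dk-1}\right)$ rather than your $O\!\left((D_H+\Delta)^{dk-1}\right)$ --- the extra factor of $\Delta$ is a harmless constant, but the attribution should be fixed. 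Second, the ``uniform threshold'' subtlety you defer to a semi-algebraic-triviality argument has a much lighter resolution, which is what the paper uses (tersely): since $P$ is finite, the sharp relations in $\Psi_{d,k,\Delta,s}$ induce only finitely many distinct edge sets $\P_\psi\subseteq P^{\times k}$, and $\psi$-homogeneity of a $k$-family depends only on $\P_\psi$. So one may fix a single representative sharp $\psi_E$ for each realizable edge set $E$, choose $\vartheta$ below the finitely many positive thresholds $\vartheta_{\psi_E}$, run the crossing argument for each $\psi_E$, and transfer the conclusion to any sharp $\psi$ with $\P_\psi=E$ because they have exactly the same inhomogeneous $k$-families. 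Your canonical-partition-as-a-function-of-$\tau$ route would also work, but it imports genuine difficulties (well-definedness of a lexicographically least valid partitioning polynomial, semi-algebraicity of the whole construction) that the representative-swap sidesteps entirely.
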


Accordingly, Theorem \ref{Theorem:Construct} will follow from the following constructive variant of Theorem \ref{Theorem:Weakly}.

\begin{theorem}\label{Theorem:WeaklyConstruct}
The partition $P=P_1\uplus P_2\uplus\ldots\uplus P_K$ of Theorem \ref{Theorem:Weakly} can be constructed in expected time $O\left(n\log (1/\eps)\right)$.
Furthermore, the construction algorithm yields a data structure that, given a relation $\psi\in \Psi_{d,k,\Delta,s}$, returns in additional $O\left(1/\eps^{dk-1+\delta}\right)$ time a collection $\G$ of $k$-families which satisfies
	\begin{equation}\label{Eq:Almost}
\sum_{(U_{i_1},\ldots,U_{i_k})\in \G}  |U_{i_1}|\cdot\ldots\cdot |U_{i_k}|\leq \eps\cdot k! {n\choose k},
\end{equation}

\noindent and includes all such $k$-families that are not $\psi$-homogeneous.\footnote{Every $k$-family $(U_{i_1},\ldots,U_{i_k})$ in the output is represented by the respective sequence $(i_1,\ldots,i_k)$.}	

\end{theorem}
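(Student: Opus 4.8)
The plan is to implement the existential construction behind Theorem \ref{Theorem:Weakly} so that every step runs in near-linear time, and to organize the output so that bad $k$-families can be reported quickly for a query relation $\psi$. The backbone is the Guth--Katz polynomial partitioning theorem (Theorem \ref{Thm:PolynomialPartition}): given a degree parameter $D$, one finds in $O(n)$ (randomized) time a polynomial $f$ of degree $O(D)$ whose zero set $Z(f)$ splits $\reals^d\setminus Z(f)$ into $O(D^d)$ open cells, each containing at most $n/D^d$ points of $P$. Choosing $D=\Theta\!\left(1/\eps^{(d+\delta)/d}\right)$ (up to the $\delta$-dependent constants) gives $K=O(1/\eps^{d+\delta})$ cells each of size at most $\eps n$, which is the cell-count/cell-size requirement of the definition of almost-regularity. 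To actually build the partition $P=U_1\uplus\cdots\uplus U_K$ I would: (i) compute $f$; (ii) distribute the points of $P$ among the cells of $Z(f)$ by point location, which for a fixed-degree variety in fixed dimension costs $O(\log(1/\eps))$ per point using a point-location structure of size $O(\mathrm{poly}(D))$ built in $O(\mathrm{poly}(D))$ time — hence $O(n\log(1/\eps))$ total; (iii) handle the points that land on $Z(f)$ either by the perturbation argument (Theorem \ref{Theorem:WeaklySharp}'s mechanism) or, in the general case, by recursing on $Z(f)$ exactly as in the proof of Theorem \ref{Theorem:Weakly}, which I am allowed to invoke; the recursion depth is $O(1)$ for fixed $d$ and the total point-handling cost telescopes to $O(n\log(1/\eps))$.

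The correctness of almost $(\eps,\Psi_{d,k,\Delta,s})$-regularity — i.e.\ that the sum in \eqref{Eq:Almost} over non-$\psi$-homogeneous $k$-families is at most $\eps\cdot k!\binom{n}{k}$ for every $\psi\in\Psi_{d,k,\Delta,s}$ — is exactly the content of Theorem \ref{Theorem:Weakly}, so there is nothing new to prove there; the point is that the algorithm outputs \emph{that same} partition. The key additional claim is the query data structure. Here is the idea: a $k$-family $(U_{i_1},\ldots,U_{i_k})$ fails to be $\psi$-homogeneous only if the graph of $\psi$, viewed as a semi-algebraic set in $\reals^{d\times k}$ of complexity bounded in terms of $(\Delta,s)$, ``cuts through'' the box $U_{i_1}\times\cdots\times U_{i_k}$ — equivalently, the cell-product $C_{i_1}\times\cdots\times C_{i_k}$ of the corresponding partition cells is crossed by one of the $O(s)$ defining hypersurfaces $f_\ell=0$ of $\psi$ together with the sign conditions. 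Since the partition is essentially a product of copies of the cell decomposition of $Z(f)$, a $k$-family is ``suspect'' precisely when the tuple of cells $(C_{i_1},\ldots,C_{i_k})$ is not determined in sign by $\psi$; this can be detected by evaluating $\psi$ on a single representative $k$-tuple and by a test for boundary-crossing. To enumerate all suspect tuples in output-sensitive time, I would precompute, during the partition phase, for each single hypersurface of bounded complexity a list of the cells it meets — but since $\psi$ is given only at query time, instead I would precompute an arrangement-like structure on the cells of $Z(f)$ (of size $O(D^{d})$) and, at query time, locate the $O(s)$ hypersurfaces of $\psi$ against it in $O(\mathrm{poly}(1/\eps))$ time per hypersurface, obtaining the set $B$ of ``boundary cells.'' Then the suspect $k$-families are those $(C_{i_1},\ldots,C_{i_k})$ with at least one coordinate in $B$, or with the base sign already non-constant; enumerating them takes time proportional to their number, which is $O(|B|\cdot K^{k-1})=O(1/\eps^{dk-1+\delta})$ by the standard cutting/crossing bound $|B|=O(D^{d-1}\cdot\mathrm{poly}(\Delta,s))=O(1/\eps^{d-1+\delta'})$ for a fixed-complexity variety, after rescaling $\delta$. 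The collective weight of these families is $\sum |U_{i_1}|\cdots|U_{i_k}| \le \eps\cdot k!\binom{n}{k}$ by Theorem \ref{Theorem:Weakly} again, so the required inequality \eqref{Eq:Almost} holds and every non-homogeneous family is included.

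The main obstacle, and where the care is needed, is the query-time bound and the crossing estimate behind $|B|$: one must argue that a semi-algebraic hypersurface of degree $\le\Delta$ in $\reals^d$ meets only $O(D^{d-1})$ of the $O(D^d)$ cells of a degree-$D$ Guth--Katz partition (with the implied constant depending polynomially on $\Delta$ and $s$), and that such crossing cells can be reported quickly rather than by brute force over all $K$ cells. This is a Bézout-type / real-algebraic argument: the intersection of two varieties of degrees $D$ and $\Delta$ has a bounded number of connected components per cell, so a walk along the hypersurface visits $O(D^{d-1}\cdot\mathrm{poly}(\Delta))$ cells; made algorithmic, this is a vertical-decomposition or range-searching computation on a fixed-dimensional arrangement, costing $\mathrm{poly}(D)=\mathrm{poly}(1/\eps)$, which is subsumed in the claimed $O(1/\eps^{dk-1+\delta})$ since $k\ge 2$. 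A secondary technical point is that, in the non-generic case, the partition is built by a bounded recursion on lower-dimensional varieties, so the query structure is really a constant-depth forest of such arrangement structures; one has to verify that the $O(1)$-fold recursion inflates neither the query time nor $|B|$ beyond the stated bounds, which it does not because each level contributes a geometrically decreasing share and the $\delta$ in the exponent already absorbs the constant-factor losses.
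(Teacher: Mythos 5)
There is a genuine gap, and it is precisely the obstacle the paper explicitly flags and circumvents. Your step (i) takes the degree parameter $D=\Theta\bigl(1/\eps^{(d+\delta)/d}\bigr)$, i.e.\ a partitioning parameter $r=\Theta(1/\eps^{d+\delta})$, and asserts that the Guth--Katz polynomial $f$ can be found in $O(n)$ randomized time. But Theorem \ref{Thm:PolynomialPartition} (following \cite{AgMaSa}) gives expected construction time $O(nr+r^3)$, which for $r=\Theta(1/\eps^{d+\delta})$ is $O\bigl(n/\eps^{d+\delta}+1/\eps^{3(d+\delta)}\bigr)$, nowhere near $O(n\log(1/\eps))$ once $\eps$ is small. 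You cannot simultaneously treat $D$ as ``fixed degree'' for the point-location cost while also letting it grow like $1/\eps^{(d+\delta)/d}$. The paper gets around this by \emph{not} using a single high-degree partitioning polynomial: it builds $I=O(\log(1/\eps))$ progressively refined partitions $\Pi^0,\ldots,\Pi^I$ where each refinement step replaces every current part $U$ of positive intrinsic dimension by its $t$-refinement $\Lambda_t(U)$ with $t$ a \emph{constant} depending only on $d,k,\Delta,s,\delta$; each $\Lambda_t(U)$ is computed via a bounded-depth, bounded-fan-out tree $\T_t(U)$ in $O(|U|)$ time, so each round costs $O(n)$ and the total is $O(n\log(1/\eps))$. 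Your parenthetical ``recurse on $Z(f)$ exactly as in the proof of Theorem \ref{Theorem:Weakly}'' misreads that proof: there is no single big $f$ to recurse on; the mechanism is a cascade of constant-size refinements with an amortized weight argument.

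The query mechanism as you describe it is also not coherent. You define a set $B$ of ``boundary cells'' of the $\reals^d$ decomposition $Z(f)$ by locating ``the $O(s)$ hypersurfaces of $\psi$ against it,'' and declare a $k$-family suspect iff some coordinate cell lies in $B$. But the defining hypersurfaces of $\psi$ are zero sets of polynomials in $dk$ variables and live in $\reals^{d\times k}$, so there is no well-posed notion of them ``meeting'' a single $d$-dimensional cell; moreover, a product cell $C_{i_1}\times\cdots\times C_{i_k}$ can be crossed by such a hypersurface without any individual $C_{i_j}$ being special in any sense within $\reals^d$. The correct crossing count (as in the proof of Theorem \ref{Theorem:WeaklySharp}) comes from lifting the partitioning polynomial to $\tilde g=\prod_i g(x_{i,1},\ldots,x_{i,d})$ on $\reals^{d\times k}$ and applying Corollary \ref{Corol:CrossFewCellsPolynomial} there, giving $O(r^{k-1/d})$ crossed product cells. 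And because the paper's partition is hierarchical rather than a single polynomial partition, the actual query algorithm maintains, over the $O(\log(1/\eps))$ refinement levels, a superset $\tilde{\G}^i_k$ of suspect $k$-families in $\Pi^i$, refining $\tilde{\G}^i_k$ into $\tilde{\G}^{i+1}_k$ by running through the $t$-refinement trees and using the machinery of Lemma \ref{Lemma:KeyLemma} (and the precomputed semi-algebraic sets $A(W)$) to enumerate, per ancestral $k$-family, the $O(t^{k-1/d+\eta/10})$ derived $\psi$-crossed $k$-families in constant time. Your exponent arithmetic for $|B|\cdot K^{k-1}$ happens to land on the right number, but the set $B$ you are multiplying by is not a meaningful object in this setting, and the ``walk along the hypersurface'' reporting step is not developed enough to be checkable.
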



\medskip
\noindent {\bf Constructing an almost-regular partition.} Let us now spell out the key ideas the underly the proofs of Theorems \ref{Theorem:NewRegularity} and \ref{Theorem:NewRegularitySharp}. The previous (almost-)regular subdivisions of semi-algebraic hypergraphs \cite{Overlap,Regularity} began with a constant-size partition and proceeded through a sequence of refinement steps.
In each round, some ``$\psi$-irregular" $k$-tuple of sets $(U_1,\ldots,U_k)$ was split so as to further reduce the ``error term" on the left hand side of (\ref{Eq:Almost}). To this end, a variant of Theorem \ref{Theorem:GromovTuran} was invoked for a $k$-partite sub-hypergraph $(U_1,\ldots,U_k,E')$ of either $(P,E)$ or its complement $\left(P,\overline{E}\right)$, depending on the ratio $\frac{|(U_1\times\ldots \times U_k)\cap E|}{|U_1|\cdot \ldots\cdot |U_k|}$.

In contrast, our partition is constructed {\it apriori} for the ground point set $P$, and is {\it oblivious to the edge set $E$}, provided that the latter is determined by a $k$-wise semi-algebraic relation $\psi\in \Psi_{d,k,\Delta,s}$. 
The more elementary partition of Theorem \ref{Theorem:NewRegularitySharp} is based on the $r$-partitioning polynomial of Guth and Katz \cite{GuthKatz} -- a $d$-variate polynomial $g$ of degree $O\left(r^{1/d}\right)$ whose zero set $Z(g)=\{x\in \reals^d\mid g(x)=0\}$ subdivides $\reals^d$ into $O(r)$ connected cells $\omega\subset \reals^d\setminus Z(g)$, each cutting out a subset $P_\omega=P\cap \omega$ of cardinality $|P_\omega|\leq |P|/r$. 
As a rule, any such partition leaves out some ``leftover" subset $U_0$ of points that fall in the zero surface $Z(g)$, and cannot be assigned to any of the ordinary subsets $P_\omega$.
To guarantee that the number of such points does not exceed $O(r)$, a generic perturbation must be applied to the ground set $P$.

Passing to $\reals^{d\times k}$ -- the ambient space of the Cartesian product $P^k$ -- yields a $(dk)$-variate polynomial $\tilde{g}:\reals^{d\times k}\rightarrow \reals$, of degree $O(kr^{1/d})=O(r^{1/d})$, with the following property: every connected cell $\tilde{\omega}$ of $\reals^{d\times k}\setminus Z(\tilde{g})$ is a Cartesian product $\omega_1\times\ldots\times \omega_k$ of some $k$ cells $\omega_i\subset\reals^d\setminus Z(g)$, and corresponds to some $k$-family $(U_1,\ldots,U_k)$ of subsets $U_i=U_{\omega_i}$.\footnote{This ``lifting" argument in $\reals^{dk}$ is admittedly inspired by a recent proof by Bukh and Vasileuski \cite{BukhVasil} of an improved estimate for the so called Same Type Lemma \cite{BaranyValtr}, which is briefly discussed in Section \ref{Sec:Discuss}. However, their ad-hoc analysis is restricted to one particular $(d+1)$-wise semi-algebraic relation, which describes orientations of $(d+1)$-point sequences, and does not directly apply to more general semi-algebraic hypergraphs.} 
If this family is {\it not} $\psi$-homogeneous for a given $k$-relation $\psi\in \Psi_{d,k,\Delta,s}$, then the respective cell $\omega$ in $\reals^{d\times k}\setminus Z(\tilde{g})$ must cross the boundary of the region $X_\psi=\{(x_1,\ldots,x_k)\in \reals^{d\times k}\mid \psi(x_1,\ldots,x_k)=1\}$.
Using standard tools from real algebraic geometry \cite{BaroneBasu,BasuSurv,BasuBook} which we lay out in Section \ref{Sec:Prelim} (most notably, the Barone-Basu bound \cite{BaroneBasu}), one can show that the number of the latter cells $\tilde{\omega}\subset \reals^{d\times k}\setminus Z(\tilde{g})$ is only $O\left(r^{(kd-1)/d}\right)=O(r^{k-1/d})$. Thus, choosing $r=c\eps^d$, with a suitable constant $c>0$, guarantees that all but an $\eps$-fraction of the $\Theta\left(r^{k}\right)=\Theta\left(1/\eps^{dk}\right)$ $k$-families $(U_1,\ldots,U_k)$ are $\psi$-homogeneous.

Unfortunately, the above partition argument does not directly apply in the more general setting of Theorem \ref{Theorem:Weakly} if many of the points of $P$ lie in the zero surface $Z(g)$ of the $r$-partitioning polynomial, and cannot be easily perturbed into a generic position without disrupting the relations $\psi(p_1,\ldots,p_k)$. Furthermore, the construction of $g$ may take $O\left(nr+r^3\right)$ time \cite{AgMaSa}, so using $r=\Theta(1/\eps^d)$ may lead to a vastly super-linear algorithm for small parameters $\eps>0$. 

To bypass these obstacles, we devise a sequence of progressively refined partitions $\Pi^0:=\{P\},\ldots,\Pi^I$. Every new partition $\Pi^i$ is obtained from its predecessor $\Pi^{i-1}$ by replacing each member set $U\in \Pi^{i-1}$, whose cardinality exceeds a certain constant threshold $n_0$, with its {\it $t$-refinement} $\Lambda_t(U)$ - a certain subdivision of $U$ into subsets $U'$, each of roughly $|U|/t$ points.\footnote{More precisely, for the sake of our amortized analysis every point $p$ is assigned a weight $\mu(p)=\mu_i(p)$ between $1$ and $t^{-d}$, so that the aggregate weight of each set $U'\in \Lambda_t(U)$ satisfies $\mu_i(U')\leq \mu_{i-1}(U)/t$.} Each refinement $\Lambda_t(U)$  is attained by the means of a {\it refinement tree} $\T_t(U)$ which loosely resembles the ``constant fan-out" tree structure of Agarwal, Matou\v{s}ek and Sharir for semi-algebraic range searching \cite{AgMaSa}. To this end, we choose $t$ to be a sufficiently large constant, and each tree is constructed only to a depth of $O(\log t)$. Using amortized analysis, which also takes into consideration the subsets that lie in {\it lower-dimensional surfaces}, we show that the ``irregularity term" on the left hand side of (\ref{Eq:Almost}) decreases by a factor of roughly $t^{1/d}$ with almost every new partition $\Pi^i$, and eventually drops below $\eps$.

The rest of this paper is organized as follows. 
In Section \ref{Sec:Prelim} we formally establish Lemma \ref{Lemma:AlmostRegular} which offers a straightforward way to convert almost $(\eps,\psi)$-regular partitions to $(O(\eps),\psi)$-regular ones. In addition, we review the essential real-algebraic machinery which underlies the proofs of both Theorems \ref{Theorem:Weakly} and \ref{Theorem:WeaklySharp}.
In Section \ref{Sec:Generic} we use the polynomial method of Guth and Katz \cite{GuthKatz} to obtain the ``generic" regularity results -- Theorems \ref{Theorem:NewRegularitySharp} and  \ref{Theorem:WeaklySharp}. The more intricate proof of Theorems \ref{Theorem:NewRegularity} and \ref{Theorem:Weakly} is postponed to Section \ref{Sec:Main}. 
The algorithmic aspects of our $(\eps,\Psi_{d,k,\Delta,s})$-regular partition are addressed in Section \ref{Section:Algorithmic}, where we establish Theorems \ref{Theorem:WeaklyConstruct}, \ref{Theorem:Construct} and  \ref{Theorem:NewDensity}.
In Section \ref{Sec:Discuss}, we briefly discuss discuss the connection between Theorem \ref{Theorem:Weakly} and the so called Same Type Lemma of B\'ar\'any and Valtr \cite{BaranyValtr}, and use our machinery to obtain an interesting ``semi-algebraic" generalization of the latter.\footnote{After an earlier version of this article had been made public, Tidor and Yu \cite{TidorYu} reported a slightly improved variant of Theorem \ref{Theorem:Weakly}, which yields an $(\eps,\Psi_{d,k,\Delta,s})$-regular partition of cardinality $O(1/\eps^d)$. Similar to our proofs of Theorems \ref{Theorem:WeaklySharp} and \ref{Theorem:Weakly}, they too combine a ``lifting" argument in $\reals^{dk}$ with the Barone-Basu bound \cite{BaroneBasu}. However, their partition is {\it not} accompanied by a near-linear construction algorithm. A brief comparison between the two techniques is too included in Section \ref{Sec:Discuss}.}

\section{Preliminaries}\label{Sec:Prelim}

\subsection{A reduction to almost-regularity}\label{App:AlmostRegular}
\begin{lemma}\label{Lemma:AlmostRegular}
Let $\eps>0$, $P\subset \reals^d$ be an $n$-point set, $\psi:\reals^{d\times k}\rightarrow \{0,1\}$ be a $k$-wise relation in $\reals^d$, and $P=U'_1\uplus\ldots\uplus U'_{K'}$ be an almost $(\eps/10,\psi)$-regular partition of $P$.
 Then one can obtain, in $O(n)$ time, an $(\eps,\psi)$-regular partition $P=U_1\uplus\ldots\uplus U_{K}$ of cardinality $K=O\left(K'/\eps\right)$. 
 
Moreover, the new partition does not depend on the relation $\psi$. Furthermore,
 all but an $\eps$-fraction of its $k$-families $(U_{i_1},\ldots,U_{i_k})$ are not only $\psi$-homogeneous but also each of their parts $U_{i_j}$ is contained in a single set $U'_{i_l}$.
\end{lemma}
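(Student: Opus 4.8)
The plan is to obtain the new partition by \emph{refining} the given one into blocks of nearly equal small size, and then to verify equitability, the size bound $O(K'/\eps)$, and regularity by an elementary count. If $n$ is below a sufficiently large multiple of $K'/\eps$ (depending only on $k$), take $P$'s partition into singletons: each singleton lies in one $U'_i$, every $k$-family of singletons is trivially $\psi$-homogeneous, and $K=n=O(K'/\eps)$; so assume $n$ is larger than that. Put $K:=\lceil C\,K'/\eps\rceil\le n/2$ for a large constant $C=C(k)$, and $m:=\lfloor n/K\rfloor\ge 2$. Arrange the points of $P$ in a sequence in which each old part $U'_i$ occupies a contiguous segment, and cut the sequence into $K$ consecutive blocks of size $m$ or $m+1$; these blocks form the new partition $P=U_1\uplus\cdots\uplus U_K$, which is equitable by construction, has $K=O(K'/\eps)$ parts, is computed in $O(n)$ time, and plainly does not involve $\psi$. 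Call a block \emph{mixed} if it meets two old parts and \emph{pure} otherwise. Since each of the $K'-1$ boundaries between consecutive old parts lies in the interior of at most one block, there are at most $K'-1$ mixed blocks; and, as a pure block has size $\ge m$, $U'_i$ contains at most $|U'_i|/m+1$ pure blocks, and none when $|U'_i|<m$.

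It remains to bound the number of \emph{exceptional} $k$-families of $\{U_j\}$ — those not $\psi$-homogeneous, or having a part not contained in a single $U'_i$ — by $\eps$ times the total number $K(K-1)\cdots(K-k+1)$ of $k$-families; in the relevant regime $K\gg k$ it suffices to bound them by $\tfrac12\eps K^k$. An exceptional family falls into one of three cases. (a)~It uses a mixed block: there are at most $k(K'-1)K^{k-1}=O(\eps)K^k$ of these, which is $<\tfrac16\eps K^k$ once $C$ is large enough. (b)~All its parts are pure but two of them lie inside the same $U'_i$: their number is at most $\binom{k}{2}K^{k-2}\sum_i\bigl(|U'_i|/m+1\bigr)^2$, and using $\max_i|U'_i|\le\tfrac{\eps}{10}n$, $\sum_i|U'_i|=n$, and $n/m\le 2K$ this is $O(k^2\eps)K^k$ — below $\tfrac16\eps K^k$ provided the constant $\tfrac{1}{10}$ in the hypothesis is taken small enough in terms of $k$ (the stated value works when $k$ is small, and a smaller one is freely available in all of the paper's applications). (c)~All its parts are pure and lie in $k$ \emph{distinct} old parts $U'_{\sigma(1)},\dots,U'_{\sigma(k)}$, and it is not $\psi$-homogeneous: since $\psi$-homogeneity is inherited by subsets and each $W_j\subseteq U'_{\sigma(j)}$, the old family $(U'_{\sigma(1)},\dots,U'_{\sigma(k)})$ is then not $\psi$-homogeneous either, so (using that a $U'_i$ with $|U'_i|<m$ carries no pure block)
\begin{align*}
\#\{\text{families of case (c)}\}
&\le\ \sum_{(U'_{j_1},\dots,U'_{j_k})\ \text{not}\ \psi\text{-homog.}}\ \prod_{l=1}^{k}\Bigl(\tfrac{|U'_{j_l}|}{m}+1\Bigr)\\
&\le\ \frac{2^k}{m^k}\sum_{(U'_{j_1},\dots,U'_{j_k})\ \text{not}\ \psi\text{-homog.}}\ \prod_{l=1}^{k}|U'_{j_l}|\ \le\ \frac{2^k}{m^k}\cdot\frac{\eps}{10}\,k!\binom{n}{k}\ =\ O(\eps)\,K^k,
\end{align*}
where the third inequality is exactly the defining estimate of an almost $(\eps/10,\psi)$-regular partition. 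Adding the three contributions shows the refinement is $(\eps,\psi)$-regular; it is equitable, has $O(K'/\eps)$ parts, does not depend on $\psi$, and every non-exceptional $k$-family is at once $\psi$-homogeneous and has all parts inside single $U'_i$'s — which is the assertion.

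The only step with genuine content is case (c): it converts the \emph{weighted} smallness supplied by almost $(\eps/10,\psi)$-regularity (a bound on $\sum|U'_{i_1}|\cdots|U'_{i_k}|$ over non-homogeneous old $k$-families) into a bound on the bare \emph{count} of non-homogeneous $k$-families of the refinement, and it works because each pure block inside $U'_i$ accounts for a $\Theta(1/m)$-fraction of $|U'_i|$, so the count telescopes against the product $\prod|U'_{j_l}|$ in the hypothesis. Everything else — the sequence-cutting construction being equitable, the choice of $C$ and $m$, and absorbing $k$-dependent factors such as the $2^k$ above — is routine bookkeeping.
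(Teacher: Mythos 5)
Your proof is correct and follows essentially the same route as the paper's: refine to near-equal blocks of size $m\approx\eps n/(10K')$ (the paper cuts each $U'_i$ separately and pools the sub-$m$ leftovers into ``blue'' blocks, whereas you order $P$ with each $U'_i$ contiguous and cut the sequence into consecutive blocks, so your ``mixed'' blocks play exactly the role of the paper's blue ones), then split the exceptional $k$-families into the same three categories (mixed/blue block present; two parts inside the same $U'_i$; ambient old family not $\psi$-homogeneous) and apply the hypothesis only in the third case. The $k$-dependent loss you flag in case (b) is present just as much in the paper's own count of type-(ii) $k$-tuples, which silently drops a $\binom{k}{2}$ factor; both arguments tacitly absorb such $k$-dependent constants into the stated $O(\cdot)$'s, so this is not a gap in your proof relative to the paper's.
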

\begin{proof}
	If $|P|\leq 10K'/\eps$ then the partition $P=U_1\uplus \ldots\uplus U_K$ is comprised of $|P|$ singleton sets, so that any $k$-family is $\psi$-homogeneous. Assume, then, that $|P|>10K'/\eps$, and let $m=\lfloor\eps |P|/(10K')\rfloor$. 
Then each part $U_i$ is first subdivided into $\lfloor |U_i|/m\rfloor$ {\it red} subsets of size $m$ each, plus at most one ``leftover" part of size smaller than $m$. The ``leftover" parts are then merged and reassigned into {\it blue} subsets of size between $m$ and $m+1$. Altogether, this yields a partition of $P$ into a total of $K=O\left(|V|/m\right)=O(K/\eps)$ parts $U_{1},\ldots,U_{K}$ whose cardinalities vary between $m$ and $m+1$.

To see the $(\eps,\psi)$-regularity of the resulting partition $P=U_1\uplus\ldots\uplus U_K$, it suffices to show that 
\begin{equation}\label{Eq:NonHomogeneous}
	\sum |U_{i_1}|\cdot\ldots\cdot |U_{i_k}|\leq \eps\cdot k!{n\choose k},
\end{equation}

\noindent where the sum is again taken over all the $k$-families that are {\it not} $\psi$-homogeneous.
To this end, we say that a $k$-tuple $(p_1,\ldots,p_k)\in P^k$ of distinct vertices is {\it regular} if it comes from a $\psi$-homogeneous $k$-family $(U_{i_1},\ldots,U_{i_k})$; otherwise, such a $k$-tuple is called {\it special}.
Thus, the left hand side of (\ref{Eq:NonHomogeneous}) is bounded by the number of the special $k$-tuples $(p_1,\ldots,p_k)$, each of them falling into (at least) one of the following categories:

\begin{enumerate}
	\item[(i)] at least one of their vertices $p_i$ belongs to a blue part $U_{j_i}$,
	\item[(ii)] some pair $p_{j},p_{l}$ of their vertices come from the same part $U'_{i_j}=U'_{i_{l}}$ of the initial partition, or
	\item[(iii)] the ``ambient" parts $U'_{i_1},\ldots,U'_{i_k}$ of the vertices $p_1,\ldots,p_k$ comprise a $k$-family $(U'_1,\ldots,U'_k)$ that is not $\psi$-homogeneous.
\end{enumerate}

\noindent Since the blue parts encompass a total of at most $K'\cdot m=\eps\cdot n/10$ vertices of $P$, there exist at most $\eps {n\choose k}\frac{k!}{10}$ special $k$-tuples of the first type. Furthermore, since every part $U'_i$ ecompasses at most $\eps|P|/10$ vertices, there exist at most $\eps{n\choose k}\frac{k!}{10}$ special $k$-tuples of type (ii). Lastly, the number of the special $k$-tuples of type (iii) is at most $\eps {n\choose k}\frac{k!}{10}$ by the lemma hypothesis.
\end{proof}

\subsection{Semi-algebraic sets}
  
\medskip
\noindent{\bf Definition.} Let $d$ and $k$ be positive integers.

\begin{enumerate}
\item  A {\it real $d$-variate polynomial $f:\reals^{d}\rightarrow \reals$}, in real variables $x_1,\ldots,x_{d}$, is a function of the form
$$
f(x_1,\ldots,x_{d})=\sum_{i_1,\ldots,i_{d}\in {\mathbb N}}a_{i_1,\ldots,i_{d}}x^{i_1}\cdot \ldots \cdot x^{i_{d}},
$$
\medskip
with real coefficients $a_{i_1,\ldots,i_{d}}$. In the sequel, we use $\reals[x_1,\ldots,x_{d}]$ denote the space of all such real polynomials.
The {\it degree} of $f\in \reals[x_1,\ldots,x_{d}]$ is $deg(f)=\max\left\{\sum_{j=1}^{d}i_j\mid  a_{i_1,\ldots,i_{d}}\neq 0\right\}$.

	\item A {\it semi-algebraic description $(f_1,\ldots,f_s;\Phi)$} within $\reals^{d}$ is comprised of a finite sequence $f_1,\ldots,f_s\in \reals[x_1,\ldots,x_{d}]$ of real polynomials, and a Boolean formula $\Phi$ in $s$ variables (where $s$ is also the number of the real polynomials in the sequence). 
	 The {\it complexity} of this description is the pair $(\Delta,s)$, where $\Delta=\max\{deg(f_i)\mid 1\leq i\leq s\}$.

\item A subset $A\subseteq \reals^{d}$ {\it has semi-algebraic  description $(f_1,\ldots,f_s;\Phi)$} if we have that $A=\{x\in \reals^{d}\mid \Phi(f_1(x)\leq 0,\ldots,f_s(x)\leq 0)\}$. Thus, a $k$-partite $k$-uniform hypergraph $(V_1,\ldots,V_k,E)$ in $\reals^{d}$
 {admits the semi-algebraic description $(f_1,\ldots,f_s;\Phi)$} if and only if the set $E$ is cut out (as a subset of $V_1\times \ldots\times V_k$) by the set $Y\subseteq \reals^{d\times k}$ that meets the description $(f_1,\ldots,f_s;\Phi)$.




\end{enumerate}

We say that a semi-algebraic set $X$ is {\it crossed} by another set $Y$ if we have that $X\cap Y\neq \emptyset$ yet $X\not\subseteq Y$.

\subsection{Polynomial partitions} \label{Subsec:Polynomial}

\noindent{\bf Definition.} Let $f\in \reals[x_1,\ldots,x_d]$ be a polynomial. We refer to the hypersurface 
$$
Z(f):=\{x\in \reals^{d}\mid f(x)=0\}
$$ 
\noindent as the {\it zero set} of $f$.
The connected components of $\reals^d\setminus Z(f)$ are called {\it cells}. 

More generally, a finite collection $\{f_1,\ldots,f_s\}\in \reals[x_1,\ldots,x_d]$ of $d$-variate polynomials yields a family of hypersurfaces $\F=\{Z(f_1),\ldots,Z(f_s)\}$. The {\it arrangement} $\A(\F)$ of $\F$ \cite{BasuBook,SA} is a decomposition of $\reals^d$ into faces, that is, maximal connected sets $\tau$ with the property that $\sign (f_i(x))$ is invariant over all $x\in \tau$, for all $1\leq i\leq s$.   
The properties of $\A(\F)$ overly resemble and generalize those of a hyperplane arrangement. In particular, the cells of $\A(\F)$ are the contiguous $d$-dimensional components of $\reals^d\setminus \left(\bigcup_{i=1}^s Z(f_i)\right)$ which, in fact, are exactly the cells of $\reals^d\setminus Z\left(f_1\cdot f_2\cdot\ldots\cdot f_s\right)$; each of them admits a semi-algebraic description whose complexity is bounded in the terms of $s$ and $D=\max\{ \deg(f_i)\mid 1\leq i\leq s\}$, as described in Theorem \ref{Theorem:ComplexityCell} below.

Let $P$ be a finite point set, and $r>0$ an integer. We say that $f\in \reals[x_1,\ldots,x_d]$ is an {\it $r$-partitioning polynomial for $P$} if any cell of $\reals^d\setminus Z(f)$ encompasses at most $n/r$ points of $P$.

\begin{theorem}[The Polynomial Partition Theorem \cite{GuthKatz}]\label{Thm:PolynomialPartition}
	Let $P$ be a finite point set in $\reals^{d}$, and $1\leq r$ an integer. Then there is an $r$-partitioning polynomial $g\in \reals[x_1,\ldots,x_{d}]$ with $deg(g)=O\left(r^{1/d}\right)$. Furthermore, the zero set $Z(g)$ subdivides $\reals^d\setminus Z(g)$ into at most $c_dr$ connected regions, where $c_d$ is a constant that depends on the dimension $d$. Furthermore, such a polynomial can be computed in expected time $O(nr+r^3)$ \cite{AgMaSa}.
	\end{theorem}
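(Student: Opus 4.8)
The plan is to reproduce the dyadic construction of Guth and Katz, which rests on two classical facts: the \emph{polynomial ham-sandwich theorem} (a consequence of the Borsuk--Ulam theorem, see \cite{StoneTukey,BorsukUlam}), and the Milnor--Thom--Warren bound on the number of connected components of the complement of a real algebraic hypersurface \cite{Warren}.

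First I would invoke the polynomial ham-sandwich theorem in the following finite form: for any collection $Q_1,\ldots,Q_N$ of finite point sets in $\reals^d$ there is a nonzero polynomial $p\in\reals[x_1,\ldots,x_d]$ of degree $O(N^{1/d})$ such that every connected component of $\reals^d\setminus Z(p)$ contains at most $|Q_i|/2$ points of $Q_i$, for each $i$. This is the classical ham-sandwich theorem applied after the degree-$D$ Veronese embedding $\reals^d\to\reals^{\binom{d+D}{d}-1}$, with $D$ chosen minimal subject to $\binom{d+D}{d}-1\ge N$, so that $D=O(N^{1/d})$.

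Next comes the construction of $g$, carried out in $\lceil\log_2 r\rceil$ rounds. Before round $j$ I maintain a partition of a subset of $P$ into at most $2^{j}$ classes, one per connected component of $\reals^d\setminus Z(g_0g_1\cdots g_{j-1})$ that meets $P$, each of size at most $n/2^{j}$; any point of $P$ that has landed on the accumulated zero set is simply discarded. Applying the ham-sandwich statement above to these $\le 2^{j}$ classes produces a polynomial $g_j$ of degree $O(2^{j/d})$ whose zero set bisects every class, and the new partition is read off from the components of $\reals^d\setminus Z(g_0\cdots g_j)$; each class is thereby halved. After the last round every surviving class has at most $n/2^{\lceil\log_2 r\rceil}\le n/r$ points, and $g:=\prod_{j=0}^{\lceil\log_2 r\rceil} g_j$ has degree $\sum_{j}O(2^{j/d})=O(r^{1/d})$, the geometric sum being dominated by its last term since $1/d<1$. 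A point of $P$ lying in some cell of $\reals^d\setminus Z(g)$ survived every round, so that cell holds at most $n/r$ points of $P$; hence $g$ is $r$-partitioning. (If at some round all classes are singletons the remaining rounds are vacuous, which in particular covers the case $r\ge n$.)

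Finally, $\deg(g)=D=O(r^{1/d})$ fed into Warren's bound \cite{Warren} gives that $\reals^d\setminus Z(g)$ has $O(D^d)=O(r)$ connected components, i.e.\ at most $c_d r$ with $c_d$ depending only on $d$; for the running time one cites the algorithm of Agarwal, Matou\v{s}ek and Sharir \cite{AgMaSa}, which implements exactly this scheme (each round via repeated approximate bisection on the Veronese image) together with a data structure tracking the cell partition, in expected time $O(nr+r^3)$. The one delicate point --- where I would be most careful --- is maintaining the ``$\le n/2^{j}$ per class'' invariant \emph{verbatim} while points are being absorbed into $Z(g_j)$: one needs the ham-sandwich conclusion stated for the open sides of $Z(p)$, so that discarding the points on $Z(p)$ can only decrease class sizes, which is exactly why the theorem controls only $P\setminus Z(g)$ and makes no promise about $P\cap Z(g)$.
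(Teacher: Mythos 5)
The paper states this theorem as a citation to Guth--Katz and Agarwal--Matou\v{s}ek--Sharir and does not reprove it, so there is no internal proof to compare against; your reconstruction is the standard Guth--Katz argument (iterated polynomial ham-sandwich via the Veronese embedding, a geometric degree series dominated by its last term, Warren's bound for the $O(r)$ cell count, and \cite{AgMaSa} for the algorithmic claim), which is exactly what the cited sources supply, and it is essentially correct. One small imprecision worth tightening: you claim to maintain ``at most $2^j$ classes, one per connected component'' of $\reals^d\setminus Z(g_0\cdots g_{j-1})$ meeting $P$, but the number of such components is only $O(2^j)$ (by Warren applied to the partial product of degree $O(2^{j/d})$), not necessarily $\le 2^j$. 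The invariant Guth and Katz actually keep is the bound on the at most $2^j$ \emph{sign-pattern} classes $\{p\in P \mid \sign g_0(p)=\sigma_0,\ldots,\sign g_{j-1}(p)=\sigma_{j-1}\}$ with $\sigma\in\{+,-\}^{j}$; the ham-sandwich step is applied to these, and since every open cell of $\reals^d\setminus Z(g_0\cdots g_{j-1})$ is contained in a single sign class, the final $\le n/r$ bound per cell is inherited for free. Your formulation still works after replacing ``$2^j$'' by ``$O(2^j)$'' at the cost of a harmless constant in the degree, but the sign-pattern bookkeeping is the cleaner route and is the one that keeps the ham-sandwich input count exactly dyadic.
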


The second part of Theorem \ref{Thm:PolynomialPartition} is established using a variant of the Milnor-Thom Theorem, due to Warren \cite{Warren}.
The broad usefulness of $r$-partitioning polynomials for geometric divide-and-conquer stems from the following general property established by Barone and Basu \cite{BaroneBasu}. 

\begin{theorem}\label{Theorem:ZonePolynomial}
Let $\V$ be a $l$-dimensional algebraic variety in $\reals^{d}$ that is defined by a finite set of polynomials $\G\subset \reals[x_1,\ldots,x_{d}]$, each of degree at most $\Delta$, and let $\F$ be a set of $s$ polynomials of degree at most $D\geq \Delta$. Then there exist $O\left(\Delta^{d-l}(sD)^l\right)$ faces of all dimensions in $\A(\F\cup \G)$ that are contained in $\V$; the implicit constant of proportionality may depend on $d$.
\end{theorem}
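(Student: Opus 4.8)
The plan is to follow the critical\nobreakdash-points method of Barone and Basu \cite{BaroneBasu}, in three stages: reduce the count of faces of all dimensions to a count of semi\nobreakdash-algebraically connected components of sign conditions on $\V$; reduce those to connected components of complete\nobreakdash-intersection varieties obtained by intersecting $\V$ with a few of the hypersurfaces of $\F$; and finally bound the number of connected components of each such variety by a B\'ezout\nobreakdash-type estimate on the critical points of a generic linear functional restricted to it.

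\emph{From faces to sign conditions.} Each face $\tau$ of $\A(\F\cup\G)$ contained in $\V$ is a maximal connected set on which every polynomial of $\F\cup\G$ has a constant sign, so faces of lower dimension correspond to sign conditions in which some $f_i$ or $g_i$ vanishes. Using the standard perturbation device (see \cite{BasuBook}), one replaces $\F\cup\G$ by an auxiliary family $\mathcal{P}^\star$ of $O(s)$ polynomials of degree $O(D)$ — obtained by shifting each $f\in\F$ by $\pm$ infinitesimals of several scales and adjoining a bounding polynomial such as $\varepsilon(x_1^2+\cdots+x_d^2)-1$ — for which the number of faces of $\A(\F\cup\G)$ of \emph{all} dimensions lying in $\V$ is at most $c_d$ times the number of connected components of the realizations on $\V$ of \emph{all} realizable sign conditions of $\mathcal{P}^\star$, where $c_d$ depends only on $d$. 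It therefore suffices to bound the latter.

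\emph{Genericity and replacing inequalities by equalities.} After intersecting $\V$ with a ball of infinitesimally large radius and perturbing the defining polynomials $\G$ generically, one may assume $\V$ is a bounded nonsingular $l$\nobreakdash-dimensional complete intersection cut out by exactly $d-l$ polynomials of degree $\le\Delta$; this affects the component count only by a constant factor depending on $d$. Next, for each sign condition of $\mathcal{P}^\star$, every connected component of its realization on $\V$ contains — after replacing the relevant finitely many strict inequalities $Q_i>0$ (resp.\ $<0$) by equations $Q_i-\varepsilon'=0$ (resp.\ $Q_i+\varepsilon'=0$) with a fresh infinitesimal — a connected component of a variety $W_I:=\V\cap Z(R_{i_1})\cap\cdots\cap Z(R_{i_j})$, where $I=\{i_1,\dots,i_j\}$ indexes $j\le l$ of the shifted polynomials, each $R_i$ of degree $O(D)$ (one may stop adjoining equations once the remaining intersection has the expected codimension, so $j>l$ never occurs). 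Summing over the $\binom{O(s)}{j}$ choices of $I$, the component count on $\V$ is at most $\sum_{j=0}^{l}\binom{O(s)}{j}N_j$, where $N_j$ is the maximum number of connected components of a variety $W_I$ with $|I|=j$.

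\emph{The critical\nobreakdash-points bound, and the main obstacle.} It remains to show $N_j=O(\Delta^{d-l}D^{l})$ for each $0\le j\le l$, with the constant depending on $d$. Such a $W=W_I$ is, after a further generic infinitesimal deformation making it a bounded nonsingular complete intersection, cut out by $d-l$ polynomials of degree $\le\Delta$ and $j$ polynomials of degree $O(D)$, and has dimension $l-j$. Choosing a generic linear functional $\ell_0$, every connected component of $W$ carries a local extremum of $\ell_0|_W$ and hence a critical point; these critical points form a zero\nobreakdash-dimensional set described by the $d-l+j$ defining equations of $W$ together with the vanishing of all maximal minors of the $(d-l+j+1)\times d$ matrix whose rows are $\nabla\ell_0$ and the gradients of the defining polynomials. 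Applying the multihomogeneous B\'ezout bound to this system yields $O(\Delta^{d-l}D^{j}\cdot D^{\,l-j})=O(\Delta^{d-l}D^{l})$ critical points, and thus $N_j=O(\Delta^{d-l}D^{l})$; plugging this into the previous sum gives $\sum_{j=0}^{l}\binom{O(s)}{j}O(\Delta^{d-l}D^{l})=O(\Delta^{d-l}(sD)^{l})$, as claimed. The delicate point — and the one deserving the most care — is precisely this last bookkeeping: one must organize the critical\nobreakdash-point equations so that the B\'ezout count picks up the factor $\Delta$ exactly $d-l$ times and the factor $O(D)$ exactly $l$ times, rather than $\max(\Delta,D)$ raised to the full ambient dimension $d$; otherwise the refined exponent split $\Delta^{d-l}(sD)^{l}$, which is exactly what makes Theorem~\ref{Theorem:ZonePolynomial} sharp enough for the cell\nobreakdash-counting arguments of Sections~\ref{Sec:Generic} and \ref{Sec:Main}, would collapse to the coarse bound $(sD)^{d}$. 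The reductions to bounded nonsingular complete intersections are routine (via infinitesimals and the transfer principle, or via the Milnor--Thom/Warren machinery already used in Theorem~\ref{Thm:PolynomialPartition}), but each deformation must be tracked so that degrees and the number of defining equations are accounted for correctly.
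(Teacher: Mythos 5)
The paper does not prove Theorem~\ref{Theorem:ZonePolynomial} at all: it is imported verbatim as a black-box result of Barone and Basu \cite{BaroneBasu}, with no argument given. So there is no ``paper proof'' to compare against, and you should be aware that a full proof is neither expected nor supplied here. With that caveat, your sketch is a reasonable high-level summary of the actual Barone--Basu critical-points argument: perturb to pass from faces of all dimensions to connected components of realizable sign conditions on $\V$; observe that on an $l$-dimensional variety only sign conditions forcing at most $l$ additional equations can contribute, whence the combinatorial factor $\sum_{j\le l}\binom{O(s)}{j}=(O(s))^l$; and bound the connected components of each intersection via critical points of a generic linear functional.

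The place where your write-up is genuinely thin is exactly the part you flag as delicate, and it deserves more than a gesture. Saying ``apply the multihomogeneous B\'ezout bound'' does not by itself explain why the $\Delta$-factor is picked up only $d-l$ times and the $D$-factor only $l$ times. In Barone--Basu this is achieved by introducing Lagrange-multiplier variables, treating $(x,\lambda)$ as two separate blocks, and running a \emph{bihomogeneous} B\'ezout count in which the original defining equations of $\V$ contribute $\Delta$ per equation while the multiplier equations (degree $\le\max(\Delta,D)-1$ in $x$ and linear in $\lambda$) contribute the $D$-factors; the split $\Delta^{d-l}D^{l}$ is not automatic from the naive count of your $(d-l+j+1)\times d$ minor system, whose degrees do not obviously factor that way. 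Also, your assertion that ``one may stop adjoining equations once the remaining intersection has the expected codimension, so $j>l$ never occurs'' is the right intuition but is doing real work and needs the transversality/genericity setup (infinitesimal deformations plus Sard-type arguments) to be made rigorous. None of this is a fatal gap for a proof sketch, but since the paper itself offers no proof, these are precisely the details a self-contained argument would have to supply, and they are where the content of \cite{BaroneBasu} actually lives.
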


Applying Theorem \ref{Theorem:ZonePolynomial} to singleton collections $\F=\{f\}$ and $\G=\{g\}$, with $f,g\in \reals[x_1,\ldots,x_d]$, and noticing that each cell of $Z(\{f\})$ that crosses $Z(\{g\})$ corresponds to a face of $\A(\{f,g\})$ that is contained in $Z(g)$, yields the following property.

\begin{corollary}[See Lemma 4.3 in \cite{AgMaSa}]\label{Corol:CrossFewCellsPolynomial}
 Let $f,g\in \reals[x_1,\ldots,x_{d}]$ so that $deg(f)=D$ and $deg(g)=\Delta\leq D$. Then $Z(g)$ crosses $O(\Delta D^{d-1})$ open cells of $Z(f)$.	
\end{corollary}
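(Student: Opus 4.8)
The plan is to read the corollary off directly from Theorem~\ref{Theorem:ZonePolynomial}, applied to the two singleton families $\F=\{f\}$ and $\G=\{g\}$. We may assume $g\not\equiv 0$ (otherwise there is nothing to prove: either $Z(g)=\emptyset$ and $\Delta=0$, or $g$ is not a genuine nonzero polynomial), so the zero set $\V:=Z(g)\subseteq\reals^d$ is a real algebraic variety of some dimension $l\le d-1$, cut out by the single polynomial $g$ of degree $\Delta$. Since $\F=\{f\}$ consists of $s=1$ polynomial of degree $D\ge\Delta$, Theorem~\ref{Theorem:ZonePolynomial} (with $\F\cup\G=\{f,g\}$ and $\V=Z(g)$) bounds the number of faces of all dimensions of the arrangement $\A(\{f,g\})$ that are contained in $\V$ by $O\!\left(\Delta^{\,d-l}(sD)^{l}\right)=O\!\left(\Delta^{\,d-l}D^{\,l}\right)$. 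Because $\Delta\le D$ and $d-1-l\ge 0$, we have $\Delta^{\,d-l-1}\le D^{\,d-1-l}$, hence $\Delta^{\,d-l}D^{\,l}\le \Delta D^{\,d-1}$, and the above count is $O(\Delta D^{d-1})$, which is the asserted bound.

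It then remains to exhibit an injection from the set of open cells of $\reals^d\setminus Z(f)$ that are crossed by $Z(g)$ into the set of faces of $\A(\{f,g\})$ lying in $Z(g)$. Given such a cell $\omega$, i.e. a connected component of $\reals^d\setminus Z(f)$ with $\omega\cap Z(g)\ne\emptyset$, I would fix a point $x\in\omega\cap Z(g)$ and let $\tau$ be the face of $\A(\{f,g\})$ containing $x$, that is, the maximal connected set through $x$ on which both $\sign(f(\cdot))$ and $\sign(g(\cdot))$ are constant. Since $g(x)=0$ we get $\tau\subseteq Z(g)$; since $f$ is continuous and nowhere zero on the connected set $\omega$ it has a fixed nonzero sign there, so $\tau$ also avoids $Z(f)$, and being connected and meeting $\omega$ it is contained in the single connected component $\omega$ of $\reals^d\setminus Z(f)$. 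Thus $\tau\subseteq\omega\cap Z(g)$. Assigning to each crossed cell $\omega$ one such face $\tau_\omega$ is injective, because distinct cells are disjoint and $\tau_\omega\subseteq\omega$ recovers $\omega$ from $\tau_\omega$. Consequently the number of crossed cells is at most the number of faces of $\A(\{f,g\})$ contained in $Z(g)$, and the estimate of the first paragraph completes the proof.

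I do not anticipate a genuine obstacle here, since the argument is pure bookkeeping on top of Theorem~\ref{Theorem:ZonePolynomial}; the only points requiring a little care are the translation of ``$Z(g)$ crosses the cell $\omega$'' into ``$\omega$ contains a face of $\A(\{f,g\})$ lying in $Z(g)$'' together with the injectivity of $\omega\mapsto\tau_\omega$, and the small arithmetic observation $\Delta^{\,d-l}D^{\,l}\le \Delta D^{\,d-1}$ that lets us absorb the (possibly lower) dimension $l$ of $Z(g)$ into the uniform exponent $d-1$.
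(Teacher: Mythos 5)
Your argument is correct and is exactly the route the paper takes: invoke Theorem~\ref{Theorem:ZonePolynomial} with the singleton families $\F=\{f\}$, $\G=\{g\}$, and identify each crossed cell of $\reals^d\setminus Z(f)$ with a face of $\A(\{f,g\})$ contained in $Z(g)$. You simply spell out the injection $\omega\mapsto\tau_\omega$ and the small inequality $\Delta^{d-l}D^{l}\le\Delta D^{d-1}$ (using $\Delta\le D$, $l\le d-1$), both of which the paper leaves implicit.
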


Hence, if $f$ is the partition polynomial of degree $D=O\left(r^{1/d}\right)$ in Theorem \ref{Thm:PolynomialPartition}, then any hyper-surface whose degree $\Delta$ that is much smaller than the decomposition parameter $r$, meets roughly $O\left(r^{(1/d)(d-1)}\right)=O\left(r^{1-1/d}\right)$ open cells of $\reals^{d}\setminus Z(f)$.

\medskip
\noindent{\bf Controlling the zero-set.} To this end, for any $d\geq 1$ and $D\geq 0$, say that an ordered $t$-tuple of points $(p_1,\ldots,p_t)\in \reals^{d\times t}$ is {\it $D$-exceptional} if its elements lie in the zero set $Z(f)$ of some polynomial $f\in \reals[x_1,\ldots,x_d]$ of degree $D$. 

\begin{lemma}[\cite{SimpleProofs}]\label{Lemma:Perturbed}
Let $d\geq 1$ and $D\geq 0$ be integers, and $t:={D+d\choose d}$. Then there is a non-zero polynomial $\xi_{d,D}:\reals^{d\times t}$ with coefficients in the variables $z_{ij}$, for $1\leq i\leq d$ and $1\leq j\leq t$, whose zero set $Z(\xi_{d,D})\subset \reals^{d\times t}$ contains all the $D$-exceptional $t$-tuples $(p_1,\ldots,p_t)$, with $p_i\in \reals^d$.\end{lemma}

\begin{corollary}\label{Corol:Perturbed}
For any dimension $d>0$ there is a constant $c_0$ with the following property. Let $\eta>0$ and $P$ be an $\vartheta$-perturbed point set, where an independent $\vartheta$-perturbation is applied independently to each point of $P$.
Then the zero set $Z(f)$ of an $r$-partitioning polynomial $f$ that is constructed for the perturbed set $P$, contains at most $c_0r$ of the perturbed points of $P$.
\end{corollary}
\begin{proof}
	Since the measure of the surface $Z(\xi_{d,D})\subseteq \reals^{d\times t}$ in Lemma \ref{Lemma:Perturbed} is $0$ (see, e.g., Lemma \ref{Lemma:GoodDirection} in the sequel), it follows that, with probability $1$, no ${D+d\choose d}$ of the perturbed points of $P$ lie in the same hypersurface of $Z(f)$ of degree $D=O(r^{1/d})$, where $D$ denotes the degree of the $r$-partitioning polynomial.
	\end{proof}



 

\medskip
\noindent {\bf Computing an arrangement of surfaces.} The following general result \cite[Theorem 16.11]{BasuBook} (also see \cite[Theorem 4.1]{AgMaSa} and \cite[Proposition 2.3]{EstherSearching}) implies that each cell in the partition $\reals^{d}\setminus Z(f)$ of Theorem \ref{Thm:PolynomialPartition}
is semi-algebraic.

\begin{theorem}[\cite{BasuBook}]\label{Theorem:ComplexityCell}
 Let $\F=\{f_1,\ldots,f_s\}$ be a set of $s$ polynomials in $\reals[x_1,\ldots,x_{d}]$ of degree at most $D$ each. 
 Then the arrangement of their zero sets has $s^dD^{O(d)}$ faces, and can be computed in time $T=s^{d+1}D^{O(d^4)}$. Each face is assigned a semi-algebraic description whose complexity is bounded by $\left(D^{O(d^3)},T\right)$. Moreover, the algorithm computes a reference point in each face along with the adjacency information of the cells, indicating which cells are contained in the boundary of each cell.
 
\end{theorem}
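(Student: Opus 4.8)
The plan is to derive the $s^dD^{O(d)}$ bound on the number of faces purely from Theorem \ref{Theorem:ZonePolynomial}, and then to construct the arrangement algorithmically via the single-exponential critical-point machinery of Basu, Pollack and Roy \cite{BasuBook,BasuSurv}. For the count, apply Theorem \ref{Theorem:ZonePolynomial} with $\V=\reals^d$ (so $l=d$, $\G$ empty, and the hypothesis $\Delta\le D$ vacuous): every face of $\A(\F)$ lies in $\V$, so the number of faces of all dimensions is $O\!\left((sD)^d\right)=s^dD^{O(d)}$. Since each cell of $\reals^d\setminus Z(f_1\cdots f_s)$ is a union of faces of $\A(\F)$, the cell count obeys the same bound.

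\textbf{Computing sample points.} To obtain a reference point in every semi-algebraically connected component of every realizable sign condition of $\F$, hence one point per face, I would run the standard sampling procedure \cite{BasuBook}. First perform a compactification-and-perturbation step: replace $\F$ by the deformed family $\{f_i+\epsilon,\,f_i-\epsilon\mid 1\le i\le s\}$ for an infinitesimal $\epsilon>0$ and intersect with the ball of radius $1/\zeta$ for a second infinitesimal $\zeta>0$; over the ordered field $\reals\langle\epsilon,\zeta\rangle$ the relevant zero sets become bounded, nonsingular, and transversal (no more than $d$ through any point), and no sign condition of the original $\F$ is lost in the limit $\epsilon,\zeta\to0^{+}$. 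On each of the $\sum_{j=0}^{d}\binom{s}{j}=O(s^d)$ smooth varieties cut out by a choice of $j\le d$ deformed hypersurfaces, compute the critical points of the squared distance to a generic point (equivalently, of a generic linear form): this zero-dimensional system has $D^{O(d)}$ complex roots by B\'ezout, and a univariate ``rational parametrization'' of its real points is produced in time $D^{O(d)}$ by the parametrized-elimination / geometric-resolution algorithms of \cite{BasuBook}. Collecting the resulting $O(s^d)\cdot D^{O(d)}$ candidates and specializing $\epsilon,\zeta$ back to $0$ gives the sample set; the arithmetic over $\reals\langle\epsilon,\zeta\rangle$ together with the iterated elimination needed to pass to the limit is what inflates the running time to $T=s^{d+1}D^{O(d^4)}$, the extra factor $s$ covering the determination, at each sample point, of the full sign vector $(\sign f_1,\dots,\sign f_s)$.

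\textbf{Face descriptions, adjacency, and the main difficulty.} Each face is then described by its sign vector together with a ``connecting'' predicate pinning down its connected component, obtained either by refining to a cylindrical arrangement along a generic coordinate direction in the style of Schwartz--Sharir \cite{SchwartzSharir,BasuBook} or by parametrized path-tracking between sample points; in either realization the auxiliary polynomials have degree $D^{O(d)}$ in each of the $d$ coordinates, so each face carries a semi-algebraic description of complexity $\left(D^{O(d^3)},T\right)$. Adjacency information --- which face lies in the closure of which cell --- is read off in the same pass, by pushing a sample point of each lower-dimensional face slightly into each incident cell and re-reading the sign vector, or directly from the cylindrical refinement, whose size is again $s^dD^{O(d)}$. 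The part requiring genuine care is the infinitesimal bookkeeping: one must verify that the $\epsilon,\zeta$-deformation opens every degenerate incidence into general position \emph{without} deleting or merging any sign condition, and that every polynomial arising in the successive eliminations (to remove $\epsilon,\zeta$ and to build the cylindrical refinement) stays of degree $D^{O(d)}$ per coordinate --- this is precisely what pins the exponents at $D^{O(d^3)}$ for the descriptions and $D^{O(d^4)}$ for the running time, as opposed to the doubly-exponential $D^{2^{\Theta(d)}}$ blow-up of naive cylindrical algebraic decomposition.
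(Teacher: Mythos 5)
The first thing to note is that the paper does not prove Theorem \ref{Theorem:ComplexityCell} at all: it is quoted as a black box from Basu, Pollack and Roy \cite[Theorem 16.11]{BasuBook} (see also \cite[Theorem 4.1]{AgMaSa} and \cite[Proposition 2.3]{EstherSearching}), so there is no in-paper argument to compare yours against. Judged as a reconstruction of the cited result, your combinatorial count is fine: invoking Theorem \ref{Theorem:ZonePolynomial} with $\V=\reals^d$, $l=d$ and $\G$ empty (or, more directly, Warren \cite{Warren} / Milnor--Thom) does give $s^dD^{O(d)}$ faces, and your deformation-plus-critical-points sketch is an accurate outline of how singly exponential algorithms produce one sample point per realizable sign condition of $\F$ in time $s^{d+1}D^{O(d)}$.

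The genuine gap is in the part that actually constitutes \cite[Theorem 16.11]{BasuBook}: computing, for each realizable sign condition, its decomposition into semi-algebraically connected components, each with a singly exponential semi-algebraic description, together with the adjacency structure. Your two suggested realizations do not deliver this within the stated bounds. A cylindrical refinement along a generic direction (Schwartz--Sharir style) has size and degree doubly exponential in $d$, which is exactly the blow-up the theorem is designed to avoid; and ``parametrized path-tracking between sample points'' presupposes a certificate that two sample points lie in the same connected component, which is the hard content of the theorem, not a routine postprocessing step. In \cite{BasuBook} this is done with the (parametrized) roadmap machinery: one computes roadmaps of the deformed varieties, uses them to link sample points within a component, and extracts quantifier-free descriptions of the components by projecting the roadmap data; it is this layer that produces the exponents $D^{O(d^3)}$ for the descriptions and $D^{O(d^4)}$ for the running time, and also yields the adjacency information (your sign-perturbation test for adjacency again needs to know which perturbed sample points land in which \emph{component}, not merely which sign condition). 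So as a pointer to the literature your argument is consistent with the citation, but as a self-contained proof it omits the connectivity/roadmap core and substitutes a step that would not stay singly exponential.
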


\subsection{Singly exponential quantifier elimination}\label{Subsec:SinglyExponential}
The proof of Theorem \ref{Theorem:NewRegularity} will use the following general property, known as {\it singly exponential quantifier elimination} \cite[Theorem 2.27]{BasuSurv} (also see \cite[Proposition 2.6]{EstherSearching}).

\begin{theorem}\label{Theorem:Elimination}
Let $s>1$, $a$ and $b$ be non-negative integers, and $G_1,\ldots,G_s\in \reals[x_1,\ldots,x_a,y_1,\ldots,y_b]$ be $s$ real polynomials in the variables $x_i$ and $y_j$, for $1\leq i\leq a$ and $1\leq j\leq b$, each of degree at most $D$. Let $\Upsilon$ be a boolean function in $s$ variables, and $\theta=(\theta_1,\ldots,\theta_s)\in \{-1,0,1\}^s$ a sign vector. Consider the formula
$$
\Phi(y)=\left(\exists x\in \reals^{a}\right) \Upsilon\left(\sign\left(G_1(x,y)\right)=\theta_1,\ldots,\sign\left(G_s(x,y)\right)=\theta_s\right)
$$
\noindent in the variable $y\in \reals^b$. Then there exists an quantifier-free formula $\Phi'(y)$ that is equivalent to $\Phi(y)$, of the form
$$
\Phi'(y)=\bigvee_{i=1}^I\bigwedge_{j=1}^{J_i}\left(\bigvee_{l=1}^{L_{i,j}}\sign\left(G'_{i,j,l}(y)\right)=\theta_{i,j,l}\right),
$$
where $G_{i,j,l}\in \reals[y_1,\ldots,y_b]$ is a polynomial of degree $\deg(G_{i,j,l})=D^{O(a)}$, $\theta_{i,j,l}\in \{-1,0,1\}$, and we have that
$I\leq s^{(a+1)(b+1)}D^{O(ab)}$, $J_i\leq s^{a+1}D^{O(a)}$, and $L_{i,j}\leq D^{O(a)}$, for all $1\leq i\leq I$, $1\leq j\leq J_i$, and $1\leq l\leq L_{i,j}$.

Moreover, there is an algorithm for computing $\Phi(y)$ in time $s^{(a+1)(b+1)}D^{O(ab)}$.
\end{theorem}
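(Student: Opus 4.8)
The plan is to follow the \emph{critical point method} in the parametrized form of Basu, Pollack and Roy (the source of the cited \cite[Theorem 2.27]{BasuSurv}). The naive strategy of eliminating the existential variables $x_1,\ldots,x_a$ one at a time is fatal here: a single-variable elimination step can roughly square both the number of polynomials and their degrees, so iterating $a$ times produces a tower of exponentials and hence doubly exponential bounds. The whole content of the statement is that one can collapse the \emph{entire} block $x\in\reals^{a}$ in one shot, at the price of raising the degree only to $D^{O(a)}$ and the polynomial count to $s^{(a+1)(b+1)}D^{O(ab)}$.

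First I would reduce the problem to \emph{parametrized bounded algebraic sampling}. Treat $y\in\reals^{b}$ as a formal tuple of parameters. For a fixed $y$, the family $\{G_1(\cdot,y),\ldots,G_s(\cdot,y)\}$ partitions $\reals^{a}_x$ into sign-invariant cells, and $\Phi(y)$ holds iff some such cell realizes a sign vector $\theta'$ with $\Upsilon(\theta'=\theta)=1$. Hence it suffices to produce, \emph{uniformly in $y$}, a finite set of sample points in $x$-space meeting every nonempty sign condition of $\{G_i(\cdot,y)\}$, together with the signs of the $G_i$ at those points as semi-algebraic conditions on $y$. To build these points one introduces infinitesimals $\zeta$ (and a second infinitesimal for boundedness): intersect with a large ball, and replace each $G_i$ by the deformed pair $G_i\pm\zeta$ so that the relevant real varieties become smooth, bounded, and of the expected dimension. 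On each deformed variety one then takes the critical points of the projection onto a generic line; these are the common zeros of a polynomial system --- the $G_i\pm\zeta$ together with suitable minors of a Jacobian --- consisting of $O(a)$ equations of degree $D^{O(1)}$ in $x$ with coefficients polynomial in $y$. Parametrized real root counting (via Thom encodings and Sturm--Habicht/Hermite sequences carried out with $y$ as parameters) yields, for each connected stratum of $y$-space, a univariate representation $x_i=h_i(T,y)/h_0(T,y)$ of the sample points with $\deg_T h_i = D^{O(a)}$ and $h_i$ polynomial in $y$; removing the infinitesimals is then routine. Substituting these parametrized points into the $G_i$ produces the output polynomials $G'_{i,j,l}(y)\in\reals[y]$ of degree $D^{O(a)}$, and organizing ``$\Upsilon$ evaluated at some sample point'' into disjunctive normal form over their signs gives $\Phi'(y)$ of the asserted shape, with the bounds on $I$, $J_i$, $L_{i,j}$ coming from counting $D^{O(a)}$ sample points, $D^{O(a)}$ sign vectors, and $s^{(a+1)(b+1)}D^{O(ab)}$ parameter strata; the algorithmic time bound $s^{(a+1)(b+1)}D^{O(ab)}$ follows from running this construction effectively.

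The main obstacle --- and the reason the theorem is nontrivial --- is making the sampling \emph{uniform in the parameters $y$} while keeping all degrees polynomial in $D^{a}$ rather than in $D^{ab}$: this is exactly why one must use parametrized Thom encodings, so that ``the number of real roots of a parametrized univariate family in a prescribed sign pattern'' is itself a semi-algebraic condition on $y$ cut out by polynomials of controlled degree. The combinatorial blow-up is avoided because, after the $x$-block has been frozen by the critical-point equations, only a single projection (to one line) is performed rather than $a$ nested ones. The bookkeeping of the infinitesimal deformations and the verification that their elimination does not inflate degrees is tedious but standard. Since all of this is precisely the content of \cite[Theorem 2.27]{BasuSurv} (equivalently the block-elimination chapter of \cite{BasuBook}), in this paper we invoke the statement as a black box and only need the explicit dependence of the parameters $I,J_i,L_{i,j}$ and $\deg(G'_{i,j,l})$ on $s$, $D$, $a$, $b$ recorded above.
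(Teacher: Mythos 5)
Your proposal is correct, and it matches the paper's treatment: the paper does not prove Theorem \ref{Theorem:Elimination} but cites it directly as a standard singly-exponential block quantifier-elimination result (\cite[Theorem 2.27]{BasuSurv}, \cite[Theorem 16.11 / Chapter 14]{BasuBook}), exactly as you conclude. Your additional sketch of the underlying machinery (parametrized bounded algebraic sampling via infinitesimal deformations, critical-point projections, and parametrized Thom encodings collapsing the whole $x$-block at once) accurately explains why the degree grows only to $D^{O(a)}$ and the count to $s^{(a+1)(b+1)}D^{O(ab)}$, but is not required for the paper's purposes since the result is invoked as a black box.
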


\noindent{\bf Remark.} Any condition of the form $\sign\left(G_{i,j,k}(y)\right)=\theta_{i,j,l}$ is equivalent to a Boolean combination of the conditions $(G_{i,j,k}(y)\leq 0)$ and $(-G_{i,j,k}(y)\leq 0)$. 
Hence, the set $\{y\in \reals^d\mid \Phi'(y)\}$ that is defined by the quantifier free formula $\Phi'(y)$ in Theorem \ref{Theorem:Elimination}, admits a semi-algebraic description whose complexity is bounded by $\left(D^{O(a)},s^{(a+1)(b+1)+a+1}D^{O(ab)}\right)$.

\subsection{Decomposing a hypersurface into monotone patches}

\noindent{\bf Definition.} Let $f\in \reals[x_1,\ldots,x_d]$. We say that that a direction $\vec{\theta}\in \SS^{d-1}$ is {\it good} for $f$ if any $\vec{\theta}$-oriented line in $\reals^d$ intersects the zero set $Z(f)$ at finitely many points.

\begin{lemma}\label{Lemma:GoodDirection}[\cite{SchwartzSharir}, pp. 304--305 and pp. 314--315]
Let $f\in \reals[x_1,\ldots,x_d]$. Then a uniformly chosen direction $\vec{\theta}\in \SS^{d-1}$ is good for $f$ with probability $1$.	
\end{lemma}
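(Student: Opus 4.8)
The plan is to show that the set of \emph{bad} directions — those $\vec{\theta}\in\SS^{d-1}$ for which some $\vec{\theta}$-oriented line meets $Z(f)$ in infinitely many points — is contained in a proper algebraic subset of the sphere, hence has $(d-1)$-dimensional measure zero, which is exactly the assertion. We may assume $f\not\equiv 0$, and in fact $m:=\deg(f)\ge 1$, since a nonzero constant polynomial has empty zero set and makes every direction trivially good.

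\textbf{Step 1 (reduce to lines contained in $Z(f)$).} For any line $\ell$ in direction $\vec{\theta}$, parametrized as $\ell=\{a+t\vec{\theta}:t\in\reals\}$, the restriction $t\mapsto f(a+t\vec{\theta})$ is a univariate polynomial; hence $\ell\cap Z(f)$ is either finite or all of $\ell$. Therefore $\vec{\theta}$ is bad if and only if $Z(f)$ contains some line parallel to $\vec{\theta}$.

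\textbf{Step 2 (the leading form must vanish at $\vec{\theta}$).} Write $f=f_m+f_{m-1}+\cdots+f_0$ as the sum of its homogeneous components, with $f_m\not\equiv 0$. If $\ell=\{a+t\vec{\theta}:t\in\reals\}\subseteq Z(f)$, then the polynomial $g(t):=f(a+t\vec{\theta})$ vanishes identically, and its coefficient of $t^m$ equals $f_m(\vec{\theta})$ (the lower-degree components of $f$ contribute only to lower powers of $t$). Hence $f_m(\vec{\theta})=0$, so every bad direction lies in $Z(f_m)\cap\SS^{d-1}$.

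\textbf{Step 3 (this offending set is null).} It remains to observe that $Z(f_m)\cap\SS^{d-1}$ has zero $(d-1)$-dimensional measure. Since $f_m$ is a nonzero homogeneous polynomial, it cannot vanish on a nonempty open subset of $\SS^{d-1}$: the sphere is connected and $f_m$ restricts to a real-analytic function there, so vanishing on an open piece would force $f_m\equiv 0$ on $\SS^{d-1}$, whence $f_m(x)=\|x\|^m f_m(x/\|x\|)\equiv 0$, a contradiction. Covering $\SS^{d-1}$ by the finitely many graph charts $\theta_i=\pm\sqrt{1-\sum_{j\neq i}\theta_j^2}$ over open balls in $\reals^{d-1}$, in each chart $f_m$ pulls back to a real-analytic function that is not identically zero, so its zero set has $(d-1)$-dimensional Lebesgue measure zero; summing over the finite atlas gives the claim, and hence a uniformly chosen $\vec{\theta}$ is good with probability $1$.

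The argument is short, and the only step requiring genuine care is Step 3: one must be sure that a proper algebraic (equivalently real-analytic) subset of $\SS^{d-1}$ is truly $(d-1)$-null rather than merely lower-dimensional. I would handle this via the finite graph atlas together with the classical fact that the zero set of a nonzero real-analytic function on an open subset of Euclidean space has Lebesgue measure zero (alternatively, by noting that a semialgebraic subset of $\SS^{d-1}$ of dimension strictly less than $d-1$ is null).
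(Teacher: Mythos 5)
Your argument is correct. Note that the paper does not supply its own proof of this lemma but defers to Schwartz and Sharir; your write-up is a clean self-contained route. The reduction is the standard one: a line $\ell=\{a+t\vec{\theta}\}$ is contained in $Z(f)$ exactly when the univariate polynomial $t\mapsto f(a+t\vec{\theta})$ vanishes identically, and the $t^{m}$-coefficient of that polynomial is $f_m(\vec{\theta})$, so every bad direction sits in $Z(f_m)\cap\SS^{d-1}$; you then correctly pass to measure zero. Two minor streamlining remarks. First, in Step~3 you can avoid invoking the identity theorem for real-analytic functions on the sphere: if $f_m$ vanished on a nonempty relatively open subset $V\subset\SS^{d-1}$, then by homogeneity it would vanish on the open cone $\{tx : t>0,\ x\in V\}\subset\reals^d$, and a polynomial vanishing on a nonempty open subset of $\reals^d$ is identically zero. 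Second, the connectedness of $\SS^{d-1}$ that you use fails for $d=1$, but that case is vacuous (for $f\not\equiv 0$ and $d=1$, $Z(f_m)\cap\SS^{0}=\emptyset$), so the gap is only cosmetic.
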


The following two properties were established by Agarwal, Matou\v{s}ek and Sharir \cite[Theorems 6.2 and 6.3]{AgMaSa}.

\begin{lemma}\label{Lemma:Patches}
	Let $f\in \reals[x_1,\ldots,x_d]$ be a $d$-variate polynomial of degree $D$, and suppose with no loss of generality that the direction of the $x_d$-axis is good for $f$. Then $Z(f)$ can be decomposed, in $D^{O\left(d^4\right)}$ time, into $D^{O(d)}$ $x_d$-monotone semi-algebraic patches, each of description complexity at most $\left(D^{O\left(d^3\right)},D^{O\left(d^4\right)}\right)$.  
\end{lemma}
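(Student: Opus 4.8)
\medskip
\noindent\textbf{Proof proposal.} This lemma is the single‑hypersurface case of a \emph{vertical (cylindrical) decomposition}, and I would prove it by reconstructing the cylindrical‑decomposition argument of Agarwal, Matou\v{s}ek and Sharir \cite{AgMaSa}, executed with the singly exponential real‑algebraic algorithms quoted above. First reduce to $f$ square‑free: passing to the square‑free part of $f$ leaves $Z(f)$ unchanged, is computable in time polynomial in $D^{d}$, and makes $f$ and $g:=\partial f/\partial x_{d}$ relatively prime (in characteristic zero a square‑free $f$ has no common factor with $\partial f/\partial x_{d}$ apart from factors independent of $x_{d}$, and these are excluded because the $x_{d}$‑axis is good for $f$; in particular $\deg_{x_{d}}f\geq 1$ unless $Z(f)=\emptyset$). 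Let $\pi:\reals^{d}\to\reals^{d-1}$ forget the last coordinate, and let $\Phi\in\reals[x_{1},\dots,x_{d-1}]$ be the product of the leading coefficient of $f$ in $x_{d}$ and $\mathrm{Disc}_{x_{d}}(f)$, so $\deg\Phi=D^{O(1)}$ and $Z(\Phi)$ is exactly the locus in $\reals^{d-1}$ where either $\deg_{x_{d}}f(c,\cdot)$ drops or $f(c,\cdot)$ acquires a multiple root.

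Over each connected component $C$ of $\reals^{d-1}\setminus Z(\Phi)$ the univariate polynomial $f(c,\cdot)$ keeps, for all $c\in C$, a fixed degree and a fixed number $m_{C}\leq D$ of simple real roots, which vary continuously with $c$ and never collide; hence $Z(f)\cap\pi^{-1}(C)$ splits into $m_{C}$ disjoint sheets $\{(c,\xi^{C}_{i}(c)):c\in C\}$, each the graph over $C$ of a semi‑algebraic function, and therefore an $x_{d}$‑monotone patch. A semi‑algebraic description of such a patch is produced from one for $C$ by appending the $O(1)$ sign conditions on $f$ and its $x_{d}$‑derivatives that pin down ``the $i$‑th real root of $f(c,\cdot)$'' (Thom's lemma); each such elimination step is performed by Theorem~\ref{Theorem:Elimination}.

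The part $Z(f)\cap\pi^{-1}(Z(\Phi))$, which lies over the lower‑dimensional variety $Z(\Phi)$, need not be $x_{d}$‑monotone — $\pi$ restricted to a connected piece of $Z(f)\setminus Z(g)$ is a local homeomorphism but not necessarily an injection — so I would recurse on dimension: compute the arrangement of $Z(\Phi)$ in $\reals^{d-1}$ by Theorem~\ref{Theorem:ComplexityCell}, and over each cell $C$ of positive codimension repeat the whole construction for $f$ (and $O(1)$ of its $x_{d}$‑derivatives) restricted to the cylinder $C\times\reals$, equivalently after a generic linear projection sending $C$ injectively into a lower‑dimensional coordinate subspace. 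Assembled globally this is a cylindrical algebraic decomposition of $\reals^{d}$ that is $f$‑invariant; as $f\not\equiv0$, no full‑dimensional ``sector'' cell can lie in $Z(f)$, so the cells contained in $Z(f)$ are precisely the ``section'' cells — graphs over their bases, hence $x_{d}$‑monotone patches. That a good $x_{d}$‑direction exists at all is Lemma~\ref{Lemma:GoodDirection}.

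The one real difficulty is the accounting. A naive recursion squares the degree at each of the $d$ levels and gives a doubly exponential bound, whereas the lemma asserts only $D^{O(d)}$ patches of description complexity $D^{O(d^{3})}$, computed in time $D^{O(d^{4})}$. The point is that one keeps the polynomials active at every level at degree $D^{O(1)}$ by recursing on $f$ and finitely many of its derivatives, restricted to lower‑dimensional coordinate subspaces, rather than on iterated resultants, so that only the number of effective coordinates decreases; in dimension $j$ the number of cells created is then $D^{O(j)}$ by the Milnor--Thom / Barone--Basu bounds (Theorem~\ref{Theorem:ZonePolynomial}, Corollary~\ref{Corol:CrossFewCellsPolynomial}), each of complexity $D^{O(j^{3})}$ and computable in time $D^{O(j^{4})}$ (Theorem~\ref{Theorem:ComplexityCell}), and summing over $j\leq d$ gives the stated estimates. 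Making this bookkeeping precise — which is exactly the technical content of \cite{AgMaSa} — is the main obstacle.
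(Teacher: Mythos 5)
The paper does not contain its own proof of Lemma~\ref{Lemma:Patches}; the result is cited verbatim from Agarwal, Matou\v{s}ek and Sharir \cite[Theorems~6.2 and~6.3]{AgMaSa}, so there is no in-house argument to compare against. Your sketch is a reasonable reconstruction of the CAD-style decomposition underlying that citation: the discriminant locus $\Phi$, the sheets over the cells of $\reals^{d-1}\setminus Z(\Phi)$ pinned down by Thom encodings, and the recursion over the lower-dimensional critical locus are all the right ingredients, and you are candid that the complexity bookkeeping --- which is the actual content of the lemma --- is deferred to \cite{AgMaSa}.

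Two places where the sketch is thinner than it reads. First, ``keeps the polynomials active at every level at degree $D^{O(1)}$ by recursing on $f$ and finitely many of its derivatives'' is too narrow: to stratify $Z(\Phi)$ by the root-multiplicity pattern of $f(c,\cdot)$ one must also carry the subresultant and leading-coefficient polynomials, whose degrees are $\Theta(D^{2})$ --- still $D^{O(1)}$, so the asymptotics survive, but ``derivatives alone'' do not suffice. Second, and more substantively, ``repeat the whole construction for $f$ restricted to the cylinder $C\times\reals$, equivalently after a generic linear projection sending $C$ injectively into a lower-dimensional coordinate subspace'' glosses over the fact that $f$ composed with the semi-algebraic inverse of that projection is no longer a polynomial, so the recursion cannot literally re-run the same construction on ``$f$.'' What actually closes the accounting is retaining the original degree-$D^{O(1)}$ system in $\reals^{d-1}$, using the dimension-sensitive Barone--Basu bound (Theorem~\ref{Theorem:ZonePolynomial}) to control the cell count on the lower-dimensional stratum (note this gives $D^{O(d)}$, not $D^{O(j)}$ as you wrote, although the discrepancy is harmless for the final tally), and paying for the semi-algebraic patch descriptions only once, at the end, via quantifier elimination (Theorem~\ref{Theorem:Elimination}) and the arrangement algorithm of Theorem~\ref{Theorem:ComplexityCell}. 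That decoupling --- a small patch count $D^{O(d)}$ versus a large per-patch description complexity $\left(D^{O(d^{3})},D^{O(d^{4})}\right)$ --- is exactly what your sketch asserts but does not verify, and a naive reading of your last paragraph, in which the $D^{O(j^{3})}$ cell complexity feeds back into the next recursive level, would not terminate with the stated bounds.
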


\section{Proof of Theorems \ref{Theorem:NewRegularitySharp} and \ref{Theorem:WeaklySharp}}\label{Sec:Generic}


\medskip
\noindent{\bf Proof of Theorem \ref{Theorem:WeaklySharp}.} Let us fix the quantitites $d>0$, $k>1$, $\Delta\geq 0$, $s\geq 1$, and the $n$-point set $P$. 
Since the semi-algebraic relations $\psi\in \Psi_{d,k,\Delta,s}$ can determine only finitely many distinct subsets 
$$
\P_\psi=\{(p_1,\ldots,p_k)\in {P^{\times k}}\mid \psi(p_1,\ldots,p_k)=1\},
$$
where $P^{\times k}$ denotes the collection of all the $k$-sequences $(p_1,\ldots,p_k)$ of distinct elements $p_i\in P$, choosing a sufficiently small $\vartheta=\vartheta(P)>0$ guarantees that no such subset $\P_\psi$, that is determined by a sharp $k$-wise relation $\psi\in\Psi_{d,k,\Delta,s}$ over $P$, is ``altered" by an $\vartheta$-perturbation of $P$. 


Fix $r=\lceil\tilde{c}/\eps^{d}\rceil$, where $\tilde{c}=\tilde{c}(d,k,\Delta,s)>1$ is a sufficiently large constant to be fine-tuned in the sequel. Applying Theorem \ref{Thm:PolynomialPartition} to the $\vartheta$-perturbed set $P$ yields an $r$-partitioning polynomial $g\in \reals[x_1,\ldots,x_d]$.  
Recall that the complement $\reals^d\setminus Z(g)$ is comprised of $K\leq c_dr$ pairwise-disjoint cells $\omega_{1},\ldots,\omega_{K}$, so that each cell $\omega_{i}$ cuts out a subset $U_{i}=P\cap \omega_{i}$ of size at most $|P|/r\leq \eps^d n/\tilde{c}^d\leq \eps|V|$. 
Furthermore, Corollary \ref{Corol:Perturbed} implies that the points of $P\cap Z(g)$ comprise the ``zero-part" $U_{0}$ whose cardinality $n_0$ satisfies $n_0\leq c_0r\leq 2c_0\tilde{c}/\eps^d$, where $c_0$ denotes a suitable constant that depends only on the dimension $d$.
It can be assumed, with no loss of generality, that $n\geq 4c_0\tilde{c}/\eps^d\geq 2n_0$, or else our almost-regular partition is comprised of a single set $U_0=P$ (to which end we will choose $c\geq 4c_0\tilde{c}$).

Fix a $k$-wise relation $\psi\in \Psi_{d,k,\Delta,s}$ whose semi-algebraic description $(f_1,\ldots,f_s,\Phi)$ satisfies $\deg(f_i)\leq \Delta$ for all $1\leq i\leq s$. 
It suffices to show that, given a suitably large choice of the constant $\tilde{c}>0$, we have that
\begin{equation}\label{Eq:Generic}
	\sum |U_{i_1}|\cdot\ldots\cdot |U_{i_k}|\leq \eps k!{n-n_0\choose k}
\end{equation}
where the sum on the left hand side is taken over all the $k$-families $(U_{i_1},\ldots,U_{i_k})$, with distinct indices $1\leq i_j\leq K$, that are not $\psi$-homogeneous.

To this end, we ``lift" $g$ so as to subdivide the space $\reals^{d\times k}$ of the $k$-tuples $(\hat{x}_1,\ldots,\hat{x}_k)$ of point $\hat{x}_i\in \reals^d$. 
For each $1\leq i\leq k$ we define the polynomial $\tilde{g}_i\in \reals[x_{1,i},\ldots,x_{d,i}]$ by applying $g$ to the coordinates of the $i$-th point $\hat{x}_i$; that is $\tilde{g}_i:=g\left(x_{1,i},\ldots,x_{d,i}\right)$. Then the product 
$$
\tilde{g}:=\prod_{i=1}^k \tilde{g_i}=\prod_{i=1}^k g\left(x_{1,i},\ldots,x_{d,i}\right),
$$ 
\noindent  is a polynomial in the $dk$ coordinates of $\reals^{d\times k}$, whose degree satisfies $\deg\left(\tilde{g}\right)=O\left(kr^{1/d}\right)=O\left(r^{1/d}\right)$.

Let $\left(U_{i_1},\ldots,U_{i_k}\right)$ be a $k$-family that is {\it not} $\psi$-homogeneous, and that does not include $U_0$. 
Fix a pair of $k$-tuples $\hat{p}=(p_1,\ldots,p_k)\in U_{i_1}\times\ldots\times U_{i_k}$ and $\hat{q}=(q_1,\ldots,q_k)\in U_{i_1}\times\ldots\times U_{i_k}$, with $\psi(p_1,\ldots,p_k)\neq \psi(q_1,\ldots,q_k)$.
Since $\omega_{i_1},\ldots,\omega_{i_k}$ are path connected cells in $\reals^d\setminus Z(g)$, so is their Cartesian product 
$$
\tilde{\omega}_{i_1,\ldots,i_k}=\omega_{i_1}\times \ldots\times\omega_{i_k},
$$ 
\noindent which is easily seen to constitute a single open cell in $\reals^{d\times k}\setminus Z\left(\tilde{g}\right)$.  Hence, $\hat{p}$ and $\hat{q}$ must be connected by a path $\xi:[0,1]\rightarrow \tilde{\omega}_{i_1,\ldots,i_k}\subseteq \reals^{d\times k}$, which does not leave $\tilde{\omega}_{i_1,\ldots,i_k}$. 
Since $\psi(\hat{p})\neq \psi(\hat{q})$, there must be an index $1\leq h\leq s$ so that $\sign\left(f_h(p_1,\ldots,p_k)\right)\neq \sign\left(f_h(q_1,\ldots,q_k)\right)$. Hence, there must exist a $k$-tuple $(u_1,\ldots,u_k)\in {\sf Im}(\xi)$ that belongs to $\tilde{\omega}_{i_1,\ldots,i_k}\cap Z\left(f_h\right)$. 

To recap, every ``$\psi$-irregular" $k$-family $\left(U_{i_1},\ldots,U_{i_k}\right)$ gives rise to a distinct cell $\omega_{i_1,\ldots,i_k}\subseteq \reals^{d\times k}\setminus Z(\tilde{g})$, which crosses at least one of the surfaces $Z(f_h)$. However, in view of Corollary \ref{Corol:CrossFewCellsPolynomial}, the number of the latter cells
is only 
$\displaystyle
O\left(s\Delta r^{\frac{dk-1}{d}}\right)=O\left(r^{k-1/d}\right)$, so that the left hand side of (\ref{Eq:Generic}) is only $O\left(r^{k-1/d}\cdot (n/r)^k\right)=O(n^k/r^{1/d})$. Hence, a sufficiently large choice of $\tilde{c}=\tilde{c}(d,k,\Delta,s)$ and $c=c(d,k,\Delta,s)\geq 4\tilde{c}c_0$ guarantees that, with $|P|\geq 2n_0=2|U_0|$, the partition of $P\setminus U_0=U_1\uplus\ldots\uplus U_K$ is indeed almost $(\eps,\psi)$-regular, and that its cardinality is at most $c/\eps^d$.
 $\Box$
 
 \medskip
\noindent{\bf Remark.} A brief examination of the proof of Theorem \ref{Theorem:WeaklySharp} implies that the implied dependence on the constant parameters $\Delta$ and $s$, within the bound on the partition size, is of the order $O((s\Delta)^d)$, with additional hidden constants that depend only on $d$ and $k$.

\medskip
\noindent{\bf Proof of Theorem \ref{Theorem:NewRegularitySharp}.} To deduce Theorem \ref{Theorem:NewRegularitySharp} from the statement of Theorem \ref{Theorem:WeaklySharp}, we distinguish between two cases. If $n=|P|=O\left(1/\eps^{d+1}\right)$ then our regular partition of $P$ is comprised of singleton sets $\{p\}$, for $p\in P$. Otherwise, if $|P|\geq c/(100\eps)^{d+1}$ say, the preliminary almost-regular partition is accomplished by applying Theorem \ref{Theorem:WeaklySharp} to $P$ with a slightly smaller parameter $\eps/100$, and noting that the set $U_0$ too has cardinality at most $\eps|P|/10$ (so that its contribution to the left hand side of (\ref{Eq:Almost}), for any $\psi\in \Psi_{d,k,\Delta,s}$, does not exceed $k! \cdot \eps {n\choose k}/10$). $\Box$

\section{Proof of Theorems \ref{Theorem:NewRegularity} and \ref{Theorem:Weakly}}\label{Sec:Main}
If $|P|\leq 1/\eps$, then the regular partition is again comprised of $n$ singletons $\{p\}$, with $p\in P$. 
Otherwise, the statement of Theorem \ref{Theorem:NewRegularity} will follow by applying the refinement of Lemma \ref{Lemma:AlmostRegular} to an almost $(\eps/10,\Psi_{d,k,\Delta,s})$-regular partition $P=U_1\uplus \ldots \uplus U_K$, which can be obtained via Theorem \ref{Theorem:Weakly}. Hence, it suffices to establish the latter theorem.

To this end, let us fix $\delta':=\delta/(100dk)$ and an additional constant $\eta=\delta'/(100dk)$.
For the sake of brevity, we adopt the notation $x\lll_\eta y$ whenever $x=O\left(y^{\eta/(10dk)}\right)$ holds for positive integers $x,y$.\footnote{If $x,y\in (0,1)$, this notation means that $x=O\left(y^{\frac{10dk}{\eta}}\right)$.}
To facilitate our construction and its subsequent analysis, we also fix a suitably large constant $t=t(d,k,\Delta,s,\eta)$.

Recall that for the sake of almost $(\eps,\psi)$-regularity, it is necessary that the cardinality of each part $U\in \Pi(P,\eps)$ does not exceed $\eps |P|$, which is possible only for the point sets $P$ of cardinality at least $1/\eps$. As a matter of fact, the $O(1/\eps^{d+\delta})$-size partition $\Pi(P,\eps)$ in the sequel will apply to all finite sets $P$, while still yielding the inequality (\ref{Eq:Almost}) for all $\psi\in \Psi_{d,k,\Delta,s}$, and the bound $|U|\leq \max\{1,\eps^{d+\delta'} |P|\}$ on the respective cardinalities of all member sets $U\in \Pi(P,\eps)$.

\medskip
\noindent{\bf The intermediate partitions $\Pi^i$.} To obtain an almost $(\eps,\Psi_{d,k,\Delta,s})$-regular partition $\Pi(P,\eps)=\Pi_{d,k,\Delta,s,\delta}(P,\eps)$ of $P$, we construct a sequence of $I+1\leq \log_t \left(1/\eps^{d+\delta'}\right)+i_0$ progressively refined partitions $\Pi^0=\{P\},\ldots,\Pi^I=\Pi(P,\eps)$, where $i_0>d$ is a suitable constant to be determined in the sequel. 

\medskip
\noindent{\it The intrinsic dimension $l(U)$.} For each $0\leq i\leq I$, every member set $U$ of the $i$-th partition $\Pi^i=\Pi^i(P)$ is assigned {\it intrinsic dimension $l(U)\in \{0,\ldots,d\}$}. 
 It will maintained that $l(U)=0$ holds if and only if $|U|=1$, and otherwise the cardinality $|U|$ is bounded from below by a certain constant $n_0=n_0(d,k,\Delta,s,\delta)$ that will be determined in the sequel. 

Every element $p\in U$ is assigned weight $\mu(p)=\mu_i(p):=t^{l(U)-d}$, which is bound to decline whenever $p$ is ``passed on" to a subset $W\in \Pi^{i+1}$ of smaller intrinsic dimension $l(W)<l(U)$. 
In addition, we will use $\mu(U)=\mu_i(U)$ to denote the aggregate weight $\sum_{p\in U}\mu_i(p)$ of a subset $U\in \Pi^i$. 

\medskip
\noindent{\it The set $S(U)$, and the projection $\pi_S:\reals^d\rightarrow \reals^l$.} In addition to $l(U)$, every set $U\in \Pi^i$ will be assigned 

\begin{enumerate}
	\item an auxiliary semi-algebraic set $S=S(U)$, whose description complexity is bounded by some constants $(\Delta_l,s_l)$ that depend on $d,k,l,s,\Delta$ and $\delta$, and 
	\item an affine projection function $\pi=\pi_S:\reals^d\rightarrow \reals^{l}$ which is injective over $S(U)$.
\end{enumerate}
 
 Specifically, if $l(U)=0$ then $U=S(U)$ is a singleton $\{p\}$, whose semi-algebraic description complexity is clearly bounded by the pair $(1,d)$ (as $U$ is an intersection of $d$ hyperplanes in $\reals^d$), and the projection $\pi_S$ trivially sends $p$ to $0\in \reals^0$.


It can be assumed, with no loss of generality, that the cardinality of $P$ is larger than $n_0$ or, else, our almost $(\eps,\Psi_{d,k,\Delta,s})$-partition is comprised of at most $n_0=O(1)$ singletons.
The construction begins with $\Pi^0=\{P\}$, $l(P)=d$ and $S(P)=\reals^d$. 
Every subsequent partition $\Pi^{i+1}$ is derived by replacing every element $U\in \Pi^{i}$ with its so called {\it $t$-refinement} $\Lambda_t(U)$ -- a subdivision of $U$ into $O(t^{1+\eta/(10k)})$ pairwise disjoint subsets $W\in \Lambda_t(U)$ whose intrinsic dimensions assume values $l(W)\in \{l,l-1,0\}$, and their aggregate weights satisfy $\mu_{i+1}(W)\leq \mu_i(U)/t$. If $l(U)>0$, we have that $S(W)\subseteq S(U)$ for all subsets $W\in \Lambda_t(U)$ of positive intrinsic dimension $l(W)>0$, and equality $S(U)=S(W)$ arises if and only if $l(U)=l(W)$. 

\medskip
In what follows, we use $\F^i_k=\F^i_k(P)$ to denote the collection of all the $k$-families $(U_1,\ldots,U_k)$ within $\Pi^i$. Given a $k$-wise relation $\psi\in \Psi_{d,k,\Delta,s}$,  the ``$\psi$-irregularity" of the $i$-th partition $\Pi^i$ of $P$ will be measured using the following amortized function

$$
\rho(\Pi^{i},\psi):=\sum \mu\left(U_1\right)\cdot \ldots\cdot \mu\left(U_k\right),
$$

\noindent where the sum is taken over all the $k$-families $(U_{1},\ldots,U_{k})\in \F^i_k$ that are not $\psi$-homogeneous; in what follows, we denote the latter collection by $\G^i_k=\G^i_k(P,\psi)$.




	



\subsection{The $t$-refinement $\Lambda_t(U)$}\label{Subsec:Refinement}

At the center of our almost $(\eps,\Psi_{d,k,\Delta,s})$-regular partition lies the notion of the {\it $t$-refinement} $\Lambda_t(U)$ of a point set $U\subset \reals^d$. 
It will be assumed in the sequel that the set $U$ is ``equipped" with a pre-assigned semi-algebraic set $S=S(U)\supseteq U$, whose description complexity is bounded by a pair of constants $(\Delta_l,s_l)$, and with a projection function $\pi=\pi_S:\reals^d\rightarrow \reals^l$ that is injective over $S$.\footnote{Besides these quantities $l(U)$, $S=S(U)$, $\Delta_l,s_l$ and $\pi_S$, the $t$-refinement $\Lambda_t(U)$ is self-contained in the sense that it does not depend on the superset $P$, or its partition $\Pi^i$ that includes $U$.}

If $U$ has intrinsic dimension $l(U)=0$, then we have that $|U|=\{p\}$ for some $p\in P$. Hence, we set $\Lambda_t(U)$ consists of the singleton $U$, which is assigned the same intrinsic dimension $0$, same set $S(U):=U$, and same projection function $\pi_S$ mapping $\reals^d$ to $0\in \reals^0$. Hence, it can be assumed in what follows that $l(U)>0$ and, therefore, $|U|>n_0$.

The refinement $\Lambda_t(U)$ will be constructed through repeated application of the polynomial $r$-partition of Theorem \ref{Thm:PolynomialPartition} in 
 $\reals^l$. 
As it will become clear in the sequel, it is instrumental to select such a partition parameter $r=r_l$ that depends on the intrinsic dimension $l=l(U)$, and sharply {\it increases} every time we proceed to refine subsets $W\in \Lambda_t(U)$ with {\it lower} intrinsic dimension $l(W)<l(U)$. To this end, we will use a sequence of $d$ auxiliary constants $r_1,\ldots,r_d$, which depend only on $d$, $k$, $\Delta$, $s$ and $\eta$, and satisfy\footnote{Most of our asymptotic bounds will be expressed in the terms of the parameter $t$, which will be chosen to be much larger than the constants $r_1,\ldots,r_d$. Thus, bounds of the form $O(t^a)$ will to used to suppress $r_l$-factors for $1\leq l\leq d$.}
$$
\max(\{c_l\mid 1\leq l\leq d\}\cup \{\Delta,s,k,d\})\lll_\eta r_d\lll_\eta r_{d-1}\lll_\eta \ldots\lll_\eta r_1,
$$

\noindent where $c_l$ denotes the constant of proportionality in the polynomial partition of Theorem \ref{Thm:PolynomialPartition} adapted to $\reals^l$. In addition,
we set $n_0:=r_1$.
Our induction in the intrinsic dimensions $l(U)$ of the sets $U$ will also involve a fictitious parameter $r_{d+1}=0$.

\medskip
\noindent{\bf The $t$-refinement tree $\T_t(U)$.} The definition of $\Lambda_t(U)$ involves an auxiliary tree-like structure $\T_t=\T_t\left(U\right)$.
Every node $\alpha$ in $\T_t(U)$ is assigned a {\it canonical subset} $W_\alpha\subseteq U$, which is set to $U$ for the root node $\alpha=\alpha_0$. 
 The rest of the construction of $\T_t(U)$ proceeds from $\alpha_0$ in top-down fashion, and bottoms out at certain {\it terminal nodes} whose canonical subsets $W_\alpha$ yield the  $t$-refinement $\Lambda_t(U)$ of $U$. 
 
 The refinement tree $\T_t(U)$ is overly inspired by the ``bounded fan-out" hierarchical decomposition of Agarwal, Matou\v{s}ek and Sharir \cite[Section 5]{AgMaSa}, with the following key differences: (a) the construction of $\T_t(U)$ is carried out only to the maximum level 
$h_l:=\log_{r_l}t$,\footnote{For the sake of brevity, we routinely use $\log_a b$ to denote the quantity $\lceil\log_a b\rceil$.} (b) the lower-dimensional subsets $W\subset U$ that fall in the zero-sets of the $r_l$-partitioning polynomials, are immediately passed on to $\Lambda_t(U)$.

To further split the canonical subset $W_\alpha$, that is assigned to an {\it inner} (i.e., non-terminal) node $\alpha$, we apply the polynomial partition of Theorem \ref{Thm:PolynomialPartition} to the $l$-dimensional projection $W^*_\alpha=\pi(W_\alpha)$, which yields an {\it $l$-variate} $r_l$-partitioning polynomial $g_\alpha\in \reals[y_1,\ldots,y_{l}]$ of maximum degree $D_l=O\left(r_l^{1/l}\right)$. (Here $y_1,\ldots,y_l$ are the $l$ coordinates of $\reals^l\supseteq \pi(S)$.) Notice that the zero set $Z_\alpha:=Z(g_\alpha)$ of $g_\alpha$ subdivides the points of $W^*_\alpha\setminus Z_\alpha$ into at most $c_lr_l$ subsets, each of cardinality is at most $|W_\alpha|/r_l$. 
(With a nominal abuse of notation, we will also use $g_\alpha$ and $Z_\alpha$ to denote, respectively, the pull-back polynomial $g_\alpha\circ \pi$ in $\reals[x_1,\ldots,x_d]$, and its zero set within $\reals^d$.)


\medskip
Every terminal node $\alpha$ 
falls into the following (non-exclusive) categories:
\begin{itemize}
\item[(i)] the level of $\alpha$ in $\T_t$ is $h_l$,
\item[(ii)] we have that $|W_\alpha|\leq n_0$, or
\item[(iii)] the $l$-dimensional projection $W^*_\alpha=\pi(W_\alpha)$ lies in the zero set $Z_{\alpha'}=Z(g_{\alpha'})$ of the parent node $\alpha'$ of $\alpha$.
\end{itemize}


\medskip
To construct the subtree rooted at an inner node $\alpha$, we distinguish between two classes of points: (a) the points of $W_\alpha\setminus Z_\alpha$, whose projections lie in $r'\leq c_lr_l$ connected regions (i.e., cells) $C_1,\ldots,C_{r'}\subset \reals^{l}\setminus Z_\alpha$, and (b) the points of $W_\alpha\cap Z_\alpha$.
If the interior of a region $C_j\subset \reals^{l}\setminus Z_\alpha$ contains at least one projected point of $W^*_\alpha$, then $\alpha$ is ``attached" a child node $\alpha_j$ whose canonical subset $W_{\alpha_j}:=W_{\alpha}\cap \pi^{-1}(C_j)$ satisfies $|W_{\alpha_j}|\leq |W_\alpha|/r_l$.  


Note that if $l(U)=1$, then the $r_l$-partitioning polynomials correspond to subdivisions of $\reals^1$ into at most $c_1r_1$ continuous intervals $\lambda$ with $|W^*_\alpha\cap \lambda|\leq |W^*_\alpha|/r_l$, and it can be assumed that $Z_\alpha\cap W_\alpha=\emptyset$.

The ``leftover" points of $W_\alpha\cap Z_\alpha$ are dispatched by 1. first applying Lemma \ref{Lemma:GoodDirection} to obtain a so called {\it good direction} ${\theta}_\alpha\in \SS^{l-1}$ with respect to the $(l-1)$-dimensional hypersurface $Z_\alpha$ {\it within $\reals^l$}, so that $|Z_\alpha\cap \ell|<\infty$ would hold for any $\theta_\alpha$-parallel line $\ell$, and then 2. invoking Lemma \ref{Lemma:Patches} to further subdivide $Z_\alpha$ into $z=r_l^{O(1)}$ $\theta_\alpha$-monotone semi-algebraic patches $Z_1,\ldots,Z_z$, whose semi-algebraic description complexities are bounded by $\left(\tilde{\Delta}_{l},\tilde{s}_{l}\right)$, with $\tilde{\Delta}_l=r_l^{O(l^2)}$ and $\tilde{s}_l=r_l^{O(l^3)}$.
As a result, every point of $W_\alpha\cap Z_\alpha$ belongs to exactly one of the semi-algebraic subsets $S_j:=S(U)\cap \pi^{-1}(Z_j)$. 
Any such semi-algebraic set $S_j\subseteq \reals^d$, that encompasses at least one point of $W_\alpha$, gives rise to a child $\alpha'_j$ of $\alpha$ with $W_{\alpha'_j}:=W_\alpha\cap S_j$.

It is immediate to check that every point of $U$ ends up in the canonical subset $W=W_\alpha$ of exactly one terminal node $\alpha$. In what follows, we distinguish between two classes of such canonical subsets $W_\alpha$.

\medskip
\noindent{\bf Ordinary sets.} If the canonical set $W=W_\alpha$ of a terminal node $\alpha$ satisfies $|W|>n_0$, and it lies outside the zero set $Z_{\alpha'}$ of its parent node $\alpha'$, 
	 then $W_\alpha$ becomes an {\it ordinary set} of $\Lambda_t(U)$, and is assigned the same intrinsic dimension $l>0$, the same set $S(W):=S(U)$, and the same projection function $\pi_S:\reals^d\rightarrow \reals^l$.
	
	 Since such a terminal node $\alpha$ must lie at level $h=\log_{r_l}t$,  	
tracing the descending sequence $\alpha_0,\ldots,\alpha_{h-1}$ of its $h$ ancestor nodes yields $|W|\leq |U|/r_l^h=|U|/t$, so that $\mu(W)\leq \mu(U)/t$. 
	
 In addition to the set $S(W)=S(U)$, every ordinary set $W=W_\alpha$ in $\Lambda_t(U)$ will be assigned a finer semi-algebraic set $A(W):=S(W)\cap \pi^{-1}_S\left(\bigcap_{j=0}^{h-1}C_j\right)$, where $C_j$ denotes the unique cell in $\reals^l\setminus Z_{\alpha_j}$ that contains $W^*$ for $0\leq j\leq h-1$.

If $l(U)=0$, then $U$ is also an ordinary set of $\Lambda_t(U)=\{U\}$, and we set $A(U)=U=\{p\}$ for some $p\in P$.


\medskip	
\noindent{\bf Special sets.} If the set $W=W_\alpha$ is not ordinary, it becomes a {\it special set} of $\Lambda_t(U)$, whose intrinsic dimension is strictly smaller than $l$. As a result, the updated weight of each point $p\in W$ is at most $\mu(p)\leq t^{(l-1)-d}=\mu(p)/t$, which again yields $\mu(W)\leq \mu(U)/t$.

Specifically, if its cardinality $|W_\alpha|$ is at most $n_0$, then $W_\alpha$ ``contributes" to $\Lambda_t(U)$ $|W_{\alpha}|$ singleton subsets $\{p\}$ of intrinsic dimension $l(\{p\})=0$, one for each point $p\in W_\alpha$.

Otherwise, if $|W_\alpha|>n_0$, then $W_\alpha$ is contained in the zero set $Z_{\alpha'}=Z(g_{\alpha'})$ of its parent node $\alpha'$. Hence, its intrinsic dimension is reduced to $l-1$, and $S(W_\alpha)$ is set to $S_j:=S(U)\cap \pi^{-1}(Z_j)$, where $Z_j$ denotes the unique $\theta_{\alpha'}$-monotone semi-algebraic patch of $Z_{\alpha'}$ that contains $W^*_\alpha$. Accordingly, the new projection mapping $\pi_{S_j}$ is given by the ``$\pi_{\theta_{\alpha'}}$-pull-back" $\pi_{{\theta_{\alpha'}}}\circ\pi$ of $\pi$, which is clearly injective over $S_j$.

\begin{lemma}\label{Lemma:Refinement}
	Let $U$ be a set of positive intrinsic dimension $l=l(U)>0$, and suppose that the description complexity of $S(U)$ is at most $(\Delta_l,s_l)$. Then the $t$-refinement $\Lambda_t(U)$ of $U$ consists of $O\left(t^{1+\eta/(10k)}\right)$ sets $W$ which satisfy $\mu(W)\leq \mu(U)/t$.
	Furthermore, the following holds for each part $W\in \Lambda_t(U)$. 
	\begin{enumerate}
		\item If $W$ is an ordinary set, then we have that $l(W)=l(U)$ and $S(W)=S(U)$ (so that the description complexity of $S(W)$ is too bounded by $(\Delta_l,s_l)$), while the description complexity of the subset $A(W)\subseteq S(W)$ is bounded by $\left(\max\{\Delta_l,r_l^c\},s_l+r_l^c\log t\right)$.
		\item If $W$ is a special set, then we have that $l(W)=l-1$, and the description complexity of $S(W)$ is bounded by $(\Delta_l+r_l^c,s_l+r_l^c)$.
	\end{enumerate}
	
	Here $c>0$ is a constant that depends only on the dimension $d$.
\end{lemma}

\begin{proof} 
The bound on $\mu(W)$, for $W\in \Lambda_t(U)$, is an immediate corollary of our definition of $\Lambda_{t}(U)$ via the terminal nodes of $\T_{t}(U)$. 
To bound the cardinality of $\Lambda_{t}(U)$, it is enough to observe that: (i) every node $\alpha$ in the refinement tree $\T_t(U)$ has only $r_l^{O(1)}$ children,  (ii) at most $c_{l}\cdot r_{l}\leq r_l^{1+\eta/10}$ of these children of $\nu$ are non-terminal, and (iii) the depth of $\T_{t}(U)$ is limited by $h_l=\log_{r_l}t$. 

Now let $W\in \Lambda_t(U)$ be a part with $l(W)>0$. If $W$ is ordinary, then we clearly have that $S(W)=S(U)$. Furthermore, the bound on the complexity of $A(W)$ stems from the fact that it is an intersection of $S(W)$ with $\log_{r_l}t$ ``prismatic" cells $\pi^{-1}_S(A_j)$, whose description complexities can be bounded by $\left(r_l^{O(d^3)},r_l^{O(d^4)}\right)$ via Theorem \ref{Theorem:ComplexityCell}. 
\end{proof}

	 
	

\subsection{The analysis}\label{Subsec:Analysis}

Recall that every new partition $\Pi^i$ is obtained by applying the $t$-refinement $\Lambda_t(U)$ to all sets $U\in \Pi^{i-1}$. 
Let us first summarize the most immediate properties of our partitions $\Pi^i$, which follow as a direct corollary of Lemma \ref{Lemma:Refinement}.

\begin{lemma}\label{Lemma:TreeElementary} 

The following statement holds with any sufficiently large choice of $t$.

Let $P$ be a finite point set in $\reals^d$ and $i\geq 0$. Then the $i$-th partition $\Pi^i=\Pi^i(P)$ is comprised of $O\left(t^{i(1+\eta/(5k))}\right)$ parts which satisfy 
$\mu(U)\leq n/t^i$. 

Moreover, let $U\in \Pi^i$ be a subset of intrinsic dimension $l=l(U)>0$, and $W\in \Lambda_t(U)$ be a ``descendant" set of intrinsic dimension $l(W)>0$ in $\Pi^{i+1}$. Then the description complexity of $S(W)$ is bounded by the pair $(\Delta_l,s_l)$ with $\max\{\Delta_l,s_l\}\leq r_{l+1}^{c}$. Furthermore, if $W$ is an ordinary part of $\Lambda_t(U)$, then it is assigned an additional semi-algebraic set $A(W)\subseteq S(W)$ whose description complexity is bounded by the pair $(\Delta'_l,s'_l)$, with $\Delta'_l\leq r_l^c$ and $s'_l\leq r_l^c\log t$.

Here $c$ is a constant that depends only on the dimension $d$.
\end{lemma}

In view of our choice $I=\log_t 1/\eps^{d+\delta'}+i_0$, it follows that the cardinality of the last partitition $\Pi(P,\eps)=\Pi^I$ satisfies $\left|\Pi^I\right|=O\left((1/\eps)^{d+\delta'+\eta}\right)=O\left(1/\eps^{d+2\delta'}\right)$.
Hence, the rest of this section is dedicated to the analysis of the ``irregularity" values $\rho(\Pi^i,\psi)$ of the partitions 
 $\Pi^i$ with respect to $k$-wise relations $\psi\in \Psi_{d,k,\Delta,s}$.
According to Lemma \ref{Lemma:Refinement}, any particular $k$-family $\U=(U_1,\ldots,U_k)\in \F^i_k$ can give rise to $O\left(t^{k(1+\eta/(10k))}\right)$ $k$-families $\W\in \F^{i+1}_k$, so that $W_j\subseteq U_j$ would hold for all $1\leq j\leq k$.  Lemma \ref{Lemma:KeyLemma} below implies that only few among such $k$-families can be ``$\psi$-irregular", for any given $k$-wise relation $\psi\in \Psi_{d,k,\Delta,s}$.



\bigskip
\noindent{\bf Definition.} Let $\U=(U_1,\ldots,U_k)$ be an $k$-family in $\reals^d$. We say that a $k$-family $\W=(W_1,\ldots,W_k)$ {\it derives} from $\U$ if each of its member sets $W_j$, for $1\leq j\leq k$, satisfies $W_j\subseteq U_j$, and has the same intrinsic dimension $l(W_j)=l(U_j)$.

\begin{lemma}[``The key lemma"]\label{Lemma:KeyLemma} For any choice of $d\geq 1$, $k\geq 0$, $\Delta\geq 0$, $s\geq 1$, $c\geq 1$, and $\eta>0$,
one can fix the constants $r_l$ in Section \ref{Subsec:Refinement} so that the following property would hold for all the $m$-wise relations $\psi\in \Psi_{d,m,\Delta,s}$, with $0\leq m\leq k$:
 
	Let $\U=(U_1,\ldots,U_m)$ be an $m$-family of sets in $\reals^d$, where each set has positive intrinsic dimension $l_j=l(U_j)$, and is assigned a subset $S_j=S(U_j)$ whose description complexity is at most $(r_{l_{j}+1}^c,r_{l_j+1}^c)$, and a projection function $\pi_j=\pi_{S_j}:\reals^{d}\rightarrow \reals^{l_j}$ that is injective over $S_j$.
	Then all but $O\left(t^{m-1/d+\eta/5}\right)$ among $m$-families $\W=(W_1,\ldots,W_m)$, that derive from $\U$, are $\psi$-homogeneous.\footnote{Note that if the $k$-family $\U\in \F^i_k$ is $\psi$-homogeneous, then so is any $k$-family $\W=(W_1,\ldots,W_k)$ that ``descends" from $\U$ in the subsequent families $\F^{i'}_k$, for $i'>i$. Hence, it can be assumed with no loss of generality that the family $\U$ in the statement of Lemma \ref{Lemma:KeyLemma} is {\it not} $\psi$-homogeneous.}

	\end{lemma}
	
	We postpone the somewhat technical proof of Lemma \ref{Lemma:KeyLemma} to Section \ref{Subsec:ProofKey}, and proceed with the ``$\psi$-irregularity" analysis of the partitions $\Pi^i(P)$.

\begin{corollary}\label{Corol:Key}
Let $d>0,k\geq 1$,$\Delta\geq 0$ and $s\geq 0$ be integers, and $c$ be the constant in Lemma \ref{Lemma:TreeElementary}. With the constants $r_l$ 
selected according to Lemma \ref{Lemma:KeyLemma}, we have that 
$$
\rho\left(\Pi^{i+1},\psi\right)=O\left(\frac{\rho\left(\Pi^{i},\psi\right)}{t^{\frac{1}{d}-\eta/5}}+\frac{n^k}{t^{(i+1)(1-\eta/5)-1}}\right)
$$
\noindent for all sets $P$ in $\reals^d$, all relations $\psi\in \Psi_{d,k,\Delta,s}$, and all $i\geq 0$. 
\end{corollary}
	
\begin{proof} Fix a relation $\psi\in \Psi_{d,k,\Delta,s}$, and note that every ``$\psi$-irregular" $k$-family $\W=(W_1,\ldots,W_k)\in \G^{i+1}_k(P,\psi)$ falls into one of the following categories:

\begin{enumerate}
  
	\item It derives from such a $k$-family $\U\in \G^{i}_k(P,\psi)$.
	\item It ``descends" from some $k$-family $\U=(U_1,\ldots,U_k)\in \G_k^{i}(P,\psi)$, so that $W_j\in \Lambda_t(U_j)$ holds for all $1\leq j\leq k$. However, $\W$ includes a set $W_j\in \Lambda_t(U)$ so that $l(W_j)<l(U_j)$. 
	\item Some pair of its sets $W_{j'},W_{j''}$, with $1\leq j'<j''\leq k$, belong to the refinement $\Lambda_t(U_{j})$ of the same set $U\in \Pi^{i}$. 
\end{enumerate}

\noindent{\bf Case 1.} Note, first, that the contribution of the $k$-families $\W$ of type 1 to $\rho(\Pi^{i},\psi)$ does not exceed $O\left(t^{-1/d+\eta/5}\cdot \rho(\Pi^{i},\psi)\right)$. 
Indeed, according to Lemma \ref{Lemma:KeyLemma}, 
only $O(t^{k-1/d+\eta/5})$ ``$\psi$-irregular" $k$-families $\W=(W_1,\ldots,W_k)\in \G_k^{i+1}(P,\psi)$ can derive from any particular family $\U=(U_1,\ldots,U_k)$ in $\G_k^i(P,\psi)$ (whose contribution to $\rho(\Pi^i,\psi)$ was $\prod_{j=1}^k \mu(U_j)$). Using that $\mu(W_j)\leq \mu(U_j)/t$ holds for all $1\leq j\leq k$, it follows that the overall contribution of such $k$-families $\W$ to the quantity $\rho(\Pi^{i+1},\psi)$ does not exceed
$$
O\left(\frac{t^{k-\frac{1}{d}+\eta/5}}{t^k}\cdot \prod_{j=1}^k \mu_{i}(U_j)\right)=O\left(t^{-\frac{1}{d}+\eta/5}\prod_{j=1}^k \mu_i(U_j)\right),
$$

\noindent for a total weight of $\displaystyle O\left(\frac{\rho(\Pi^i,\psi)}{t^{1/d-\eta/5}}\right)$.

\medskip
\noindent{\bf Case 2.} To bound the contribution of the ``$\psi$-irregular" $k$-families of type 2, let $\Lambda'_t(U)$ denote the collection of all the special subsets in $\Lambda_t(U)$, for any $U\in \Pi^{i}$. (Recall that these are the subsets $W\in \Lambda_t(U)$ with $l(W)<l(U)$.)
Since (i) every subset $W\in \Lambda_t(U_j)$ satisfies $\mu_{i+1}(W)\leq \mu_{i}(W)$ and, moreover, (ii) every special subset $W\in \Lambda_t'(U_b)$ satisfies $\mu_{i+1}(W)\leq \mu_{i}(W)/t$, the overall ``contribution" of such $k$-families $\W\in \F_k^{i+1}$ that descend from a given $k$-family $\U=(U_1,\ldots,U_k)\in \F^{i}_k$, does not exceed

$$
\sum_{j=1}^k\left( \sum_{W\in \Lambda'_t(U_j)}\mu_{i+1}(W)\right)\left(\prod_{j'\in [k]\setminus \{j\}}\mu_{i}(U_{j'})\right)\leq \sum_{j=1}^k\frac{\mu_{i}(U_j)}{t}\left(\prod_{j'\in [k]\setminus \{j\}}\mu_{i}\left(U_{j'}\right)\right)=k\cdot \frac{\prod_{j=1}^k\mu_i\left(U_j\right)}{t}.
$$

\noindent Repeating this argument for all non-homogeneous $k$-families $\U\in \G_{k}^{i}$ shows that the overall ``contribution" of the families $\W\in \F^{i+1}_k$ of type 2 to the quantity $\rho(\Pi^{i+1},\psi)$ is only $O\left(\rho\left(\Pi^{i},\psi\right)/t\right)$.


\medskip
\noindent{\bf Case 3.} As can be easily deduced via Lemmas  \ref{Lemma:Refinement} and \ref{Lemma:TreeElementary}, the number of the $k$-families of type 3 is $O\left(|\Pi^{i+1}|^{k-1}\cdot t^{1+\eta/(10k)}\right)=O\left(t^{(i+1)(k-1)(1+\eta/(5k))+1+\eta/(10k)}\right)$, so their contribution to $\rho(\Pi^{i+1},\psi)$ is only 
$$
O\left((n/t^{i+1})^k\cdot t^{(i+1)(k-1)(1+\eta/(5k))+1+\eta/(10k)}\right)=O\left(n^k/t^{(i+1)(1-\eta/5)-1}\right).
$$
\end{proof}

\noindent{\bf Wrap-up.} 
Recall that we have $I=\log_t\left(1/\eps^{d+\delta'}\right)+i_0$, with some small integer constant $i_0>0$, and use induction to show that, given suitably large choice of $t=t(d,k,\Delta,s,\delta)$ and $c$, the inequality
\begin{equation}\label{Eq:Induction}
\rho(\Pi^i,\psi)\leq n^kt^{2-\frac{(i-1)(1-\delta')}{d}}.
\end{equation}
must hold for all  $0\leq i\leq I$, and all $k$-wise relations $\psi\in \Psi_{d,k,\Delta,s}$. Choosing a sufficiently large constant $i_0>0$ will then guarantee that the last partition is almost $(\eps,\psi)$-regular, for any $\psi\in \Psi_{d,k,\Delta,s}$.

Since the statement of the induction is trivial for $i=0$ (as $\Pi^0=\{P\}$), let us fix $1\leq i\leq I$, and suppose that (\ref{Eq:Induction}) holds for all the previous partitions $\Pi^{i'}$, with $i'<i$, and all $k$-wise relations $\psi\in \Psi_{d,k,\Delta,s}$. To establish the same claim for $\Pi^i$, and a particular relation $\psi\in \Psi_{d,k,\Delta,s}$, we use Corollary \ref{Corol:Key} to derive
$$
\rho\left(\Pi^{i},\psi\right)\leq c'_1\cdot \frac{\rho\left(\Pi^{i-1},\psi\right)}{t^{\frac{1}{d}-\eta/5}}
+c'_2\cdot \frac{n^k}{t^{i(1-\eta/5)-1}},
$$

\noindent where the constants $c'_1$ and $c'_2$ do not depend on $t$ or $i_0$. Plugging the induction assumption into the first term yields

$$
\rho\left(\Pi^{i},\psi\right)\leq c'_1\cdot n^{d}\cdot {t^{-\frac{1}{d}+\eta/5}\cdot t^{2-\frac{(i-2)(1-\delta')}{d}}}+c'_2\cdot \frac{n^k}{t^{i(1-\eta/5)-1}}.
$$

Recall that we have $\eta=\delta'/(100dk)$. 
Hence, choosing suitably large constant $t$, which may depend on $c'_1,c'_2$, $i_0$ and $\delta'$, guarantees that the right hand side does not exceed $n^k\cdot t^{2-\frac{(i-1)(1-\delta')}{d}}$ for all $0\leq i\leq I$. 

Finally, fixing a suitable integer $i_0>d$ (and recalling that $\delta'=\delta/(100dk)$) yields
$$
\rho(\Pi^I,\psi)\leq n^k\cdot t^{2-\frac{(I-1)(1-\delta')}{d}}\leq \eps k!{n\choose k},
$$

\noindent and that $|U|\leq \max\{n/t^{I-d},1\}\leq \eps n$ holds for each set $U\in \Pi^I$. Combined with the previous bound $O\left(1/\eps^{d+2\delta'}\right)=O\left(1/\eps^{d+\delta}\right)$ on the cardinality of $\Pi^I(P)=\Pi(P,\eps)$, and the bound $\max\{n/\eps^{d+\delta'},1\}\leq \eps n$ on the cardinalities of its member sets $U$ (both of them using Lemma \ref{Lemma:TreeElementary}), this completes the proof of Theorem \ref{Theorem:NewRegularitySharp}. $\Box$

\subsection{Proof of Lemma \ref{Lemma:KeyLemma}}\label{Subsec:ProofKey}

For the sake of the $t$-refinement of Section \ref{Subsec:Refinement}, let us fix the integers $r_1,\ldots,r_d\geq 0$ that satisfy $\max\{\Delta,s\}\lll_\eta r_d\lll_\eta r_{d-1}\lll_\eta\ldots\lll_\eta r_1$, whose selection will be explained in the sequel,  the parameter $t\ggg_\eta r_1$, and the $k$-family $\U=(U_1,\ldots,U_k)$ in $\reals^d$. 

Let $\W=(W_1,\ldots,W_m)$ be a $k$-family that derives from $\U$. Recall that every set $W_j$ has intrinsic dimension $l_j=l(U_j)=l(W_j)$, belongs to the $t$-refinement $\Lambda_t(U_j)$, and is equipped with a semi-algebraic sets $S(W_j)$ and $A(W_j)\subseteq S(W_j)$. 
If $l_j=l(W_j)=0$, then we have that $|W_j|=|U_j|=1$ and, accordingly, define $S(W_j)=A(W_j):=W_j$, so that the description complexity of both $W_j$ and $A(W_j)$ is trivially bounded by $(\Delta_0,s_0)=(1,d)$.
Otherwise, Lemma \ref{Lemma:Refinement} yields that the respective description complexities of $S(W_j)$ and $A(W_j)$ are bounded by the pairs $(\Delta_{l_j},s_{l_j})$ and $(\Delta'_{l_j},s'_{l_j})$ which satisfy $\max\{\Delta_{l_j},s_{l_j}\}\leq r_{l_j+1}^c$, $\Delta'_{l_j}\leq r_{l_j}^c$ and $s'_{l_j}\leq r_{l_j}^{c}\log t$. Here $c$ denotes a suitable constant that depends only on the dimension $d$.\footnote{To simplify the exposition, we identify this constant with the constant $c$ in the lemma hypothesis.}

We say that such an $m$-family $\W$ is {\it $\psi$-separated} if the value of $\psi(a_1,\ldots,a_m)$ is invariant in the choice of $(a_1,\ldots,a_m)\in A(W_1)\times\ldots\times A(W_m)$; otherwise, we say that $\W$ is {\it $\psi$-crossed}.\footnote{The latter case occurs if and only if $A:=A(W_1)\times\ldots\times A(W_k)$ is crossed by the semi-algebraic set $Y:=\{(y_1,\ldots,y_k)\in \reals^{d\times k}\mid \psi(y_1,\ldots,y_k)=1\}$, in the sense that each of the regions $Y$ and $\reals^d\setminus Y$ contains a point of $A$.} 
Note that any $\psi$-separated $m$-family $\W$ that descends from $\U$ is, in particular, $\psi$-homogeneous. Hence, it suffices to show that, with a suitable choice of the constants $r_j=r_j(d,k,\Delta,s,\eta,c)$ for $1\leq j\leq d$, only $O\left(t^{m-1/d+m\eta/(10k)}\right)$ $\psi$-crossed $m$-families $\W$ can derive from $\U$.

The proof uses ``downward" induction the number $\ell(\U)$ which is defined as the {\it smallest} among the intrinsic dimensions $l_b=l(U_b)$ of the member sets $U_b\in \U$. For the sake of our induction, we also allow ``$0$-families" $\U=(\cdot)$ with $\ell(\U):=d+1$ and $r_{d+1}=0$. 
Notice that if $\ell(U)=l$, then the $t$-refinements $\Lambda_t(U_b)$, for $1\leq b\leq k$, do not depend on the choice of the constants $r_j$, for $1\leq j<l$.

This induction proceeds through $d+1$ steps $0\leq j\leq d$.  At the $j$-th step, we assume that the constants $r_{d+1}=0$ and $r_d\lll_\eta\ldots\lll_\eta r_{d-j+1}$ at the previous steps $0,\ldots,j-1$, have be pre-selected so as to ensure that the induction statement holds for all the $k$-families $\U$ that satisfy $d-j+1\leq \ell(\U)\leq d+1$, and proceed to establish the claim for the $k$-families $\U$ that satisfy $\ell(\U)=d-j$. Note that the assumption is trivial (and superfluous) in the $0$-th step, where all the sets $U_b\in \U$ must have the same intrinsic dimension $l_b=d$.

For the sake of brevity, let us denote $l:=d-j$.
To proceed, let us fix the $k$-family $\U=(U_1,\ldots,U_k)$ with $\ell(\U)=l$ and assume, with no loss of generality, that the first $q$ sets $U_1,\ldots,U_q$, for some $1\leq q=q(\U)\leq m$, have intrinsic dimensions $l(U_1)=l(U_2)=\ldots=l(U_q)=l$, whereas the last $k-q$ sets $U_b$ satisfy $l(U_{b})\geq l+1$.




To facilitate the subsequent analysis, for any $q$-family $(W_{1},\ldots,W_q)$ that derives from $(U_{1},\ldots,U_q)$, we fix a ``representative" $q$-tuple $(w_{1},\ldots,w_q)\in W_{1}\times\ldots\times W_q$.
Then any $\psi$-crossed $m$-family $\W=(W_1,\ldots,W_m)$ must fall into exactly one of the following two cases:
 
 \begin{itemize}
 	\item[(a)] There exist $(m-q)$-tuples $(a_{1},\ldots,a_{m-q})\in A(W_{q+1})\times\ldots\times A(W_m)$ and $(a'_{1},\ldots,a'_{,-q})\in A(W_{q+1})\times\ldots\times A(W_m)$ so that $\psi(w_1,\ldots,w_q,a_1,\ldots,a_{m-q})\neq \psi(w_1,\ldots,w_q,a'_1,\ldots,a'_{m-q})$.
 	\item[(b)] All the $(k-q)$-tuples $(a_{1},\ldots,a_{m-q})$ in $A(W_{q+1})\times\ldots\times A(W_m)$ attain the same value, say $\psi(w_1,\ldots,w_q,a_1,\ldots,a_{m-q})=1$.
 \end{itemize}
 
Note that the first scenario cannot occur for $q=m$, whereas the second one can arise only for positive values $\ell(\U)=l$.
 In the sequel, we establish a separate bound on the number of the $\psi$-crossed $m$-families $\W$ that fall into each category.
 
 \medskip
\noindent{\bf Case (a).} To bound the number of the $\psi$-crossed $m$-families $\W$ of the first kind, 
let us fix a $q$-family $\W'=(W_{1},\ldots,W_q)$, that derives from $\U'=(U_{1},\ldots,U_q)$, and establish an upper bound on the maximum number of such $(m-q)$-families $\W''=(W_{q+1},\ldots,W_m)$ that derive from $\U''=(U_{q+1},\ldots,U_m)$ and ``complement" $\W'$ to a $\psi$-crossed $k$-family $\W=\W'\ast \W''=(W_1,\ldots,W_m)$ of type (a).

\medskip
\noindent{\it The relation $\psi''$.} Notice that the prior choice of the representatives $(w_{1},\ldots,w_q)\in W_{1}\times\ldots\times W_q$ determines a $(m-q)$-wise relation ${\psi}''\in \Psi_{d,m-q,\Delta,s}
$ with description $\left(g''_1,\ldots,g''_s,\phi\right)$, where each function $g''_i$ is obtained by ``fixing" the first $q$ arguments in $g_i$ to $w_{1},\ldots,w_q$. The crucial observation is that every $(k-q)$-family $\W''$ in question must be $\psi''$-crossed. Therefore, and since we have $\ell(\U'')\geq l+1$, applying the induction assumption to the $(m-q)$-family $\U''$ yields an upper bound $O\left(t^{m-q-1/d+(m-q)\eta/(10k)}\right)$ on the number of such $\psi''$-crossed $(m-q)$-families $\W''$ that derive from $\U''$. 

Repeating the above argument for all $1\leq q\leq m$, and all the $O(t^{q+q\eta/(10k)})$ possible choices of the $q$-family $\U'=(U_1,\ldots,U_q)$, yields the asserted bound $O\left(t^{m-1/d+m\eta/(10k)}\right)$ on the overall number of the $\psi$-crossed $m$-families $\W=(W_1,\ldots,W_m)$ of type (a) that derive from $\U$. 

\medskip
\noindent{\bf Case (b).} Let us now fix an $(m-q)$-family $\W''=(W_{q+1},\ldots,W_m)$ that derives from $\U''=(U_{q+1},\ldots,U_m)$, and argue that, with a suitable choice of $r_l$, which satisfies $r_{l+1}\lll_\eta r_l$, it can be ``completed" in only $O\left(t^{q-1/l+q\eta/(10k)}\right)$ ways, to a $\psi$-crossed $m$-family $\W\in (W_1,\ldots,W_m)$ of type (b).

\medskip
\noindent{\it The relation $\psi'$.} To this end, let $\psi':\reals^{d\times q}\rightarrow \{0,1\}$ denote the $q$-wise relation which satisfies $\psi'(y_{1},\ldots,y_{q})=1$ if and only if $\psi(y_{1},\ldots,y_{q},a_1,\ldots,a_{m-q})=1$ holds for all the $(m-q)$-tuples $(a_1,\ldots,a_{m-q})\in A(W_{q+1})\times\ldots\times A(W_m)$.

Similar to case (a), any $\psi$-crossed $k$-family $\W=(W_1,\ldots,W_m)$ under consideration corresponds to a $\psi'$-crossed {\it $q$-family} $\W'=(W_{1},\ldots,W_q)$. Indeed, since $\W$ falls into case (b), we clearly have that $\psi'(w_{1},\ldots,w_q)=1$ (where $w_{1},\ldots,w_q$ denote the pre-selected representatives in, respectively, $W_{1},\ldots,W_q$). However, since $\W$ is $\psi$-crossed, $\psi'(y_{1},\ldots,y_q)=1$ cannot simultaneously hold for all the $q$-tuples $(y_{1},\ldots,y_q)\in A(W_{1})\times\ldots\times A(W_q)$.

Also note that the relation $\psi'$ belongs to the class $\Psi_{d,q,\Delta',s'}$, where $\Delta'=(r_{l+1}+\Delta)^{\kappa}$ and $s'=(r_{l+1}+s)^{\kappa}\log^{\kappa} t$, and $\kappa=\kappa(d,k)$ is a fixed-degree polynomial in $d$ and $k$ (i.e., the degree of $\kappa(d,k)$ does not depend on $d$).
To see this, let us observe that the set\footnote{It is instructive to note that, if $q=m$ (which is always the case for $l=d$) then $Y_{\psi'}=\{(y_1,\ldots,y_m)\in \reals^{d\times m}\mid \psi(y_1,\ldots,y_m)=1\}$.} 
$$
Y_{\psi'}:=\left\{(y_1,\ldots,y_{q})\in \reals^{d\times q}\mid \psi'(y_1,\ldots,y_{q})=1\right\},
$$
 is the complement of 
$$
\left\{(y_1,\ldots,y_{q})\in \reals^{d\times q}\mid \exists (a_1,\ldots,a_{m-q})\in  A(U_{q+1})\times\ldots\times A(U_{m}): \psi(y_1,\ldots,y_{q},a_1,\ldots,a_{m-q})=0\right\}
$$

\noindent which, in turn, is a projection of the following set within $\reals^{d\times m}$

$$
\tilde{Y}_{\psi'}:=
$$
$$
\left\{(y_1,\ldots,y_{q},a_1,\ldots,a_{m-q})\in \reals^{d\times k}\mid (a_1,\ldots,a_{m-q})\in A(U_{q+1})\times\ldots\times A(U_m), \psi(y_1,\ldots,y_{q},a_1,\ldots,a_{m-q})=0 \right\}.
$$

\noindent In view of the bound $(r_{l+1}^{c},r_{l+1}^{c}\log t)$ on the description complexity of the sets $A(U_{q+1}),\ldots,A(U_m)$, whose respective intrinsic dimensions $l(A_b)$ are all {\it strictly larger} than $l$, the description complexity of $\tilde{Y}_{\psi'}$ in $\reals^{d\times k}$ is bounded by $\left(\Delta+mr_{l+1}^c,s+mr_{l+1}^c\log t\right)$. The asserted bound on the description complexity of $Y_{\psi'}$ (which is also the description complexity of the relation $\psi'$) now follows via Theorem \ref{Theorem:Elimination}.

\bigskip
\noindent{\it The set $Y^*_{\psi'}\subseteq \reals^{l\times q}$.} For the sake of brevity, let $S_1,\ldots,S_{q}$ respectively denote the semi-algebraic sets $S(U_{1}),\ldots,S(U_{q})$, and $\pi_b$ denote the projection function $\pi_{S_b}:\reals^d\rightarrow \reals^l$ that is associated with each set $S_b$. In what follows, we use $A^*(W_b)$ to denote the projection $\pi_b(A(W_b))$ for all $1\leq b\leq q$.

Consider the projection function $\pi^*:\reals^{d\times q}\rightarrow \reals^{l\times q}$ that sends every $q$-tuple $(y_1,\ldots,y_{q})\in \reals^{d\times q}$ to the point $(\pi_1(y_1),\ldots,\pi_{q}(y_{q}))$, and denote 
$$
Y^*_{\psi'}:=\pi^*(Y_{\psi'}\cap \left(S_1\times\ldots\times S_{q}\right))\subseteq \reals^{l\times q}.
$$ 
Using that the description complexity of each set $S_b=S(W_b)$ is bounded by the pair $(\Delta_{l},s_l)$ that satisfies $\max\{\Delta_l,s_l\}\leq r_{l+1}^c$, another application of Theorem \ref{Theorem:Elimination} yields that 
the description complexity of $Y^*_{\psi'}$ is bounded by the pair $(\Delta^*,s^*)$, with $\Delta^*=(r_{l+1}+\Delta)^{\kappa^*}$ and $s^*=(r_{l+1}+s)^{\kappa^*}\log^{\kappa^*}t$, and where $\kappa^*=\kappa^*(d,m)$ is yet another constant that is polynomial in $m$ and $d$. 

The following property stems from the fact that each projection $\pi_b:\reals^{d}\rightarrow \reals^l$ is injective over the set $S_b\supseteq A(W_b)$, for $1\leq b\leq q$.

\begin{proposition}
A $q$-family $(W_1,\ldots,W_q)$, that derives from $(U_1,\ldots,U_q)$, is $\psi'$-crossed if and only if the set $A^*:=A^*(W_1)\times\ldots\times A^*(W_q)$ is crossed by $Y^*_{\psi'}$, in the sense that $A^*\cap X^*_{\psi'}\neq \emptyset$ and $A^*\not\subset Y^*_{\psi'}$.
\end{proposition}

To proceed, let us
consider the $q$ auxiliary trees $\T_t(U_1),\ldots,\T_t(U_q)$ that have been used to obtain the respective $t$-refinements $\Lambda_t(U_{1}),\ldots,\Lambda_t(U_q)$ of $U_{1},\ldots,U_q$.\footnote{Recall that case (b) does not arise if $\ell(\U)=\ell(\W)=0$, so each set $W_b$ indeed belongs to the $t$-refinement $\Lambda_t(U_b)$.} We say that a non-root node $\alpha$ in $\T_t(U_b)$ is {\it ordinary} if its canonical subset $W_\alpha$ contains at least one ordinary set $W\in \Lambda_t(U)$; note that $W_\alpha$ lies in a unique cell $C=C_\alpha$ of the partition $\reals^l\setminus Z(g_{\alpha'})$, which has been constructed for the parent node $\alpha'$ of $\alpha$ in $\T_t(U_b)$.

For any integer $b$, that satisfies $1\leq b\leq q$, and any level $1\leq h\leq h_l=\log_{r_l}t$, let us use $\N_h(U_b)$ to denote the subset of all the ordinary $h$-level nodes in $\T_t(U_b)$. In what follows, we consider the family $\M_h(U_1,\ldots,U_q):=\N_h(U_1)\times\ldots\times \N_h(U_q)$, which is comprised of all the $q$-sequences $(\alpha_1,\ldots,\alpha_q)$ of ordinary $h$-level nodes which are respectively chosen in the trees $\T_t(U_1),\ldots,\T_t(U_q)$.
For each $(\alpha_1,\ldots,\alpha_q)\in \M_h(U_1,\ldots,U_q)$, we use $C_{\alpha_1,\ldots,\alpha_q}$ to denote the cartesian product $C_{\alpha_1}\times\ldots\times C_{\alpha_q}\subseteq \reals^{l\times q}$ of the respective cells $C_{\alpha_b}\subset \reals^l$.


Using $\left(f^*_1,\ldots,f^*_{s^*},\Phi^*\right)$ to denote the semi-algebraic description of $Y^*_{\psi'}$, with $\deg(f_e^*)\leq \Delta^*$ for all $1\leq e\leq s^*$, we obtain the following property.

\begin{proposition}\label{Prop:Fa}
	Let $\W'=(W_1,\ldots,W_q)$ be a $\psi'$-crossed family that derives from $(U_1,\ldots,U_q)$. Then there exist a polynomial $f^*_e$, with $1\leq e\leq s^*$, and a unique $q$-sequence $(\alpha_{1}(h),\ldots,\alpha_{q}(h))\in \M_h(U_1,\ldots,U_q)$ at every level $0\leq h\leq h_l$, so that the following properties hold for all $1\leq h\leq h_l$:
	
	\begin{enumerate}
		\item We have that $W_1\subseteq W_{\alpha_{1}(h)},\ldots,W_q\subseteq W_{\alpha_{q}(h)}$.
		\item The set $C_{\alpha_1(h),\ldots,\alpha_q(h)}$ is crossed by $Z(f^*_e)$.
			\end{enumerate}
\end{proposition}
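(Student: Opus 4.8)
\textbf{Proof proposal for Proposition \ref{Prop:Fa}.} The plan is to fix a $\psi'$-crossed family $\W'=(W_1,\ldots,W_q)$ deriving from $(U_1,\ldots,U_q)$ and, for each level $h$, read off the ``ancestor $q$-sequence'' canonically from the structure of the refinement trees $\T_t(U_1),\ldots,\T_t(U_q)$. Recall that since each $W_b$ is an \emph{ordinary} set of $\Lambda_t(U_b)$, it is the canonical subset of a terminal node at level exactly $h_l=\log_{r_l}t$ in $\T_t(U_b)$; hence $W_b$ has a unique ancestor node $\alpha_b(h)$ at every level $0\le h\le h_l$, and each such ancestor is itself ordinary (its canonical subset contains the ordinary set $W_b\in\Lambda_t(U_b)$), so $(\alpha_1(h),\ldots,\alpha_q(h))\in\M_h(U_1,\ldots,U_q)$. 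Uniqueness is automatic since the children of a node are associated with pairwise disjoint canonical subsets. This immediately yields property~1, namely $W_b\subseteq W_{\alpha_b(h)}$, and, taking projections under $\pi_b$, that $A^*(W_b)\subseteq C_{\alpha_b(h)}$ for every $b$ (since $A^*(W_b)=\pi_b(A(W_b))\subseteq\pi_b(W_\alpha)\subseteq C_{\alpha_b(h)}$, because $W_b$ lies in the single cell of $\reals^l\setminus Z(g_{\alpha_b(h-1)})$ that defines $\alpha_b(h)$, and this cell nests inside $C_{\alpha_b(h-1)}$, and so on up the tree).

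For property~2, the key is the preceding Proposition: $\W'$ being $\psi'$-crossed means $A^*:=A^*(W_1)\times\cdots\times A^*(W_q)$ is crossed by $X^*_{\psi'}=Y^*_{\psi'}$, i.e.\ $A^*$ contains a point inside $Y^*_{\psi'}$ and a point outside it. Writing $Y^*_{\psi'}$ via its description $(f^*_1,\ldots,f^*_{s^*},\Phi^*)$, the fact that two points of $A^*$ receive different $\Phi^*$-values forces some polynomial $f^*_e$ to take opposite signs (or a zero versus a nonzero value) on those two points of $A^*$. Fix such an index $e$; I claim $Z(f^*_e)$ crosses $C_{\alpha_1(h),\ldots,\alpha_q(h)}$ for every $h$. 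Indeed $A^*\subseteq C_{\alpha_1(h),\ldots,\alpha_q(h)}$ by the nesting observed above, and $A^*$ already contains two points on which $f^*_e$ takes values of different sign; since $C_{\alpha_1(h),\ldots,\alpha_q(h)}$ is a connected set (a Cartesian product of connected cells) containing both points, $f^*_e$ must vanish somewhere on it, so $Z(f^*_e)\cap C_{\alpha_1(h),\ldots,\alpha_q(h)}\neq\emptyset$ while $C_{\alpha_1(h),\ldots,\alpha_q(h)}\not\subseteq Z(f^*_e)$ (it contains a point where $f^*_e\neq 0$). That is precisely the assertion that $C_{\alpha_1(h),\ldots,\alpha_q(h)}$ is crossed by $Z(f^*_e)$.

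The one subtlety to handle carefully is the choice of the index $e$: a priori different levels $h$ might seem to call for different polynomials. This is not actually a problem, since the two witnessing points of $A^*$ are fixed once and for all (they depend only on $\W'$, not on $h$), so a single $e$ with $\sign f^*_e$ differing on those two points works simultaneously at every level; this is why the proposition can assert ``there exist a polynomial $f^*_e$'' \emph{before} quantifying over $h$. One must also note the degenerate boundary cases: at $h=0$ the ``ancestor'' is the root of each $\T_t(U_b)$, for which there is no partitioning cell, so property~2 is only required for $1\le h\le h_l$ as stated, and the root level is included in property~1 only (trivially, $W_b\subseteq U_b=W_{\alpha_b(0)}$). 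The main obstacle, such as it is, is purely bookkeeping: being precise about the injectivity of each $\pi_b$ over $S_b\supseteq A(W_b)$ so that crossing in the projected space $\reals^{l\times q}$ genuinely corresponds to crossing of the original sets, and about the fact that an ordinary terminal node sits at level exactly $h_l$ (so that ancestors at all intermediate levels exist) — both of which are already recorded in Section~\ref{Subsec:Refinement} and Lemma~\ref{Lemma:Refinement}. $\Box$
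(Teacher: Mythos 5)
Your argument is correct and matches the paper's in substance: both proofs establish that at each level $h$ the ancestor $q$-sequence is unique (determined by property~1), that $A^*(W_1)\times\cdots\times A^*(W_q)\subseteq C_{\alpha_1(h),\ldots,\alpha_q(h)}$ for all $h$, and then conclude via connectedness of the cells and the sign pattern of the $f^*_e$'s — the paper phrases this last step as a contradiction (if each $f^*_e$ misses some level's cell, then every $f^*_e$ has constant sign on $A^*$, so $\Phi^*$ is constant), whereas you run it directly by first fixing the witness index $e$, which is logically the same. One cosmetic slip worth fixing: the intermediate inclusion $\pi_b(A(W_b))\subseteq\pi_b(W_{\alpha})$ is ill-typed ($A(W_b)$ is a semi-algebraic region, $W_{\alpha}$ a finite point set); the correct justification, which your parenthetical already gestures at, is that $A(W_b)$ is defined in Section~\ref{Subsec:Refinement} as an intersection of $S(W_b)$ with the prisms $\pi_b^{-1}(C_{\alpha_b(h)})$, so $A^*(W_b)\subseteq C_{\alpha_b(h)}$ for every level $h$ by construction.
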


\begin{proof}
Since every set $W_b$ is assigned to a unique terminal node in the respective tree $\T_t(U_b)$,
which can be reached via a unique path,
 at every level $0\leq h\leq h_l$ there is a unique $h$-level $q$-sequence $(\alpha_{1}(h),\ldots,\alpha_{q}(h))\in \M_h(U_1,\ldots,U_q)$ that meets the first condition. 
Suppose for a contradiction that for every polynomial $f^*_e$ there is a level $1\leq h=h(e)\leq h_l$ with the property that $C_{\alpha_1(h),\ldots,\alpha_q(h)}\cap Z(f^*_e)=\emptyset$. However, as $A^*(W_1)\times\ldots\times A^*(W_q)\subseteq C_{\alpha_1(h),\ldots,\alpha_q(h)}$ holds for all $1\leq h\leq h_l$,\footnote{This is because, according to the definition of the sets $A(W_b)$ in Section \ref{Subsec:Refinement}, each of them must be contained in the ``prism" of $\pi_{b}^{-1}\left(C_{\alpha_b}(h)\right)$ over the cell $C_{\alpha_b}(h)\subset \reals^l\setminus Z_{\alpha_b(h-1)}$.} it follows that the sign of every polynomial $f^*_e$, for $1\leq e\leq s^*$, is invariant over $A^*(W_1)\times\ldots\times A^*(W_q)$. Thus, contrary to Proposition \ref{Prop:Fa}, and the choice of $(W_1,\ldots,W_q)$ as a $\psi'$-crossed $q$-family, we have that either $Y^*_{\psi'}\supseteq A^*(W_1)\times\ldots\times A^*(W_q)$ or $Y^*_{\psi'}\cap \left(A^*(W_1)\times\ldots\times A^*(W_q)\right)=\emptyset$. 
\end{proof}

For any level $1\leq h\leq h_l$, and any $1\leq e\leq s^*$, let $\M_h(U_1,\ldots,U_q;f^*_e)$ denote the subset of all such $q$-sequences $(\alpha_1(h),\ldots,\alpha_q(h))\in \M_h(\U_1,\ldots,U_q)$ 
that are reached from the root sequence $(\alpha_1(0),\ldots,\alpha_q(0))$ via a path of $h'$-level $q$-sequences $(\alpha_1(h'),\ldots,\alpha_q(h'))$, for $1\leq h'\leq h$, that all satisfy $Z(f^*_e)\cap C_{\alpha_1(h'),\ldots,\alpha_q(h')}\neq \emptyset$. In addition, we denote $\M_0(U_1,\ldots,U_q;f^*_e):=\{(\alpha_1(0),\ldots,\alpha_q(0))\}$ for all $1\leq e\leq s^*$. (As before, $\alpha_b(0)$ denotes the root node of $\T_t(U_b)$.)

In view of Proposition \ref{Prop:Fa}, every $\psi'$-crossed family $\W'=(W_1,\ldots,W_q)$, that derives from $(U_1,\ldots,U_q)$, corresponds to a unique $h_l$-level $q$-sequence of terminal nodes
$$
(\alpha_1,\ldots,\alpha_q)\in \bigcup_{1\leq e\leq s^*}\M_{h_l}(U_1,\ldots,U_q;f^*_e),
$$ 
\noindent so that $W_{\alpha_b}=W_b$ holds for all $1\leq b\leq q$. Thus, our task comes down to bounding the respective cardinalities of all the sets $\M_h(\U_1,\ldots,U_q;f^*_e)$, with $1\leq h\leq h_l$ and $0\leq e\leq s^*$.

\begin{proposition}
	 With a suitable choice of the constant $r_l\ggg_\eta r_{l+1}$ in Section \ref{Subsec:Refinement}, the following statement holds at any given level $0\leq h\leq h_l-1$, and for any $1\leq e\leq s^*$:
	 
	 Any $q$-sequence $(\alpha_1,\ldots,\alpha_q)\in \M_h(U_1,\ldots,U_q;f^*_e)$ gives rise to only $O\left(r_l^{q-1/l+\eta/(20k)}\right)$ ``$f_e$-crossed" child sequences $(\beta_1,\ldots,\beta_q)\in \M_{h+1}(U_1,\ldots,U_q;f^*_e)$ in the following level $h+1$. 
\end{proposition}

\begin{proof}
	Let us fix the sequence $(\alpha_1,\ldots,\alpha_q)\in \M_h(U_1,\ldots,U_q;f^*_e)$, and consider the $r_l$-partitioning polynomials $g_{\alpha_1},\ldots,g_{\alpha_q}\in \reals[y_1,\ldots,y_l]$ that are used to subdivide the respective canonical subsets $W_{\alpha_1}\subseteq U_1,\ldots,W_{\alpha_q}\subseteq U_q$. 
We identify the coordinates of $\reals^{l\times q}$ with $\{y_{a,b}\mid 1\leq a\leq l,1\leq b\leq q\}$, and consider the polynomial 
$$
\tilde{g}_{\alpha_1,\ldots,\alpha_q}:=\prod_{b=1}^{q} g_{\alpha_b}(y_{1,b},\ldots,y_{l,b})
$$ 

\noindent over $\reals^{l\times q}$. 

Note that $\deg(\tilde{g}_{\alpha_1,\ldots,\alpha_q})=O\left(qr_l^{1/l}\right)$,\footnote{Recall that each polynomial $g_{\alpha_b}$ has been constructed in $\reals^l$, and satisfies $\deg(g_{\alpha_b})=O\left(r_l^{1/l}\right)$, where the constant of proportionality does not depend on $r_l$ or $t$} and that every Cartesian product $C_1\times\ldots\times C_{q}$ of some $q$ cells $C_1\subseteq \reals^l\setminus Z(g_{\alpha_1}),\ldots,C_q\subseteq \reals^l\setminus Z(g_{\alpha_q})$ yields a connected cell $\tilde{C}\subset \reals^{l\times q}\setminus Z(\tilde{g}_{\alpha_1,\ldots,\alpha_q})$. Invoking Theorem \ref{Theorem:ZonePolynomial} for $\reals^{l\times q}\setminus Z(\tilde{g}_{\alpha_1,\ldots,\alpha_q})$ and the hypersurface $Z(f^*_e)$, in dimension $ql$, yields an upper bound $O\left(\Delta^*r_l^{q-1/l}\right)$ on the number of such cells $\tilde{C}\subset \reals^{l\times q}\setminus Z(\tilde{g}_{\alpha_1,\ldots,\alpha_q})$ that are crossed by $Z(f^*_e)$.
 Since every subsidiary sequence $(\beta_1,\ldots,\beta_q)\in \M_{h+1}(U_1,\ldots,U_q;f^*_e)$ corresponds to a cell $\tilde{C}=C_{\beta_1,\ldots,\beta_q}$ in $\reals^{l\times q}\setminus Z(\tilde{g}_{\alpha_1,\ldots,\alpha_q})$ that is crossed by $Z(f^*_e)$, this also bounds the number of such sequences $(\beta_1,\ldots,\beta_q)$ that descend from $(\alpha_1,\ldots,\alpha_q)$ in $\M_{h+1}(U_1,\ldots,U_q;f^*_e)$. Choosing a suitably large constant $r_l\ggg_\eta \Delta^*=(\Delta+r_{l+1})^{\kappa^*}$ then guarantees that $\Delta^*\leq r_l^{\eta/(20k)}$.
 \end{proof}

It follows that the cardinality of every set $\M_h(U_1,\ldots,U_q;f^*_e)$, for $1\leq e\leq s^*$ and $1\leq h\leq h_l$, is at most $r_l^{h(q-1/l+\eta/(20k))}$. Since $h_l=\log_{r_l} t$, the number of the $\psi'$-crossed families $\W'=(W_1,\ldots,W_q)$, that derive from $\U'=(U_1,\ldots,U_q)$, does not exceed
$$
\left|\bigcup_{e=1}^{s^*}\M_{h_l}(U_1,\ldots,U_q;f^*_e)\right|=O\left(s^*t^{q-1/l+\eta/(20k)}\right)=
$$
$$
=O\left(t^{q-1/l+\eta/(20k)}(r_{l+1}+s)^{\kappa^*}\log^{\kappa^*}t\right)=O\left(t^{q-1/l+\eta/(10k)}\right).
$$

\noindent  In particular, this bounds the number of such $q$-families $\W'$ that descend from $\U'$ and are not $\psi'$-homogeneous. Repeating this argument for all the $O\left(t^{m-q+(m-q)\eta/(10k)}\right)$ possible $(m-q)$-tuples $\W''=(W_{q+1},\ldots,W_{m})$, which determine the $q$-wise relation $\psi'$ (and all choices of $1\leq q=q(\U)\leq m$), yields an upper bound $O\left(t^{m-1/l+m\eta/(10k)}\right)=O\left(t^{m-1/d+m\eta/(10k)}\right)$ on the number of the $\psi$-crossed $m$-families $\W=(W_1,\ldots,W_m)$ of type (b) that derive from $\U=(U_1,\ldots,U_m)$. This completes the proof of Lemma \ref{Lemma:KeyLemma}. $\Box$

\section{Algorithmic aspects}\label{Section:Algorithmic}

\subsection{Proof of Theorems \ref{Theorem:Construct} and \ref{Theorem:WeaklyConstruct}}\label{Subsec:Construct}

In view of the reduction in Lemma \ref{Lemma:AlmostRegular}, it is sufficient to establish Theorem \ref{Theorem:WeaklyConstruct}, which then yields Theorem \ref{Theorem:Construct} as a direct corollary.

\medskip
\noindent{\bf The construction time.} The construction of the almost $(\eps,\Psi_{d,k,\Delta,s})$-regular partition $\Pi(P,\eps)$ in Section \ref{Sec:Main} is executed in $I=O(\log 1/\eps)$ stages $\Pi_0=\{P\},\ldots,\Pi_I=\Pi(P,\eps)$, where the $i+1$-th partition $\Pi^{i+1}$ is derived from its predecessor $\Pi^i$, by 
replacing every set $U\in \Pi^i$ of cardinality $|U|\geq n_0$ with its $t$-refinement $\Lambda_t(U)$. (Recall that $|U|\geq n_0$ holds only if $U$ has positive intrinsic dimension $l(U)$, and otherwise we have that $|U|=1$.) Hence, it suffices to show that each such $t$-refinement $\Lambda_t(P)$ can be constructed in $O(|U|)$ time.

To establish the last statement, recall that $\Lambda_t(U)$ is obtained by an {\it $l$-dimensional} variant $\T_t(U)$ of the ``bounded fan-out" tree structure of Agarwal, Matou\v{s}ek and Sharir \cite[Section 5]{AgMaSa}, which we construct only to a certain level $h_l=\log_{r_l}t$; here $l$ denotes the implicit dimension $l(U)$ of $U$ within $\Pi^i$. The branching factor of $\T_t(U)$ is bounded by the maximum number $c_lr_l$ of cells in the {\it $l$-dimensional} polynomial $r_l$-partition of Theorem \ref{Thm:PolynomialPartition}. (Her $r_l^{O(1)}$ ``lower-dimensional" terminal children that can arise to an inner node of $\T_t(U)$). 
For an inner node $\alpha$, the subdivision of $Z_\alpha=Z(g_{\alpha})$ into $\theta_\alpha$-monotone patches is computed using the algorithm of Lemma \ref{Lemma:Patches}, in time $O(1)$, where the constant of proportionality depends on the implicit dimension $l$, and the degree $\deg\left(g_\alpha\right)=O\left(r_l^{1/l}\right)$ of $g_\alpha$.
To this end, the sets $S=S(U)$ need not be explicitly maintained -- it is enough to keep the respective projection functions $\pi_S:\reals^d\rightarrow \reals^l$.
Since both $t$ and $r_l$ are constants which depend only on $d,k,\Delta,s$ and $\delta>0$, the entire refinement $\Lambda_t(U)$ is obtained in $O(|U|)$ time.

\bigskip
\noindent {\bf Finding the ``$\psi$-irregular" $k$-families.}
As was mentioned in the Introduction, merely deciding whether a particular $k$-family $\U=(U_1,\ldots,U_k)$ is $\psi$-homogeneous, is a computationally challenging task whose complexity depends on the respective cardinalities of the sets $U_j$ \cite{3SUM,Degeneracy}.
 Fortunately, most applications of semi-algebraic regularity lemmas do not require an explicit subdivision of the $k$-families $\U=(U_{i_1},\ldots,U_{i_k})$ into the $\psi$-homogeneous families, and those that are not. Instead, given a $k$-wise relation $\psi\in \Psi_{d,k,\Delta,s}$, it is sufficient to obtain a certain {\it superset} $\tilde{\G}=\tilde{\G}_\psi$ of $k$-families $(U_{i_1},\ldots,U_{i_k})$ that satisfies

$$
\sum_{(U_{i_1},\ldots,U_{i_k})\in \tilde{\G}}|U_{i_1}|\cdot\ldots \cdot |U_{i_k}|\leq \eps k!{n\choose k}
$$
\noindent and {\it includes} all such $k$-families that are not $\psi$-homogeneous. The construction of $\tilde{\G}_\psi$ will overly ``mimick" the $(\eps,\psi)$-regularity analysis in Section \ref{Subsec:Analysis} and, especially, the proof of Lemma \ref{Lemma:KeyLemma}, in which the sets $U\in \Pi^i$ are effectively replaced by potentially larger ``proxy" supersets $A(U)$ of small semi-algebraic description complexity.\footnote{Recall that the set $A(U)$ is defined for a set $U\in \Pi^i$ only if $U$ has the same intrinsic dimension as its parent set in $\Pi^{i-1}$.} 
Informally, the argument in Sections \ref{Subsec:Analysis} and \ref{Subsec:ProofKey} makes no distinction between the $\psi$-homogeneous $k$-families $\U=(U_1,\ldots,U_k)\in \F_i^k$ and the $\psi$-separated ones (i.e., such $k$-families $\U=(U_1,\ldots,U_k)$ whose more ``rounded" counterparts $(A(U_1),\ldots,A(U_k))$ are $\psi$-homogeneous). 

 As a preparation, for each part $U\in \Pi^i$ of intrinsic dimension $l$ we use the auxiliary tree structures $\T_t(U)$, the projection functions $\pi_S:\reals^d\rightarrow \reals^l$, and the $r_l$-partitioning polynomials $g_\alpha\in \reals[x_1,\ldots,x_l]$ to  precompute the semi-algebraic descriptions of the sets $S(U)$ and $A(W)$ which have been assigned in Section \ref{Subsec:Refinement} to all the ordinary sets $W\in \Lambda_t(U)$. Since $t$ and $r_l$ are constants that depend on $d,k,\Delta,s$ and $\delta$, this can be achieved in additional constant time for each $U\in \Pi^i$, using the algorithm of Theorem \ref{Theorem:ComplexityCell}. 

For $0\leq i\leq I$, the query algorithm maintains a certain superset $\tilde{\G}^i_k$ of $k$-families $\U=(U_1,\ldots,U_k)$ in $\F^i_k$, which includes all such $k$-families that are not $\psi$-homogeneous.
To this end, recall that any ``$\psi$-irregular" $k$-family $\W=(W_1,\ldots,W_k)$ in $\G^i_k$ that derives from some $k$-family $\U\in \tilde{\G}_k^i$, is in particular $\psi$-crossed, in the sense that the relation $\psi$ is {\it not} fixed over $A(W_1)\times\ldots A(W_k)$. (Here $A(W_1),\ldots,A(W_k)$ are the low-complexity semi-algebraic sets that have been defined in Section \ref{Subsec:Refinement} via the respective refinement trees $\T_t(U_1),\ldots,\T_t(U_k)$ of the ``parent" sets $U_1,\ldots,U_k$.) 
Every subsequent collection $\tilde{\G}^{i+1}_k$ will now encompass all such $\psi$-crossed $k$-families $\W=(W_1,\ldots,W_k)$ that derive from some $k$-family $\U\in \tilde{\G}_k^i$, along with two special categories of $k$-families that have been described in the proof of Corollary \ref{Corol:Key}.
 
To extend the inductive argument that culminates the proof of Theorem \ref{Theorem:Weakly}, to the potentially larger collections $\tilde{\G}_k^i$, for $0\leq i\leq I$, it will be enough to maintain that the ``$\psi$-irregularity" value in step $i$

$$
\rho_{i}:=\sum_{(U_{i_1},\ldots,U_{i_k})\in \tilde{\G}_k^i}\mu(U_{i_1})\cdot\ldots \cdot \mu(U_{i_k}),
$$

\noindent still satisfies the inequality
\begin{equation}\label{Eq:IrregularityStep}
	\rho_{i+1}\leq O\left(\frac{1}{t^{\frac{1}{d}-\eta/5}}\right)\cdot \rho_i+O\left(n^k\cdot t^{-i+\eta i}\right).
\end{equation}

For the sake of our amortized analysis, every set $U$ of intrinsic dimension $l=l(U)$ will be assigned {\it cost} $1/t^{d-l}$. Accordingly, the cost of a $k$-family $(U_1,\ldots,U_k)$ will be given by $\prod_{i=1}^k\frac{1}{t^{d-l(U_i)}}$. With nominal abuse of notation, we will use $|\tilde{\G}_k^i|$ to denote the overall cost of the all the $k$-families in $\tilde{\G}_k^i$.

Given the collection $\tilde{\G}^i_k$, its successor $\tilde{\G}_k^{i+1}$ will be determined incrementally, by including the proper counterparts of the three classes of the $k$-families $\W=(W_1,\ldots,W_k)$ that arise in the proof of Corollary \ref{Corol:Key}.


\begin{enumerate}
	\item A $k$-family $\W=(W_1,\ldots,W_k)$ in $\F^{i+1}_k$ of the first type if it descends from such a $k$-family $\U=(U_1,\ldots,U_k)$ in $\tilde{\G}_k^i$, and is $\psi$-crossed.
Notice that the proof of Lemma \ref{Lemma:KeyLemma} yields an algorithm for finding, in constant time, a collection of $O\left(
t^{k-1/d+\eta/5}\right)$ $k$-families\footnote{Here the constant of proportionality does not depend on $t$.} $\W=(W_1,\ldots,W_k)$ that derive from $\U$ and are $\psi$-crossed. (In particular, these include all such $k$-families $\W$ that derive from $\U$ and are not $\psi$-homogeneous.)
Indeed, one can test whether a $k$-family $(W_1,\ldots,W_k)$, that derives from $\U$, is $\psi$-crossed
by testing the emptiness of the  semi-algebraic set

$$
\left\{(a_1,\ldots,a_k,a'_1,\ldots,a'_k)\in \tilde{A}(\W)\times \tilde{A}(\W)\Biggr| \psi\left(a_1,\ldots,a_k\right)\neq \psi\left(a'_1,\ldots,a'_k\right)\right\},
$$

\noindent with $\tilde{A}(\W):=A(W_1)\times\ldots\times A(W_k)$.
 Since each set $A(W_j)$ has constant description complexity, which only depends on $d,k,\Delta,s$ and $\delta$, this can be done in $O(1)$ time using the algorithm of Theorem \ref{Theorem:ComplexityCell}. 
 
 Repeating this for each $k$-family $\U$ in $\tilde{\G}^i_k$, and each $k$-family $\W$ that derives from $\U$, one can  compute all the $|\tilde{\G}_k^i|\cdot O\left(t^{k-1/d+\eta/5}\right)$ $\psi$-crossed $k$-families of the first type in overall time $O\left(\left|\tilde{\G}_k^i\right|\right)$ (where the time bound omits constant factors that involve $t$).

\item We say that a $k$-family $\W=(W_1,\ldots,W_k)\in \F^{i+1}_k$ is of type 2 if it descends from a counterpart $\U=(U_1,\ldots,U_k)$ in $\F_k^i$, so that $W_j\subseteq U_j$ for all $1\leq j\leq k$, yet at least one of the sets $W_j$ satisfies $l(W_j)<l(U_j)$.

Since cardinality of the $t$-refinement $\Lambda_t(U)$ of each set $U\in \Pi^i$ is $O\left(t^{1+\eta/(10k)}\right)=O(1)$, all the $k$-families of type 2 can be computed from $\tilde{\G}_k^i$ in $O\left(|\tilde{\G}_{k}^i|\right)$ time.
Furthermore, their amortized cost is easily checked to be $O\left(t^{k-1+\eta/10}\right)\left|\tilde{\G}_k^i\right|$.

\item Lastly, a family $\W$ is of type 3 if some pair of its sets $W_{j},W_{j'}$, for $j\neq j'$, belong to the $t$-refinement $\Lambda_t(U)$ of the same set $U\in \Pi^i$.
Since the overall cardinality of the $i$-th partition $\Pi^i$ is $O\left(t^{i(1+\eta/(10k))}\right)$, the total cost of such families is $O\left(t^{1+i(k-1)(1+\eta/(10k))}\right)$, and they can be constructed in similar time using $\Pi^i$. 

\end{enumerate}

It, therefore, follows that the total amortized time that is spent to compute every subsequent collection $\G^{i+1}_k$ from its predecessor $\tilde{\G}^i_k$ is $ct^{k-1/d+\eta/5}\cdot |\G^i_k|+O\left(t^{1+i(k-1)(1+\eta/(10k))}\right)$, which also yields the bound

$$
\left|\tilde{\G}^{i+1}_k\right|\leq ct^{k-1/d+\eta/5}\cdot |\tilde{\G}_k^i|+O\left(t^{1+i(k-1)(1+\eta/(10k))}\right),
$$ 

\noindent on the cardinality of $\G^{i+1}_k$ and guarantees that (\ref{Eq:IrregularityStep}) indeed holds for all $1\leq i\leq I-1$.
Since we have that $I=\log(1/\eps^{d+2\delta'})+i_0$, a simple recursive analysis, similar to the one that culminates the proof of Theorem \ref{Theorem:WeaklyConstruct}, implies that the $I$-th collection $\G^I_k$ can be computed, using the intermediate partitions $\Pi^i_k$ and the auxiliary $t$-refinement trees $\Lambda_t(U)$, in overall time $O\left((1/\eps)^{dk-1+\delta}\right)$. $\Box$

\subsection{Proof of Theorem \ref{Theorem:NewDensity}} \label{Subsec:Density}

The proof of Theorems \ref{Theorem:Weakly} and \ref{Theorem:WeaklyConstruct} instantly yields the following $k$-partite formulation. 

\begin{theorem}\label{Theorem:Colored} The following statement holds for all $0<\delta\leq 1$.

Let $P_1,\ldots,P_k$ be point sets in $\reals^d$ and $0<\eps\leq 1$. Then one can construct in overall $O\left(\sum_{j=1}^k|P_j|\log (1/\eps)\right)$ time, partitions $\Pi_1,\ldots,\Pi_k$ of, respectively, $P_1,\ldots,P_k$, into $O(1/\eps^{d+\delta})$ subsets $U\in \Pi_j$ whose cardinalities satisfy $|U|\leq \max\{1,\eps^{d}|P_j|\}$, and so that the inequality 

\begin{equation}
	\sum_{(U_1,\ldots,U_k)\in \tilde{\G}_\psi}|U_1|\cdot\ldots\cdot |U_k|\leq \eps|P_1|\times\ldots\times |P_k|
\end{equation}

\noindent holds for all $\psi\in \Psi_{d,k,\Delta,s}$. Here $\tilde{\G}_\psi$ denotes a certain collection of $k$-families $(U_1,\ldots,U_k)\in \Pi_1\times\ldots\times \Pi_k$, including all such $k$-families that are not $\psi$-homogeneous. 

Furthermore, the construction yields a data structures that, given a relation $\psi\in \Psi_{d,k,\Delta,s}$, returns such a collection $\tilde{\G}_\psi$ in additional time $O\left(1/\eps^{dk-1+\delta}\right)$.
\end{theorem}

\begin{proof}[Proof sketch.]
We construct the partition $\Pi_j=\Pi(P_j,\eps')$ of Section \ref{Sec:Main} for each set $P_j$, with a slightly smaller parameter $\eps'=\Theta(\eps)$.
That is, each partition $\Pi_j$ is attained through a sequence $\Pi^j_0,\ldots,\Pi^I_j=\Pi_j$ of $I=\log_t 1/\eps^{d+\delta'}+i_0$ progressively refined subdivisions which have been described in the beginning of Section \ref{Sec:Main}. To this end, we set $\Pi^0_j=\{P_j\}$ for $|P^0_j|>n_0$, and otherwise $\Pi^0_j$ consists of at most $n_0$ singletons. (In the latter case, we have $\Pi^0_j=\ldots=\Pi^I_j=\{\{p\}\mid p\in P_j\}$.)

Given a relation $\psi\in \Psi_{d,k,\Delta,s}$, we construct the family $\tilde{\G}_\psi$ by slightly modifying the ``$\psi$-irregularity" analysis of Section \ref{Subsec:Construct}. To this end, we adopt a suitably adjusted families $\F_k^i=\Pi^i_1\times \ldots\times \Pi^i_k$ and $\tilde{\G}_k^i$; the latter family includes all the ``$\psi$-irregular" families $(U_1,\ldots,U_k)\in \tilde{\G}_{k}^i$.
More precisely, we have that $\G_k^0=\F^0_k$, and every subsequent collection $\tilde{\G}_k^{i+1}$ consists of the following two categories of $k$-families within $\F^0_k$:
 
 \begin{enumerate}
 	\item  The $\psi$-crossed families $\W=(W_1,\ldots,W_k)$ that derive from the $k$-families $\U=(U_1,\ldots,U_k)\in \tilde{\G}_k^{i}$, so that $W_1\subseteq U_1,\ldots,W_k\subseteq U_k$.
 	\item Such families $\W=(W_1,\ldots,W_k)$ that ``descend" from some $k$-family $\U\in \tilde{\G}_k^i$, so that $l(W_j)<l(U_j)$ holds for some $1\leq j\leq k$.
 \end{enumerate}

 The proof of Theorem \ref{Theorem:WeaklyConstruct} implies that in each step the quantity 
 
 $$
\rho_i=\sum_{(U_1,\ldots,U_k)\in \tilde{\G}_k^i}|U_1|\cdot\ldots\cdot|U_k|
 $$
 
 \noindent decreases with each iteration by a factor of $\Omega(t^{1/d-\eta/5})$, as long as the collection $\tilde{\G}_k^i$ is not empty. (Notice that the analysis is made {\it simpler} by the fact that no ``colorful" $k$-family $\W$ in $\tilde{\G}_k^{i+1}$ falls into the third category: no $k$-family $\W\in \tilde{\G}_k^i$ includes two sets that result from the $t$-refinement $\Lambda_t(U)$ of the same set $U$ in $\Pi^i_j$, for $1\leq j\leq k$.) Hence, a suitable choice of $i_0$ guarantees that the last value $\rho_I$ falls below $\eps |P_1|\times\ldots\times |P_k|$.
 \end{proof}

Let us now invoke Theorem \ref{Theorem:Colored} with $\eps/3$ instead of $\eps$, and without explicitly computing the collection $\tilde{\G}_\psi$. 
Notice that the remaining $k$-families $\U=(U_1,\ldots,U_k)\in \Pi_1\times\ldots\times \Pi_k\setminus \tilde{\G}_{\psi}$ encompass a total of at least $(2\eps/3) |P_1|\cdot\ldots\cdot |P_k|$ $k$-tuples $(p_1,\ldots,p_k)\in P_1\times\ldots\times P_k$.
Our task comes down to finding such a $k$-family $\U=(U_1,\ldots,U_k)\in \Pi_1\times\ldots\times \Pi_k\setminus \tilde{\G}_{\psi}$ that meets the following criteria (with a suitable constant $c>0$):

\begin{enumerate}
	\item all $(p_1,\ldots,p_k)\in U_1\times\ldots\times U_k$ satisfy $\psi(p_1,\ldots,p_k)=1$,
	\item we have that $|U_j|\geq c\eps^{d+1+\delta} |P_j|$ for all $1\leq j\leq k$.
\end{enumerate}

In what follows, we refer to such $k$-families $\U$ as {\it good}.

\medskip
With a sufficiently small choice of $c>0$, the $k$-families $\U$ that violate the second condition, ``account" for at most $\eps |P_1|\cdot\ldots\cdot |P_k|/3$ among the $k$-tuples in $P_1\times\ldots\times P_k$. Hence, there must remain at least
 $(\eps/3)|P_1|\times\ldots\times |P_k|$ $k$-tuples $(p_1,\ldots,p_k)\in P_1\times\ldots\times P_k$ that come from the good $k$-families $\U
 \in \Pi_1\times\ldots\times \Pi_k\setminus \tilde{\G}_{\psi}$.

To find at least one good $k$-family $\U=(U_1,\ldots,U_k)$, we proceed through a series of random trials. In each round, we select a random sequence $(p_1,\ldots,p_k)$ of $k$ distinct points $p_j\in P$. If $\psi(p_1,\ldots,p_k)=1$, we use the auxiliary structures in the proof of Theorem \ref{Theorem:WeaklyConstruct} to determine the $k$-family $\U\in \Pi_1\times\ldots\times \Pi_k$ that contains $(p_1,\ldots,p_k)$, by ``tracing" each point $p_j$ in the respective partitions $\Pi^i_j$, for $0\leq j\leq I$, and testing if $\U$ is good. (If $|P_j|=O(1/\eps)$, then the partition $\Pi_j$ consists of singletons, and we randomly and uniformly select one of them.)

Note the $k$-family $\U=(U_1,\ldots,U_k)$ is belongs to $\tilde{\G}_\psi$ if and only if it can be reached through a sequence of ``colourful" $k$-families $\W^i=(W_1^i,\ldots,W^i_k)\in \tilde{\G}^i_k$, so that $U_j\subseteq W^i_j$ would hold for all $1\leq j\leq k$ and all $0\leq i\leq I$. Here $\tilde{\G}_k^i$ denotes the sub-collection of $\F_k^i=\Pi_1^i\times\ldots\times \Pi_k^i$ that was used in the proof of Theorem \ref{Theorem:Colored}. 
Hence, using the auxiliary semi-algebraic sets $A(W_j)$ as described in Section \ref{Subsec:Construct}, one can determine whether the sampled family $\U=(U_1,\ldots,U_k)$ belongs to $\Pi_1\times\ldots\times \Pi_k\setminus \tilde{\G}_\psi$ in additional time $O(I)=O\left(\log (1/\eps)\right)$. 

Since the algorithm is poised to succeed in expected $O(1/\eps)$ rounds upon finding a good $k$-family $\U$, it can can be implemented in expected $O\left(\sum_{j=1}^k\left(|P_j|+1/\eps\right)\log (1/\eps)\right)$ time. $\Box$

\section{Discussion}\label{Sec:Discuss}
\noindent{\bf A semi-algebraic same-type lemma.} One of the more studied semi-algebraic relations $\psi$ is determined by the {\it orientation $\chi(x_1,\ldots,x_{d+1})$} of a $(d+1)$-point sequence in $\reals^d$. Here we use the orientation function $\chi:\reals^{d\times (d+1)}\rightarrow \{+,-,0\}$, which is given by

\begin{equation*}
\chi(x_1,\ldots,x_{d+1})=\sign\:{\sf det} \begin{pmatrix}
    1 & 1 & \cdots & 1\\
	x_{1}(1) & x_{2}(1)& \cdots & x_{d+1}(1)\\
	x_{1}(2)& x_{2}(2)& \cdots & x_{d+1}(2)\\
	\vdots & \vdots & \vdots & \vdots\\
	x_{1}(d)& x_{2}(d)& \cdots & x_{d+1}(d)\\
\end{pmatrix},
\end{equation*}

\noindent to define the $(d+1)$-wise relation $\psi:\reals^{d\times(d+1)}\rightarrow \{0,1\}$ over $\reals^d$ that assumes value $1$ if and only if $\chi(x_1,\ldots,x_d)=+$.

 In 1998 B\'ar\'any and Valtr \cite{BaranyValtr} established the following Ramsey-type statement with respect to order types, which underlies recent progress on the Erd\H{o}s Happly End problem in the plane \cite{Happy2} and in 3-space \cite{HappyEndHigher}.

\begin{theorem}[Same Type Lemma \cite{BaranyValtr}]\label{Theorem:SameType}
 For any positive integers $d$ and $l$, there is $c_d(l)>0$ with the following property: {\it For any finite point sets $P_1,\ldots,P_l\subset \reals^d$, whose union is in general position, there exist pairwise disjoint subsets $U_1\subseteq P_1,\ldots,U_l\subseteq P_l$, each of cardinality
$|U_i|\geq c_d(l)|P_i|$, so that for any $(d+1)$-size sub-family $(U_{j_1},\ldots,U_{j_{d+1}})$ the value $\chi(p_1,\ldots,p_{d+1})$ is invariant in the choice of the representatives $p_i\in U_{j_i}$.} 
\end{theorem}

In other words, Theorem \ref{Theorem:SameType} yields subsets $U_i\subset P_i$ that encompass at least a fixed fraction of the points of $P_i$, for $1\leq i\leq l$, and furthermore, the {\it order-types} of the $k$-sequences $(p_1,\ldots,p_l)\in U_1\times\ldots\times U_l$ is invariant in the choice of the representatives $p_i\in U_i$. (That is, the convex hulls $\conv(U_i)$, for $1\leq i\leq l$, are in ``general position" -- no $d+1$ can be crossed by a single hyperplane.) 

A recent study by Bukh and Vasileuski \cite{BukhVasil} combined the polynomial partition of Guth and Katz \cite{GuthKatz} with the ``lifting" argument in $\reals^{dk}$ to re-establish Theorem \ref{Theorem:SameType} with $c_d(l)=\Theta_d\left(1/l^{d^2}\right)$, and the same asymptotic bound in $l$ was independently established by Rubin \cite{Rubin} in a more restricted setting $P_1=P_2=\ldots=P_l$, by the means of Matou\v{s}ek's simplicial partitions \cite{PartitionTrees}. Though the somewhat ad-hoc analysis of Bukh and Vasileuski does not immediately extend to arbitrary semi-algebraic relations, it largely underlies the proofs of the ``limited" almost-regularity statement of Theorem \ref{Theorem:WeaklySharp}, and the somewhat more intricate proof of Lemma \ref{Lemma:KeyLemma}.

An almost immediate consequence of  (the $k$-partite analogues of) Theorems \ref{Theorem:Weakly} and \ref{Theorem:WeaklySharp} is the following semi-algebraic generalization of the last lower bound on $c_d(l)$.

\begin{theorem}[Semi-Algebraic Same Type Lemma]\label{Theorem:SemiSameType}
For any positive integers $d$, $k$, $\Delta$, $s$, and any $\delta>0$, there is a constant $c=c(d,k,\Delta,s,\delta)$ with the following property:

For any $l\geq k$ sets $P_1,\ldots,P_l$ in $\reals^d$, and any semi-algebraic relation $\psi\in \Psi_{d,k,\Delta,s}$ one can choose $l$ subsets $U_1\subset P_1,\ldots,U_l\subset P_l$ whose respective cardinalities satisfy $|U_i|\geq cn/l^{(k-1)d+\delta}$, and so that all the $k$-families $(U_{i_1},\ldots,U_{i_k})$ are $\psi$-homogeneous, where $1\leq i_j\leq l$.

If the relation $\psi$ is sharp over $\bigcup_{i=1}^lP_i$, then the bound improves to $|U_i|\geq cn/l^{(k-1)d}$.
	
\end{theorem}

\begin{proof} 
Let us fix $\eps=c'/l^{k-1}$, with a suitably small constant $c$ that will be determined in the sequel.
Similar to Section \ref{Subsec:Density}, we construct a separate partition $\Pi_i$ of cardinality $r_i=\Theta\left(1/\eps^{d+\delta'}\right)$ for each set $P_i$, with $\delta'=\delta/(10k)$, so that each part $U\in \Pi_i$ encompasses at most $\max\{\eps^{d+\delta'}|P_i|,1\}$ elements of $P_i$. 
The argument in Section \ref{Subsec:Density} implies the following statement which is parallel to \cite[Theorem 5.2]{Rubin} and \cite[Lemma 9]{BukhVasil}: for any  $1\leq i_1<i_2<\ldots<i_k\leq l$, the partitions $\Pi_{i_1},\ldots,\Pi_{i_k}$ yield $O\left(\eps|\Pi_{i_1}|\cdot\ldots\cdot |\Pi_{i_k}|\right)$ $k$-families $\U=(U_{i_1},\ldots,U_{i_k})\in \Pi_{i_1}\times\ldots\times\Pi_{i_k}$ that are {\it not} $\psi$-homogeneous. 

We now repeat the argument of \cite[Section 3]{BukhVasil} in the new setting. To this end, let $\Pi'_i:=\{U\in \Pi_i\mid |U|\geq |P_i|/(4r_i)\}$, and notice that $|\bigcup \Pi'_{i}|\geq 3|P_i|/4$ holds for all $1\leq i\leq l$. 
Consider the $k$-uniform hypergraph $(V',E')$ over $V':=\biguplus_{i=1}^l\Pi'_i$ whose edge set describes all the $\psi$-irregular $k$-families $\U=(U_{i_1},\ldots,U_{i_k})\in \Pi'_{i_1}\times\ldots\times\Pi'_{i_k}$, for $i_1<i_2<\ldots<i_k$, and refer to a random and uniform choice $(U_1,\ldots,U_l)\in \Pi'_1\times\ldots\times \Pi'_l$. 

    For any fixed choice of indices $1\leq i_1<i_2<\ldots<i_k\leq l$, let $Y_{i_1,\ldots,i_k}$ denote the event that {\it the edge $(U_{i_1},\ldots,U_{i_k})$ belongs to $E'$}. Notice that $Y_{i_1,\ldots,i_k}$ clearly has probability $O(\eps)=O(c'/l^{k-1})$ (where the implicit constant of proportionality does not depend on $c'$), and depends on at most $k{l\choose k-1}=O(l^{k-1})$ other events $Y_{i'_1,\ldots,i'_k}$ of this kind. Hence, choosing a suitably small $c'>0$, and invoking Lov\'asz Local Lemma (see, e.g., in \cite[Corollary 5.1.2]{AlonBook}) yields with positive probability that none of the events $Y_{i_1,\ldots,i_k}$ occurs. Furthermore, since each subset $U_i\subseteq P_i$ is chosen from $\Pi'_i$, its cardinality satisfies $|U_i|\geq |P_i|/(4r_i)=\Omega\left(\eps^{d+\delta}|P_i|\right)=\Omega\left(1/l^{d(k-1)+\delta}|P_j|\right)$.
    
    The second statement, in which the relation $\psi$ is sharp over $\bigcup_{i=1}^l P_i$, is established in a similar manner, by applying the $O(1/\eps^d)$-size almost-regular partition in Section \ref{Sec:Generic} to each set $P_i$. To this end, we again use $\eps=c'/l^{k-1}$ with a sufficiently small constant $c'>0$, and distinguish between two possible scenarios for each set $P_i$. If $|P_i|\geq c''/\eps^d$, for 
    a suitably large constant $c''>0$, then we can omit from consideration a small subset of at most $|P_i|/10$ points that are left ``unclassified" by the partitioning polynomial $g_i$ of $P_i$ (along with at most $|P_i|/4$ points that belong to ``small" sets of $\Pi_i\setminus \Pi'_i$). Otherwise, the respective partition $\Pi_i$ is comprised of $O(1/\eps^d)$ singletons. 
    
    It, therefore, suffices to check that the probability of each event $Y_{i_1,i_2,\ldots,i_k}$ remains $O(\eps)$. Indeed, let us fix $1\leq i_1<\ldots<i_k\leq l$, and suppose that exactly $k'\leq k$ among the respective sets $P_{i_j}$ have cardinality $|P_{i_j}|<c''/\eps^d$. Assume with no loss of generality, that their labels are $i_1,\ldots,i_{k'}$. By fixing some $k'$ {\it singletons} $U_{i_1}=\{p_{1}\},\ldots,U_{i_{k'}}=\{p_{k'}\}$ within the respective partitions $\Pi_{i_1},\ldots,\Pi_{i_{k'}}$, and each time referring to the restricted, $(k-k')$-wise relation $\psi(p_1,\ldots,p_{k'},x_1,\ldots,x_{k-k'})$ in the variables $x_1,\ldots,x_{k-k'}\in \reals^{d}$, we readily conclude that all but $O(\eps|\Pi_{i_{k'+1}}|\times\ldots\times |\Pi_{i_k}|)$ of the $(k-k')$-families $(U_{i_{k'+1}},\ldots,U_{i_k})\in \Pi_{i_{k'+1}}\times\ldots\times \Pi_{i_k}$ are $\psi'$-homogeneous.
    (To adapt the analysis in Section \ref{Sec:Generic} to the present $k'$-partite setting, we consider the partition of $\reals^{d(k-k')}$ that is induced by the product $\prod_{j=i_{k'+1}}^k \tilde{g}_{i_j}$, where $\tilde{g}_i$ denotes the polynomial $g_i(x_{1,i},\ldots,x_{d,i})$.) Repeating this argument for all the possible choices of $U_{i_1},\ldots,U_{i_{k'}}$ again implies that only an $O(\eps)$ fraction of the $k$-families $(U_{i_1},\ldots,U_{i_k})\in \Pi'_{i_1}\times\ldots\Pi'_{i_k}$ are not $\psi$-homogeneous. 
    \end{proof}

\noindent{\bf Recent progress.} After an earlier version of this article had been made public, the bound in Theorem \ref{Theorem:Weakly} was slightly improved by Tidor and Yu, who reported a almost regular partitions of cardinality $C/\eps^d$, with a more explicit constant of proportionality $C=O_{d,k}\left((s\Delta)^d\right)$. Specializing to sharp relations $\psi$, this overly matches the bound in Theorem \ref{Theorem:WeaklySharp}. In addition, they also presented a matching lower bound construction on the cardinality of almost-regular partitions. 

Similar to the our analysis in Sections \ref{Sec:Prelim} and \ref{Sec:Main}, Tidor and Yu combine the ``lifting" argument in $\reals^{dk}$ with the Barone-Basu bound \cite{BaroneBasu} (Theorem \ref{Theorem:ZonePolynomial}).
However, in contrast to our partition in Section \ref{Sec:Main}, which is obtained through a hierarchy of constant-size partitions $\reals^d\setminus Z(g)$ (each induced by the zero set of a {\it fixed-degree} $r$-partitioning polynomial $g$ given in Theorem \ref{Thm:PolynomialPartition}), the partition of Tidor and Yu builds on a rather more intricate algebraic machinery of ``multi-level partitions", and is not accompanied by a near-linear construction algorithm. 
Unfortunately, the shear cost of manipulating polynomials of super-constant degrees (let alone traversing an arrangement of their zero sets)
seems to pose a veritable challenge to efficient implementation of either the partition of Tidor and Yu, or even the simpler partition in Theorem \ref{Theorem:WeaklySharp}: as was previously mentioned, the best known algorithm for constructing an $r$-partitioning polynomial runs in expected time $O(rn+r^3)$ \cite{AgMaSa}.


\end{document}